\documentclass[11pt]{scrartcl}
\usepackage[english]{babel}
\usepackage[T1]{fontenc}
\usepackage[utf8]{inputenc}
\usepackage{float}
\usepackage{color,amssymb,amsfonts,amsthm,mathrsfs,graphicx,setspace,amsmath}
\usepackage{fancyhdr}
\usepackage{caption}
\usepackage{subcaption}
\usepackage[left=3cm,right=3cm,top=2.5cm,bottom=3cm]{geometry}
\usepackage{mathtools}
\usepackage{bm}
\usepackage{xcolor}
\usepackage{tikz}
\usepackage{newunicodechar}

\title{Directed Ear Anonymity}
\author{Marcelo Garlet Milani}
\date{\today}

\usepackage[pdfdisplaydoctitle,menucolor=orange!40!black,filecolor=magenta!40!black,urlcolor=green,linkcolor=red!40!black,citecolor=green!40!black,colorlinks]{hyperref}
\makeatletter
\hypersetup{pdftitle={\@title}, pdfauthor={\@author}}
\makeatother

\usepackage{nameref}
\usepackage[capitalise]{cleveref}

\usepackage[square, numbers, comma, sort&compress]{natbib}  \usepackage{enumitem}
\usepackage{dsfont}
\usepackage{soul}
\usepackage{xspace}
\usepackage[prependcaption,colorinlistoftodos]{todonotes}
\usepackage{algorithm}
\usepackage[noend]{algpseudocode}
\usepackage{environ}

\usepackage{colorpalette,appendix-tools,extra-cref,extra-enumitem,extra-environ,extra-theorems,pictikzgraph,tasks,graphs,algorithms,computational-problems,basic-math,unicode}
\appendixfalse

\hypersetup{
	colorlinks=true,
	linkcolor=darkblue,
	citecolor=darkblue,
	urlcolor=myLightBlue,
	bookmarksopen=true,
	bookmarksnumbered,
	bookmarksopenlevel=2,
	bookmarksdepth=3
}

\usetikzlibrary[positioning]
\usetikzlibrary{patterns}
\usetikzlibrary{shapes}
\usetikzlibrary{decorations.markings}
\usetikzlibrary{arrows}
\usetikzlibrary{calc}

\setlength{\parskip}{2pt}

\begin{document}
\maketitle

\begin{abstract}
	We define and study a new structural parameter for directed graphs, which we call
	\emph{ear anonymity}.
	Our parameter aims to generalize the useful properties of \emph{funnels} to larger digraph classes.
	In particular, funnels are exactly the acyclic digraphs with ear anonymity one.
	We prove that computing the ear anonymity of a digraph is \NP/-hard and
	that it can be solved in $\Bo(m(n + m))$-time on acyclic digraphs (where \(n\) is the number of vertices and \(m\) is the number of arcs in the input digraph).
	It remains open where exactly in the polynomial hierarchy the problem of computing ear anonymity lies,
	however for a related problem we manage to show $\Sigma_2^p$-completeness.
\end{abstract}

\section{Introduction}
One approach for handling computationally hard problems is to design algorithms which
are efficient if certain structural parameters of the input are small.
In undirected graphs, \emph{width} parameters such as \emph{treewidth} \cite{RobertsonS86,Bodlaender07,courcelle1990monadic} and \emph{cliquewidth} \cite{CourcelleO00}
are very effective in handling a number of problems (see also \cite{downey2013fundamentals}).

Width parameters for directed graphs, however, seem to be less powerful \cite{GanianHK0ORS16,GanianHKLOR14}.
While \emph{directed treewidth} helps when solving \kLinkage/, where the task is to connect terminal pairs by disjoint paths, \cite{johnson2001directed}, the algorithm
has a running time of the form $\Bo(n^{f(k, dtw)})$, where $k$ is the number of terminals and $dtw$ is the directed treewidth of the input digraph.
At the same time, there is no $f(k)n^{g(dtw)}$-time algorithm for \kLinkage/ \cite{slivkins2010parameterized} under standard assumptions, and
many further problems remain hard even if the directed treewidth of the input is a constant \cite{GanianHKLOR14}.

One of the shortcomings of directed treewidth is that it cannot explain the structural complexity of acyclic digraphs,
as those digraphs have directed treewidth zero.
Indeed, the digraph constructed in the hardness reduction for \kLinkage/ provided by \cite{slivkins2010parameterized} is acyclic.
Since fundamental problems like \kLinkage/ remain \NP/-hard even if the input digraph is acyclic, it is natural to search for additional parameters which may help in the study of the structural of digraphs and also of acyclic digraphs.

Funnels are an algorithmically useful subclass of acyclic digraphs \cite{milani2020efficient}.
For example, it is easy to solve \kLinkage/ in polynomial time on funnels.
Further, \FADSlong/, the problem of deleting at most $k$ arcs from a digraph in order to obtain a funnel, admits a polynomial kernel \cite{milani2022polynomial}.

Funnels have found application in \emph{RNA assembly} \cite{KhanKCWT22},
modeling a class of digraph on which \ProblemName{Flow Decomposition}
is easy to solve.
Additionally, \cite{Caceres23} considers two generalization of funnels,
namely \(k\)-\emph{funnels} and
a class called \(\mathcal{ST}_k\), and
then shows that 
\ProblemName{String Matching to Labeled Graph} can be solved 
more efficiently on \(k\)-funnels and in digraphs of the class \(\mathcal{ST}_k\)
if \(k\) is small.

In this work, we generalize the properties of funnels by defining a parameter called \emph{ear anonymity}.
This parameter is defined in such a way that funnels are exactly the acyclic digraphs with ear anonymity one.
We show that, while computing the ear anonymity of a digraph is \NP/-hard in general,
it can be computed in $\Bo(m(n + m))$-time if the input digraph is acyclic.

We define ear anonymity together with three relevant computational problems in \cref{sec:def}.
In \cref{sec:ear-anonymity-on-dags} we prove one of our main results, providing a polynomial-time algorithm for \DPA/ on acyclic digraphs.
In \cref{sec:eis-is-np-hard}, we show that all computational problems defined here regarding ear anonymity are \NP/-hard in the general setting.
Further, in \cref{sec:sigma-2-p-hardness-pathpa} we show another of our main results, namely that one of these problems is even $\Sigma_2^p$-complete, a class which is ``above'' \NP/ in the polynomial hierarchy.
To achieve this, we define two additional computational problems which we use to help us construct our reduction, proving that each of them is also $\Sigma_2^p$-hard.

Since the literature on hardness results on higher levels of the polynomial hierarchy is not as rich as for \NP/-hardness results,
we consider the techniques used in \cref{sec:sigma-2-p-hardness-pathpa} to be of independent interest and to be potentially useful in showing that further problems on digraphs are $\Sigma_2^p$-complete.
In particular, the auxiliary problems considered are related to finding linkages in directed graphs, a fundamental problem often used in \NP/-hardness reductions.
To the best of our knowledge (see \cite{schaefer2002completeness} for a survey on related hardness results), none of the hard problems in the polynomial hierarchy ``above'' \NP/ studied so far are related to linkages on digraphs.

Finally, we provide some concluding remarks and discuss future work in \cref{sec:ear-anonymity:remarks}.
\ifappendix
Due to space constrains, proofs marked with (\appsymb{}) are deferred to the appendix.
\fi

\section{Preliminaries}

A \emph{directed graph}, or \emph{digraph}, is a tuple $D \coloneqq \Brace{V, E}$ where $V$ is the \emph{vertex set} and $E \subseteq \Set{(v,u) \mid v,u \in V \text{ and } v \neq u}$ is the \emph{arc set}.
We write $\V{D}$ for the set $V$ and $\A{D}$ for the set $E$.

The inneighbors of a vertex $v$ in a digraph $D$ are denoted by $\InN[D]{v} = \{ u \in V \mid (u,v) \in E\}$; its outneighbors are given by $\OutN[D]{v} = \{ u \in V \mid (v,u) \in E\}$.
The indegree of $v$ is written as $\In[D]{v} = \Abs{\InN[D]{v}}$, and its outdegree as $\Out[D]{v} = \Abs{\OutN[D]{v}}$.
A vertex \(v\) is a \emph{source} if \(\In[D]{v} = 0\) and
a \emph{sink} if \(\Out[D]{v} = 0\).
We omit the index \(D\) if the digraph is clear from the context.

We extend the definition of set operators for digraphs.
 Let~$G = (V,E), H = (U, F)$ be two digraphs.
We define
\begin{align*}
	H \subseteq G & \Leftrightarrow U \subseteq V \text{ and } F \subseteq E,\\
	H \cup G & = (U \cup V, F \cup E),\\
	H \cap G & = (U \cap V, F \cap E),\\
	H \setminus G & = H - \V{G}.
\end{align*}
In particular, we say that~$H$ is a \emph{subgraph} of~$G$ if $H \subseteq G$.

A \emph{walk} of length $\ell$ in $D$ is a vertex sequence $W \coloneqq \Brace{v_0, v_1, \dots, v_{\ell}}$ such that $\Brace{v_i, v_{i+1}} \subseteq \A{D}$ holds for all $0 \leq i < \ell$.
We say that $W$ is a $v_0$-$v_\ell$-walk and
write $\Start{W}$ for $v_0$ and $\End{W}$ for $v_\ell$.

A walk $W$ is said to be a $v_0$-$v_\ell$-\emph{path} if no vertex appears twice along the walk;
$W$ is a cycle if $v_0 = v_\ell$ and $(v_0, v_1, \dots, v_{\ell - 1})$ is a path and $\ell \geq 2$;
further, $W$ is a \emph{directed ear} if it is either a path or a cycle.
Finally, $D$ is \emph{acyclic} if it does not contain any cycles.

Given two walks $W_1 = (x_1, x_2, \dots, x_{j}), W_2 = (y_1, y_2, \dots, y_{k})$ with $\End{W_1} = \Start{W_2}$, we make use of the concatenation notation for sequences and write $W_1 \cdot W_2$ for the walk $W_3 \coloneqq (x_1, x_2, \dots, x_{j}, y_2,\linebreak y_3, \dots, y_{k})$.
If $W_1$ or $W_2$ is an empty sequence, then the result of $W_1 \cdot W_2$ is the other walk (or the empty sequence if both walks are empty).

Let $P$ be a path and $X$ a set of vertices with $\V{P} \cap X \neq \emptyset$.
We consider the vertices $p_1,\dots,p_m$ of $P$ ordered by their occurrence on $P$.
Let $i$ be the highest index such that $p_i \in X$, we call $p_i$ the \emph{last vertex of $P$ in $X$}.
Similarly, for the smallest index $j$ with $p_j \in X$ we call $p_j$ the \emph{first vertex of $P$ in $X$}.

In digraphs, the vertices which can be reached from a vertex~$v$ are given by~$\OutR{v}$.
The vertices which can reach~$v$ are given by~$\InR{v}$.
That is~$u \in \InR{v}$ if and only if there is a \Patht{u}{v}, and~$u \in \OutR{v}$ if and only if there is a \Patht{v}{u}.

Given a digraph $D$ and an arc $(v,u) \in \A{D}$, we say that $(v,u)$ is \emph{butterfly contractible} if $\Out{v} = 1$ or $\In{u} = 1$.
The \emph{butterfly contraction} of $(v,u)$ is the operation which consists of removing $v,u$ from $D$, then adding a new vertex $vu$, together with the arcs $\Set{(w, vu) \mid w \in \InN[D]{v} \setminus \{u\}}$ and $\Set{(vu,w) \mid w \in \OutN[D]{u} \setminus \{v\}}$.
Note that, by definition of digraph, we \emph{remove} duplicated arcs and arcs of the form $(w,w)$.
If there is a subgraph $D'$ of $D$ such that we can construct another digraph $H$ from $D'$ by means of butterfly contractions, then we say that $H$ is a \emph{butterfly minor of $D$}, or that \emph{$D$ contains $H$ as a butterfly minor}.

A \emph{subdivision} of an arc $(v,u)$ is the operation of replacing $(v,u)$ by a path $v,w,u$.
We say that digraph $H$ is a \emph{topological minor} of a digraph $D$ if some subdivision $H'$ of $H$, obtained by iteratively subdiving arcs, is isomorphic to some subgraph of $D$. 
See \cite{diestel2017,Gutin-Digraphs-2008} for further information on digraphs.

\begin{definition}[\cite{milani2020efficient}]
	\label{def:funnel}
	A digraph~$D$ is a \emph{funnel} if~$D$ is a DAG and for every path~$P$ from a source to a sink of~$D$ of length at least one there is some arc $a \in A(P)$ such that for any different path~$Q$ from a (possibly different) source to a (possibly different) sink we have $a \not\in A(Q)$.
\end{definition}

Given two sets \(A, B\)	of vertices in a digraph \(D\),
we say that a set of pairwise vertex-disjoint paths \(\mathcal{L}\) is a \emph{linkage} from \(A\) to \(B\) 
if all paths in \(\mathcal{L}\)	start in \(A\) and end in \(B\).

\begin{definition}
	\label{def:sigma-i-p}
	Let $i \geq 1$.
	A language $L$ over an alphabet $\Gamma$ is in $\Sigma_i^p$ if there exists a polynomial $p : \N \rightarrow \N$ and a polynomial-time Turing Machine $M$ such that for every word $x \in \Gamma^*$,
	\begin{align*}
		x \in L \Leftrightarrow \exists u_1 \in \Gamma^{p(\Abs{x})}\, \forall u_2 \in \Gamma^{p(\Abs{x})} \ldots Q_i u_i \in \Gamma^{p(\Abs{x})}\, M(x,u_1, \ldots, u_i) = 1,
	\end{align*}
	where $Q_i$ denotes $\forall$ or $\exists$ depending on whether $i$ is even or odd, respectively.

	The \emph{polynomial hierarchy} is the set $\PH/ = \bigcup_i \Sigma_i^p$.
\end{definition}

It is easy to verify that $\Sigma_1^p = \NP/$.
We also define, for every $i \geq 1$, the class $\Pi_i^p = \textsf{co}\Sigma_i^p = \Set{\overline{L} \mid L \in \Sigma_i^p}$.
The concept of reductions and hardness can be defined in a similar way as for \NP/-completeness.
See \cite{arora2009computational} for further information on computational complexity.

\section{Definition of Ear Anonymity}
\label{sec:def}

Funnels are characterized by how easy it is to uniquely identify a maximal path\footnote{Maximal with respect to the subgraph relation.}: it suffices to take the private arc of the path.
In acyclic digraphs, the maximal paths correspond exactly to the paths which start in a source and end in a sink.
In general, this does not have to be case.
Indeed, a cycle contains several distinct maximal paths, all of them overlapping.
Hence, it is natural that, in general digraphs, we consider not only how to identify maximal paths, but also cycles, leading us to the well-known concept of ears.

We then come to the question of how to uniquely identify a maximal ear in a digraph.
Clearly, a single arc does not always suffice, as it can be in several ears.
If we take a set of arcs, ignoring their order on the ear, then some rather simple digraphs will require a large number of arcs to uniquely identify an ear, for example the digraph in \cref{fig:ear-identifying-sequence-vs-set}.
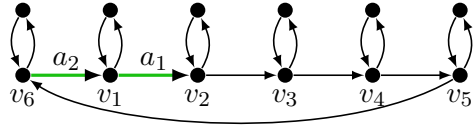
\begin{figure}[H]
	\centering
			\centering
	
		\begin{tikzpicture}[yscale=0.75]
		\node[circle, label = below:{{$v_1$}}, line width = 0.599, fill = black]
	(n0) at (1.15, 0){};
\node[circle, line width = 0.599, fill = black]
	(n4) at (1.15, 1.15){};
\node[circle, line width = 0.599, fill = black]
	(n5) at (4.6, 1.15){};
\node[circle, line width = 0.599, fill = black]
	(n6) at (5.75, 1.15){};
\node[circle, line width = 0.599, fill = black]
	(n7) at (3.45, 1.15){};
\node[circle, line width = 0.599, fill = black]
	(n8) at (2.3, 1.15){};
\node[circle, line width = 0.599, fill = black]
	(n9) at (0, 1.15){};
\node[circle, label = below:{{$v_2$}}, line width = 0.599, fill = black]
	(n10) at (2.3, 0){};
\node[circle, label = below:{{$v_3$}}, line width = 0.599, fill = black]
	(n11) at (3.45, 0){};
\node[circle, label = below:{{$v_4$}}, line width = 0.599, fill = black]
	(n1) at (4.6, 0){};
\node[circle, label = below:{{$v_5$}}, line width = 0.599, fill = black]
	(n2) at (5.75, 0){};
\node[circle, label = below:{{$v_6$}}, line width = 0.599, fill = black]
	(n3) at (0, 0){};
\path[-latex, line width = 1.2, draw = green]
	(n3) to (n0);
\path[-latex, line width = 1.2, draw = green]
	(n0) to (n10);
\path[-latex, line width = 0.599, draw = black]
	(n10) to (n11);
\path[-latex, line width = 0.599, draw = black]
	(n11) to (n1);
\path[-latex, line width = 0.599, draw = black]
	(n1) to (n2);
\path[-latex, line width = 0.599, draw = black]
	(n3) .. controls (0.212, 0.485) and (0.139, 0.72) .. (n9);
\path[-latex, line width = 0.599, draw = black]
	(n9) .. controls (-0.212, 0.665) and (-0.139, 0.43) .. (n3);
\path[-latex, line width = 0.599, draw = black]
	(n4) .. controls (0.938, 0.665) and (1.011, 0.43) .. (n0);
\path[-latex, line width = 0.599, draw = black]
	(n8) .. controls (2.088, 0.665) and (2.161, 0.43) .. (n10);
\path[-latex, line width = 0.599, draw = black]
	(n7) .. controls (3.238, 0.665) and (3.311, 0.43) .. (n11);
\path[-latex, line width = 0.599, draw = black]
	(n5) .. controls (4.388, 0.665) and (4.461, 0.43) .. (n1);
\path[-latex, line width = 0.599, draw = black]
	(n6) .. controls (5.538, 0.665) and (5.611, 0.43) .. (n2);
\path[-latex, line width = 0.599, draw = black]
	(n0) .. controls (1.362, 0.485) and (1.289, 0.72) .. (n4);
\path[-latex, line width = 0.599, draw = black]
	(n10) .. controls (2.512, 0.485) and (2.439, 0.72) .. (n8);
\path[-latex, line width = 0.599, draw = black]
	(n11) .. controls (3.662, 0.485) and (3.589, 0.72) .. (n7);
\path[-latex, line width = 0.599, draw = black]
	(n1) .. controls (4.812, 0.485) and (4.739, 0.72) .. (n5);
\path[-latex, line width = 0.599, draw = black]
	(n2) .. controls (5.962, 0.485) and (5.889, 0.72) .. (n6);
\path[-latex, line width = 0.599, draw = black]
	(n2) .. controls (4.151, -1.254) and (1.01, -0.917) .. (n3);

		\path[]
			(n3) to node[above] {{$a_2$}} (n0);
		\path[]
			(n0) to node[above] {{$a_1$}} (n10);
	\end{tikzpicture}
		\vspace{-0.25cm}
				\caption{For any subset of at most 5 arcs of the cycle $\Brace{v_1, v_2, v_3, v_4, v_5, v_6, v_1}$ we can find some path visiting such arcs which is distinct from the cycle considered.}
		\label{fig:ear-identifying-sequence-vs-set}
	\end{figure}
Hence, we consider not only the arcs of the ear, but also their order along the ear.
We also require the existence of at least one arc in the identifying sequence in order to ensure the parameter is closed under the subgraph relation.
\begin{definition}
	\label{def:identifying-sequence}
	Let $P$ be an ear.
	A sequence $\Brace{a_1, a_2, \dots, a_{k}}$ of arcs of $P$
	is an \emph{identifying sequence for} $P$
	if \(k \geq 1\) and
	every ear $Q$ containing
	$\Brace{a_1, a_2, \dots, a_{k}}$ in this order
	is a subgraph of $P$.
\end{definition}
Note that the cycle $\Brace{v_1, v_2, v_3, v_4, v_5, v_6, v_1}$ in \cref{fig:ear-identifying-sequence-vs-set} admits an identifying sequence of length two, namely $\bar{a} = \Brace{a_1, a_2}$ where $a_1 = \Brace{v_1, v_2}$ and $a_2 = \Brace{v_6, v_1}$.
We also note that using vertices instead of arcs does not lead to a well-defined parameter, as can be observed in the example given in \cref{fig:arc-sequence-vs-vertex-sequence}.
As every ear can be uniquely described by ordering its entire arc-set according to its occurrence along the ear, the parameter defined above is well-defined for all ears.

\begin{figure}[H]
	\centering
			
		\begin{tikzpicture}[yscale=0.6]
		\node[circle, label = left:{{$v_1$}}, line width = 0.9, fill = black]
	(v1) at (0, 0){};
\node[circle, label = right:{{$v_3$}}, line width = 0.9, fill = black]
	(v2) at (2.3, 0){};
\node[circle, label = above:{{$v_2$}}, line width = 0.9, fill = black]
	(v3) at (1.15, 1.15){};
\path[-latex, line width = 0.9, draw = black]
	(v1) to (v2);
\path[-latex, line width = 0.9, draw = black]
	(v1) to (v3);
\path[-latex, line width = 0.9, draw = black]
	(v3) to (v2);

	\end{tikzpicture}		
			\caption{A digraph with two maximal ears.
		While the ear $\Brace{v_1, v_2, v_3}$ contains all vertices of the ear $\Brace{v_1, v_3}$, both ears admit an identifying sequence of length 1.
	}
	\label{fig:arc-sequence-vs-vertex-sequence}
	\end{figure}
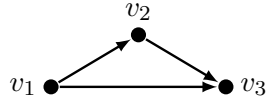

Intuitively, the shorter the identifying sequence of an ear, the less information is necessary in order to uniquely identify or find such an ear.
This leads to the following definition (we ignore maximal ears consisting of a single vertex, as they are seldom interesting and can be found in linear time).
\begin{definition}
	\label{def:ear-ear-anonymity}
	Let $P = \Brace{a_1, a_2, \ldots, a_{k}}$ be a maximal ear in a digraph $D$, given by its arc-sequence (in the case of a cycle, any arc of $P$ can be chosen as $a_1$).
	The \emph{ear anonymity} of $P$ in $D$, denoted by $\Anon[D](P)$, is the length of the shortest identifying sequence for $P$.
	If $k = 0$, we say that $\Anon[D](P) = 0$.
\end{definition}

As we are often interested in the worst-case running time of an algorithm, if some ear of a digraph has high anonymity, then this digraph could be a difficult instance in the worst case.
\begin{definition}
	\label{def:digraph-ear-anonymity}
	The ear anonymity of a digraph $D$, denoted by $\Anon(D)$, is the maximum ear anonymity of the maximal ears of $D$.	
\end{definition}

It is a simple exercise to compare \cref{def:digraph-ear-anonymity} and
\cref{def:funnel} to verify the following observation.
\begin{observation}
	\label{obs:funnel-path-anonymity-one}
	An acyclic digraph $D$ is a funnel if and only if $\Anon(D) \leq 1$.
\end{observation}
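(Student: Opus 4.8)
The plan is to exploit that in an acyclic digraph the maximal ears are exactly the paths from a source to a sink, and then to read the funnel condition off the identifying-sequence condition almost verbatim. First I would record two easy consequences of acyclicity for a DAG $D$. \textbf{(i)} A subgraph of $D$ is a maximal ear if and only if it is a path from a source to a sink: since $D$ is acyclic, every ear of $D$ is a path, and a path $(v_0,\dots,v_\ell)$ is extendable at $v_0$ exactly when $v_0$ has an inneighbor --- and in a DAG such an inneighbor is never already on the path, since that would close a cycle --- so a path is maximal iff $v_0$ is a source and $v_\ell$ is a sink. \textbf{(ii)} A source-to-sink path of $D$ is never a proper subgraph of another source-to-sink path: if $Q\subseteq P$ with $P$ and $Q$ both running from a source to a sink, then $Q$, being a path contained in the path $P$, is a subpath of $P$; if we had $\Start{Q}\neq\Start{P}$ then the $P$-predecessor of $\Start{Q}$ would be an inneighbor of $\Start{Q}$, contradicting that $\Start{Q}$ is a source, so $\Start{Q}=\Start{P}$, and symmetrically $\End{Q}=\End{P}$, whence $Q=P$. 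I would also use the standard fact that in a DAG every path extends, by greedily following inneighbors backwards to a source and outneighbors forwards to a sink, to a path from a source to a sink (acyclicity makes this terminate without ever revisiting a vertex).

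For ``$D$ is a funnel $\Rightarrow \Anon(D)\le 1$'' I would take an arbitrary maximal ear $P$ of $D$; by (i) it is a source-to-sink path. If $P$ has length $0$, then $\Anon[D](P)=0$. Otherwise \cref{def:funnel} gives an arc $a$ of $P$ lying on no source-to-sink path other than $P$, and I claim $(a)$ is an identifying sequence for $P$: given any ear $Q$ through $a$, acyclicity makes $Q$ a path, which extends to a source-to-sink path $Q'\supseteq Q$ still containing $a$; the funnel property forces $Q'=P$, hence $Q\subseteq Q'=P$. So $\Anon[D](P)\le 1$ for every maximal ear $P$, i.e.\ $\Anon(D)\le 1$.

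For the converse I would assume $D$ is acyclic with $\Anon(D)\le 1$ and take any path $P$ from a source to a sink with length at least one. By (i), $P$ is a maximal ear, so $\Anon[D](P)\le 1$; since $P$ carries an arc, this value is not $0$, so $P$ has an identifying sequence of length one, say $(a)$. For any path $Q$ from a (possibly different) source to a (possibly different) sink with $a\in\A{Q}$, $Q$ is an ear containing $a$, so $Q\subseteq P$ by \cref{def:identifying-sequence}, and then $Q=P$ by (ii). Thus no source-to-sink path different from $P$ contains $a$, which is precisely the funnel condition for $P$; hence $D$ is a funnel.

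The whole argument is essentially bookkeeping, and the only places asking for (short) care are facts (i) and (ii) and the path-extension claim --- all three immediate from the absence of directed cycles --- so I do not expect a real obstacle here, consistent with the statement being phrased as an observation.
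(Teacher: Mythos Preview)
Your proof is correct. The paper does not actually give a proof of this observation; it merely remarks that it is ``a simple exercise to compare \cref{def:digraph-ear-anonymity} and \cref{def:funnel}'' and leaves the verification to the reader, so your direct comparison of the two definitions via facts (i) and (ii) is precisely the intended route.
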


	It is sometimes useful to know that a parameter is closed under certain operations.
	For ear anonymity, we can show the following.

		\begin{observation}
			\label{state:ear-anonymity-minor-closed}
		Let $D, H$ be digraphs such that $H$ is a butterfly minor or topological minor of $D$.
		Then $\Anon(H) \leq \Anon(D)$.
	\end{observation}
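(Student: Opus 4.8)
The plan is to handle the two kinds of minors separately, and within each case to show that an identifying sequence for a maximal ear $P$ of $H$ can be ``pulled back'' to an identifying sequence for a corresponding maximal ear of $D$ of no greater length. Since $\Anon(D)$ is the maximum of $\Anon_D(P')$ over maximal ears $P'$ of $D$, exhibiting for each maximal ear $P$ of $H$ a maximal ear $P'$ of $D$ with $\Anon_D(P') \geq \Anon_H(P)$ suffices; equivalently, taking a shortest identifying sequence for $P'$ and mapping it forward should give an identifying sequence for $P$ of the same length, which yields $\Anon_H(P) \leq \Anon_D(P')\le\Anon(D)$.

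First I would treat the topological minor case, which is the cleaner one. If $H$ is a topological minor of $D$, then a subdivision $H'$ of $H$ is isomorphic to a subgraph of $D$; each arc $a$ of $H$ corresponds to a path $\pi_a$ in $D$ (internally disjoint from the rest), and each ear $Q$ of $H$ corresponds to an ear $\widehat{Q}$ of $D$ obtained by replacing every arc by its path. Given a maximal ear $P$ of $H$, the ear $\widehat{P}$ of $D$ is maximal (any extension in $D$ would, after contracting subdivision vertices, extend $P$), so it suffices to take a shortest identifying sequence $(a_1,\dots,a_k)$ for $P$ and replace each $a_i$ by, say, the first arc of the path $\pi_{a_i}$ in $D$; the resulting sequence $(b_1,\dots,b_k)$ of arcs of $\widehat{P}$ is an identifying sequence for $\widehat{P}$, because any ear of $D$ through $b_1,\dots,b_k$ in order, once we ``unsubdivide'', becomes an ear of $H$ through $a_1,\dots,a_k$ in order, hence a subgraph of $P$, hence the original ear is a subgraph of $\widehat{P}$. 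One must be slightly careful that an ear through the $b_i$ actually routes along the subdivision paths; this is where choosing the arcs incident to degree-$2$ subdivision vertices is convenient, but it may be cleaner to argue that any ear visiting a subdivision vertex must continue along the unique subdivided path.

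For the butterfly minor case I would reduce to the topological case plus a single butterfly contraction. It is enough to show $\Anon(H) \leq \Anon(D)$ when $H$ is obtained from a subgraph $D'$ of $D$ by one butterfly contraction of an arc $(v,u)$ (with $\Out{v}=1$ or $\In{u}=1$ in $D'$), since $\Anon$ is monotone under the subgraph relation (this monotonicity itself follows directly from \cref{def:identifying-sequence}: an identifying sequence for an ear survives verbatim in a supergraph only in the wrong direction, so one actually argues that a maximal ear of $D'$ extends to a maximal ear of $D$ and a shortest identifying sequence of the former is still identifying in $D$ — I would state this as a small preliminary lemma). After contracting $(v,u)$ into a vertex $vu$, every ear $Q$ of $H$ lifts to an ear $\widetilde{Q}$ of $D'$: whenever $Q$ uses $vu$, we reinsert the arc $(v,u)$; the condition $\Out{v}=1$ or $\In{u}=1$ is exactly what guarantees this lift is forced and well-defined (if $Q$ enters $vu$ from $w$ then in $D'$ the arc $(w,v)$ exists and, when $\Out{v}=1$, the only way out of $v$ is to $u$, etc.). A maximal ear $P$ of $H$ then lifts to a maximal ear $\widetilde{P}$ of $D'$, and a shortest identifying sequence for $P$, with the arc $(vu,\cdot)$ or $(\cdot,vu)$ replaced by the corresponding lifted arc, is an identifying sequence for $\widetilde{P}$ of the same length; composing with the subgraph step gives $\Anon_H(P)\le\Anon(D')\le\Anon(D)$.

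The main obstacle I anticipate is the bookkeeping in the butterfly case: verifying that the lift $Q\mapsto\widetilde Q$ is both well-defined (no ambiguity about whether or where to reinsert $(v,u)$) and injective/order-preserving on the chosen arcs, and that it sends ``ear through $b_1,\dots,b_k$ in order in $H$'' to ``ear through the lifted arcs in order in $D'$''. The subtlety is that a cycle through $vu$ could in principle correspond to a cycle in $D'$ that does not use $(v,u)$ at all — but the degree condition rules this out, so spelling out that case analysis cleanly (enter/leave $vu$, the four sub-cases depending on which of $\Out{v}=1$, $\In{u}=1$ holds) is the real work. Everything else — maximality is preserved, lengths are preserved, the definition of identifying sequence transfers — is routine once the lift is set up correctly.
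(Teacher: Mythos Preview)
Your plan statement contains the right idea in its second sentence—take a shortest identifying sequence for the lifted ear $P'$ in $D$ and push it forward to an identifying sequence for $P$ in $H$—but your execution runs in the opposite direction throughout, and that direction does not prove the inequality.

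In both the topological and the butterfly case you start from a shortest identifying sequence $(a_1,\dots,a_k)$ for $P$ \emph{in $H$} and lift it to a sequence for $\widehat{P}$ (respectively $\widetilde{P}$) in $D$. Even if that lifted sequence were identifying in $D$, the conclusion would be $\Anon_D(\widehat P)\le\Anon_H(P)$, the reverse of what you need. Worse, the lifted sequence is in general \emph{not} identifying in $D$: an ear $Q$ of $D$ through the $b_i$ need not stay inside the subdivision $H'$ at all, so there is no ``unsubdivision'' of $Q$ landing in $H$, and your argument that $Q\subseteq\widehat P$ collapses. The same reversal appears in your subgraph-monotonicity parenthetical: a shortest identifying sequence for a maximal ear of $D'$ is \emph{not} still identifying in the larger digraph $D$, since $D$ has more candidate conflicting ears, not fewer.

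The paper does precisely what your second sentence promises but your later paragraphs abandon: it lifts $P$ to a maximal ear $P'$ in $D$, takes an identifying sequence $\bar a$ for $P'$ \emph{in $D$}, transforms it into a sequence $\bar b$ of arcs of $P$, and then shows that any conflicting ear for $(P,\bar b)$ in $H$ lifts (by reinserting the contracted arc $(v,u)$, which is forced by the degree condition) to a conflicting ear for $(P',\bar a)$ in $D$, contradicting the choice of $\bar a$. The lift of \emph{ears} goes from $H$ to $D$; the transfer of \emph{identifying sequences} goes from $D$ to $H$. Once you swap the direction in your write-up, the case analysis you anticipate for the single butterfly contraction (how $vu$ sits on $P$, which neighbour of $vu$ the sequence uses) is essentially what the paper carries out.
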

	
	\begin{proof}
		Since every topological minor of \(D\) is also a butterfly minor of \(D\),
		it suffices to prove the statement for the case when \(H\) is a butterfly minor of \(D\).

		It is immediate from the definition that the statement holds if $H$ is a subgraph of $D$,
		as any conflicting ear for a pair $(P, \bar{a})$ in $H$ would also be a conflicting ear for the same pair in $D$.

		By induction, it suffices to consider the case where
		$H$ is obtained by butterfly contracting an arc $(v,u)$ of $D$ into a vertex $vu$.
		Without loss of generality, $\Out[D]{v} = 1$.
		The case where $\In[D]{u} = 1$ follows analogously.

		Let $P$ be a maximal ear in $H$.
		If $P$ does not contain the vertex $vu$,
		then it is also a maximal ear in $D$, and
		an ear-identifying sequence $\bar{a}$ for $P$ in $D$
		is also an ear-identifying sequence for $P$ in $H$, because
		any conflicting ear $Q$ for $(P, \bar{a})$ in $H$ corresponds to a conflicting ear for $(P, \bar{a})$ in $D$,
		where the vertex $vu$ is replaced by $(v,u)$ in $Q$.

		If $P$ contains $vu$,
		we distinguish between two cases.

		\begin{CaseDistinction}
			\Case \(vu = \Start{P}\).

			Let \(P'\) be the path in \(D\) obtained by replacing \(vu\) with \((v,u)\) in \(P\).
			Clearly, \(P'\) is also a maximal ear.
			Let \(\bar{a}\) be an ear-identifying sequence for \(P'\) in \(D\).

			If \(\bar{a}\) does not contain \((v,u)\), then it is also an ear-identifying sequence for \(P\) in \(H\).
			Otherwise, let \(w\) be the successor o \(vu\) along \(P\).
			Construct a sequence \(\bar{b}\) by replacing \((v,u)\) with \((vu,w)\) in \(\bar{a}\) (or just deleting \((v,u)\) if \((u,w)\) is already in \(\bar{a}\)).

			Assume that there is a conflicting ear \(Q\) for \(P', \bar{b}\) in \(H\).
			Then \(Q\) must contain \(vu\).
			Thus, the ear \(Q'\) obtained by replacing \(vu\) with \((v,u)\) is a conflicting ear for \((P, \bar{a})\), a contradiction.

			\Case \(vu\) has a predecessor \(w\) along \(P\).

			\begin{Subcase}
				\Case \(w\) is an inneighbor of \(v\) in \(D\).

				Let \(P'\) be the path obtained by replacing \(vu\) with \((v,u)\) in \(P\).
				Let \(\bar{a}\) be an ear-identifying sequence for \(P'\) in \(D\).

				Let $\bar{b}$ be the sequence obtained by replacing $(v,u)$ with $(w,vu)$.
				Since containing $(w,vu)$ in $H$ is equivalent to containing $(w,v)$ and $(v,u)$ in $D$,
				the sequence $\bar{b}$ is an ear-identifying sequence for $P$ in $H$.

				\Case \(w\) is not an inneighbor of \(v\) in \(D\).
					
				Then \(w\) must be an inneighbor of \(u\) in \(D\). 
				We define \(P'\) as the path obtained by replacing \(vu\) with \(u\) in \(P\).
				Let \(\bar{a}\) be an ear-identifying sequence for \(P'\) in \(D\).
				Let \(\bar{b}\) be the sequence obtained by replacing all (if any)
				occurrences of \(u\) in \(\bar{a}\) with \(vu\).

				Assume towards a contradiction that there is a conflicting ear \(Q\)
				for \((P, \bar{b})\) in \(H\).
				If \(Q\) does not contain \(vu\),
				then it is also a conflicting ear for \((P', \bar{a})\) in \(D\), a contradiction.

				Let \(Q_v\) be the ear obtained by replacing \(vu\) with \((v,u)\) in \(Q\), and
				let \(Q_u\) be the ear obtained by replacing \(vu\) with \(u\) in \(Q\).
				At least one of \(Q_v, Q_u\) is a valid ear in \(D\), as \(\Out[D]{v} = 1\).

				All arcs in \(\bar{a}\) which do not contain \(u\) are both in \(Q\) and in
				\(Q_v, Q_u\).
				Further, \(Q_v, Q_u\) can only avoid
				an arc of \(\bar{a}\) which contains \(u\) if
				the corresponding arc containing \(vu\) is missing in \(\bar{b}\),
				which cannot be the case by construction of \(\bar{b}\).
				Hence, one of \(Q_v, Q_u\) is a conflicting ear for \((P', \bar{a})\) in \(D\).
			\end{Subcase}
		\end{CaseDistinction}

		We conclude that $\Anon(H) \leq \Anon(D)$, as desired.
	\end{proof}

	We will use \cref{state:ear-anonymity-minor-closed} later to draw a connection between ear anonymity and directed treewidth.
	We now investigate the complexity of computing the ear anonymity of a digraph.
	\Cref{def:identifying-sequence,def:ear-ear-anonymity,def:digraph-ear-anonymity} naturally lead us to three related computational problems.

	Since most of the literature on decision problems concerns itself with problems in \NP/, we formulate the question of our decision problems as an ``existential'' question (instead of a ``for all'' question).
	Hence, the question of whether an arc-sequence $\bar{a}$ is an identifying sequence for an ear $P$ becomes the question of the existence of another ear as defined below.
	\begin{definition}
		\label{def:conflicting-ear}
		Let $P$ be an ear and let $\bar{a}$ be a sequence of arcs of $P$, sorted according to their order on $P$.
		We say that an ear $Q$ is a \emph{conflicting ear for $\Brace{P, \bar{a}}$} if $Q$ visits the arcs of $\bar{a}$ in the given order, yet $Q$ is not a subgraph of $P$.
	\end{definition}

	It is immediate from definition that a sequence $\bar{a}$ is an identifying sequence for an ear if, and only if, no conflicting ear exists.
	The first problem we consider can then be formulated as follows.
	
	\EarISDef

	From the above definition it is trivial to derive the following observation.
	
	\begin{observation}
		\label{state:eis-is-in-np}
		\EarIS/ is in \NP/.
	\end{observation}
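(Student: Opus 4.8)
The plan is to verify the definition of $\NP/$ directly by exhibiting, for every yes-instance, a short certificate that a deterministic polynomial-time machine can check. An instance of \EarIS/ consists of a digraph $D$, a maximal ear $P$ of $D$, and a sequence $\bar a = (a_1,\dots,a_k)$ of arcs of $P$ sorted according to their order along $P$; as noted directly before the statement, the yes-instances are precisely those admitting a conflicting ear for $(P,\bar a)$. I would use such a conflicting ear $Q$ as the certificate. Since $Q$ is an ear, it is a path or a cycle and hence has no repeated vertex (apart from the coinciding endpoints of a cycle), so $Q$ consists of $\Bo(n)$ vertices and, written as a vertex sequence, has size polynomial in the input.

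Next I would describe the verifier. Given $D$, $P$, $\bar a$, and a purported certificate $Q$, it performs three checks, each plainly computable in polynomial time: (i) $Q$ is an ear of $D$, i.e.\ every pair of consecutive vertices of $Q$ is an arc of $D$, and $Q$ is either a path (no repeated vertex) or a cycle (first and last vertex coincide, the remaining vertices form a path, and the length is at least two); (ii) $Q$ visits $a_1,\dots,a_k$ in this order, which is decided by a single left-to-right pass over the arc sequence of $Q$ that greedily matches $a_1$, then $a_2$, and so on, accepting iff all $k$ arcs get matched (greedy subsequence matching is correct here); (iii) $Q$ is not a subgraph of $P$, that is, $\V{Q}\not\subseteq\V{P}$ or $\A{Q}\not\subseteq\A{P}$. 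The verifier accepts iff all three checks succeed. If they do, then $Q$ is a conflicting ear for $(P,\bar a)$ by \cref{def:conflicting-ear}, so the instance is a yes-instance; conversely, every yes-instance possesses such a $Q$, which passes all three checks. Therefore \EarIS/ lies in \NP/.

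I do not expect any real obstacle. The only point needing (routine) attention is that a conflicting ear, when one exists, may be assumed to have polynomial size --- which is automatic, since an ear is essentially simple and thus has at most $n+1$ vertices, so there is never a reason to consider a longer witness.
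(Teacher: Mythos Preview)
Your proposal is correct and is exactly the standard certificate-and-verifier argument the paper has in mind; in fact the paper does not spell out a proof at all, remarking only that the observation is trivial from the definition of \EarIS/ and of a conflicting ear. Your write-up simply makes that triviality explicit.
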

						
	Note that the question ``is $\bar{a}$ an identifying sequence for $P$?'' is the complement of \EarIS/ and thus, by \cref{state:eis-is-in-np}, a \coNP/ question.
	We can also formulate this question as ``for all ears $Q$, is $Q$ not a conflicting ear for $\Brace{P, \bar{a}}$?''.
	When considering the problem of determining the ear anonymity of an ear, it seems thus unavoidable to have a quantifier alternation in the question: asking for the existence of an identifying sequence means chaining an existential question with a ``for all'' question.
	
	\PathPADef

	Unlike \EarIS/, it is not clear from the definition whether \PathPA/ is in \NP/, but one can easily verify containment in a class higher up in the polynomial hierarchy.

	\begin{observation}
		\label{state:ear-ear-anonymity-in-sigma-2-p}
		\PathPA/ is in $\Sigma_2^p$.
	\end{observation}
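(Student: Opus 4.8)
The plan is to unfold \PathPA/ directly into the quantifier form of \cref{def:sigma-i-p}. Recall that an instance of \PathPA/ is a triple $(D,P,k)$ consisting of a digraph $D$, an ear $P$ in $D$, and an integer $k$, and asks whether $\Anon[D](P) \le k$; by \cref{def:ear-ear-anonymity} this is the same as asking whether $P$ admits an identifying sequence of length between $1$ and $k$, and by \cref{def:identifying-sequence,def:conflicting-ear} a sequence $\bar a$ of arcs of $P$ (listed in the order in which they occur on $P$) is such an identifying sequence exactly when \emph{no} conflicting ear for $(P,\bar a)$ exists. So the yes-instances are precisely those for which there exists a length-at-most-$k$ subsequence $\bar a$ of the arc-sequence of $P$ such that every ear $Q$ of $D$ fails to be a conflicting ear for $(P,\bar a)$ — an $\exists\forall$ statement.

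Concretely, I would let the existential certificate $u_1$ encode the candidate sequence $\bar a$ and the universal string $u_2$ encode a candidate ear $Q$ of $D$. Both objects have size polynomial in the instance: $\bar a$ is a subsequence of $\A{P}$, which has at most $|\V{D}|$ arcs, and an ear is a vertex sequence of length at most $|\V{D}|+1$; hence a single polynomial $p$ bounds the encoding length of either, with shorter strings padded. The polynomial-time verifier $M$ on input $((D,P,k),u_1,u_2)$ proceeds as follows: it rejects if $u_1$ does not encode a subsequence $\bar a$ of the arc-sequence of $P$ with $1 \le |\bar a| \le k$; otherwise it accepts if $u_2$ does not encode a valid ear of $D$; and otherwise it accepts unless the ear $Q$ encoded by $u_2$ visits the arcs of $\bar a$ in order while $Q \not\subseteq P$, in which case it rejects. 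Checking that $u_2$ is a walk of $D$ whose consecutive vertices form arcs, that it is a path or a cycle, that it contains the arcs of $\bar a$ in the prescribed order, and that $Q \subseteq P$ are all elementary polynomial-time tasks, so $M$ runs in polynomial time.

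It then remains to verify correctness: $\forall u_2\, M((D,P,k),u_1,u_2)=1$ holds precisely when $u_1$ encodes a valid $\bar a$ of length at most $k$ admitting no conflicting ear, i.e. an identifying sequence of length in $[1,k]$; hence $\exists u_1\, \forall u_2\, M((D,P,k),u_1,u_2)=1$ is equivalent to $(D,P,k)$ being a yes-instance. This is exactly the form required by \cref{def:sigma-i-p} for $\Sigma_2^p$, which completes the argument.

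I do not expect a genuine obstacle here; the only slightly delicate points are bookkeeping ones, analogous to those behind \cref{state:eis-is-in-np}: the verifier must \emph{reject} malformed existential certificates $u_1$ (so that garbage strings cannot spuriously witness membership) and must \emph{accept} malformed universal strings $u_2$ (so that strings not encoding ears are harmless for the inner $\forall$), and one has to confirm that the common length bound $p$ is large enough for $u_1$ to range over all arc-subsequences of $P$ and for $u_2$ to range over all ears of $D$.
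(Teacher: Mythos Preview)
Your proposal is correct and matches what the paper has in mind: the paper states this as an observation without proof, and your argument spells out precisely the natural $\exists\bar a\,\forall Q$ unfolding of the definitions that makes the claim immediate. There is nothing to add.
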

						
	As before, asking if an ear has high anonymity is equivalent to asking if no short identifying sequence for that ear exists.
	It seems again unavoidable to add another quantifier alternation when deciding if a digraph has high ear anonymity: asking if a digraph has high ear anonymity means asking for the existence of an ear for which no short identifying sequence exists.

	\DPADef	

	While it is not clear from the definition whether \DPA/ is even in $\Sigma_2^p$, it is easy to verify that it is in $\Sigma_3^p$.

	\begin{observation}
		\label{state:ear-anonymity-in-sigma-3-p}
		\DPA/ is in $\Sigma_3^p$.
	\end{observation}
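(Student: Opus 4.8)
The plan is to unfold \cref{def:digraph-ear-anonymity,def:ear-ear-anonymity,def:identifying-sequence} into a three-quantifier statement and then check that the innermost predicate is decidable in polynomial time, so that the template of \cref{def:sigma-i-p} applies with $i=3$. On input $(D,k)$ the problem \DPA/ asks whether $D$ has a maximal ear (with at least one arc) whose ear anonymity is at least $k$, i.e. whether $\Anon(D)\geq k$. By \cref{def:ear-ear-anonymity} a maximal ear $P$ has $\Anon[D](P)\geq k$ exactly when it has no identifying sequence of length at most $k-1$; since listing all arcs of $P$ in their order along $P$ is always an identifying sequence, $P$ admits some identifying sequence, so this is equivalent to saying that \emph{every} arc-sequence $\bar a$ of $P$ (sorted along $P$) with $|\bar a|\leq k-1$ fails to be identifying. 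By the remark following \cref{def:conflicting-ear}, $\bar a$ fails to be identifying precisely when \emph{there exists} a conflicting ear $Q$ for $(P,\bar a)$. Composing these equivalences, $(D,k)$ is a yes-instance if and only if there is a maximal ear $P$ of $D$ with at least one arc such that, for every admissible $\bar a$, there is a conflicting ear $Q$ for $(P,\bar a)$ — an $\exists\forall\exists$ statement. (Equivalently: the inner ``$\forall\bar a\,\exists Q$'' part is exactly an instance of \PathPA/, which by \cref{state:ear-ear-anonymity-in-sigma-2-p} is in $\Sigma_2^p$, and \DPA/ prepends a single existential quantifier over the ear $P$.)

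To see that this is the $\Sigma_3^p$ template, it remains to bound the witness sizes and exhibit a polynomial-time verifier. An ear has at most $n$ vertices, and an admissible $\bar a$ has at most $k-1\leq m$ arcs, so $P$, $\bar a$ and $Q$ can each be encoded by binary strings of a single fixed polynomial length using padding; syntactically malformed strings are handled by the usual convention, the verifier outputting $0$ at the outer existential level and $1$ at the universal level so that the quantifier semantics come out right. The inner predicate --- ``$P$ is a maximal ear with at least one arc, and either $\bar a$ is not a correctly ordered length-$\leq(k-1)$ subsequence of $\A{P}$, or $Q$ is an ear visiting $\bar a$ in order that is not a subgraph of $P$'' --- is decidable in linear time: checking whether a vertex sequence is a path or a cycle is immediate, a path is a maximal ear exactly when neither endpoint can be extended to a vertex outside the path and there is no arc from its last vertex to its first, a cycle is always a maximal ear, and the remaining subsequence and subgraph tests are direct. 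Hence the predicate has the form required by \cref{def:sigma-i-p} with $i=3$ (the last quantifier being existential, as $3$ is odd), so \DPA/ is in $\Sigma_3^p$.

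I do not expect a genuine obstacle here; this is a routine containment argument. The only two points that need a moment's care are (i) invoking that the full arc sequence of $P$ is always identifying, which is what lets us replace ``no identifying sequence of any length'' by ``no identifying sequence of length at most $k-1$'' and thereby keeps the middle ($\forall$) quantifier ranging over a string of polynomial length rather than over an a priori unbounded object, and (ii) pinning down the default behaviour of the verifier on invalid witness strings so that the alternating-quantifier reading matches $\Anon(D)\geq k$.
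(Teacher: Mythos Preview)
Your proof is correct and is precisely the routine unfolding of \cref{def:digraph-ear-anonymity,def:ear-ear-anonymity,def:identifying-sequence,def:conflicting-ear} into an $\exists\forall\exists$ predicate that the paper gestures at; the paper itself records the observation without proof, leaving it as ``easy to verify'', and your argument supplies exactly those details.
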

					
	In \cref{sec:ear-anonymity-on-dags} we show that \EarIS/, \PathPA/ and \DPA/ are in \P/ on DAGs.
	In \cref{sec:eis-is-np-hard} we show that the three previous decision problems are \NP/-hard in general using some of the results from \cref{sec:ear-anonymity-on-dags}.
	Finally, in \cref{sec:sigma-2-p-hardness-pathpa}, we show that \PathPA/ is $\Sigma_2^p$-complete.

\section{\PAlong/ on DAGs}
\label{sec:ear-anonymity-on-dags}

	We start by identifying certain substructures which increase the anonymity of an ear by enforcing certain arcs to be present in any identifying sequence.
	Two such substructures, called \emph{deviations} and \emph{bypasses},
	are defined in \cref{def:deviation,def:forward-shortcut} and illustrated in \cref{fig:ear-anonymity:forward-shortcut,fig:ear-anonymity:derail} below.
	Of particular interest are subpaths of an ear which must be \emph{hit} by any identifying sequence.
	We call these subpaths \emph{blocking subpaths} since they prevent a potential conflicting ear from containing the corresponding bypass or deviation as a subgraph.

	\begin{definition}
		\label{def:deviation}
		Let $P$ be an ear and let $Q$ be a path in a digraph $D$.
		We say that $Q$ is a \emph{deviation for $P$} if $Q$ is internally disjoint from $P$ and exactly one of $\End{Q}, \Start{Q}$ lies in $P$.
		Additionally, the $\Start{P}$-$\End{Q}$ subpath of $P$ is called a \emph{blocking subpath for $Q$} if $\End{Q} \in \V{P}$, and the $\Start{Q}$-$\End{P}$ subpath of $P$ is called a \emph{blocking subpath for $Q$} if $\Start{Q} \in \V{P}$.
	\end{definition}

	\vspace{-0.5cm}
\begin{figure}[H]
	\centering
				
		\begin{tikzpicture}[yscale=0.7]
		 \node[circle, label = below:{{$v_1$}}, line width = 0.6, fill = black]
	(n0) at (0, 0){};
\node[circle, label = below:{{$v_2$}}, line width = 0.6, fill = black]
	(n1) at (1.15, 0){};
\node[circle, label = below:{{$v_3$}}, line width = 0.6, fill = black]
	(n2) at (2.3, 0){};
\node[circle, label = left:{\color{black}{$u$}}, line width = 0.6, fill = black]
	(n3) at (2.3, 1.15){};
\path[-latex, line width = 0.6, draw = black]
	(n0) to (n1);
\path[-latex, line width = 1.2, draw = green]
	(n1) to (n2);
\path[-latex, line width = 0.6, draw = red]
	(n1) to (n3);

	\end{tikzpicture}
				\caption{The path $(v_2, u)$ is a deviation for
		the path $P = (v_1, v_2, v_3)$.
		The unique identifying sequence of length one for $P$ is
		$((v_2, v_3))$.}
		\label{fig:ear-anonymity:derail}
	\end{figure}
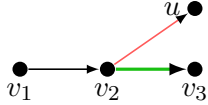

	\begin{definition}
		\label{def:forward-shortcut}
		Let $P$ be a path in a digraph $D$.
		Let $v_1, v_2, \dots v_{n}$ be the vertices of $P$ sorted according to their order in $P$.
		A \emph{bypass for $P$} is a path $Q$ in $D$ from some $v_i$ to some $v_j$ with $i < j$
		such that
		$V(Q) \cap V(P) = \{v_i, v_j\}$ and
		\(Q\) is not a subpath of \(P\).
		Further, the $v_i$-$v_j$ subpath of $P$ is called the \emph{blocking subpath for $Q$}.
	\end{definition}

	\vspace{-0.5cm}
	\begin{figure}[H]
		\centering
						
			\begin{tikzpicture}[yscale=0.7]
			\node[circle, label = below:{{$v_1$}}, line width = 0.6, fill = black, draw = black]
	(v1) at (0, 0){};
\node[circle, label = above:{{$u$}}, line width = 0.6, fill = black, draw = black]
	(v2) at (2.3, 1.15){};
\node[circle, label = below:{{$v_2$}}, line width = 0.6, fill = black, draw = black]
	(v3) at (1.15, 0){};
\node[circle, label = below:{{$v_3$}}, line width = 0.6, fill = black, draw = black]
	(v4) at (3.45, 0){};
\node[circle, label = below:{{$v_4$}}, line width = 0.6, fill = black, draw = black]
	(v5) at (4.6, 0){};
\path[-latex, line width = 1.2, draw = green]
	(v3) to (v4);
\path[-latex, line width = 0.6, draw = black]
	(v1) to (v3);
\path[-latex, line width = 0.6, draw = black]
	(v4) to (v5);
\path[-latex, line width = 0.6, draw = red]
	(v3) to (v2);
\path[-latex, line width = 0.6, draw = red]
	(v2) to (v4);

		\end{tikzpicture}
								\caption{The path $\Brace{v_2, u, v_3}$ is a bypass for $P = \Brace{v_1, v_2, v_3, v_4}$.
		Note that there is exactly one identifying sequence of length 1 for $P$, namely $\Brace{\Brace{v_2, v_3}}$.}
		\label{fig:ear-anonymity:forward-shortcut}
			\end{figure}
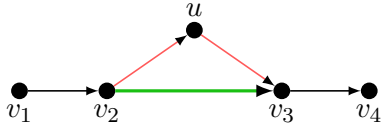

	If an ear contains many arc-disjoint blocking subpaths, then every identifying sequence must be long.
	If, on the other hand, the blocking subpaths overlap, then a short identifying sequence may still exist.
	In order to better analyze the relationship between the length of an identifying sequence and the blocking subpaths of an ear, we model this problem as a problem on intervals.
	Intuitively, we can consider each arc on an ear to be an integer, ordered naturally along the ear, and each blocking subpath as an interval over the integers.
	Hence, we are interested in finding a minimum set of integers which hit all the intervals.
	This naturally leads us to the definitions given below.

	\begin{definition}
		\label{def:blocking-interval-digraph}
		Let $Q_1, Q_2, \dots, Q_{k}$ be subpaths of an ear $P$.
		The \emph{arc-interval set of $Q_1, Q_2, \dots, Q_{k}$} is the set of intervals $\mathcal{I} = \Set{I_1, I_2, \dots, I_{k}}$ with $I_i = \A{Q_i}$ for all $1 \leq i \leq k$.
	\end{definition}

	\begin{definition}
		\label{def:interval-hitting-set}
		Let $\mathcal{I} = \Set{I_1, I_2, \dots, I_{n}}$ be a set of intervals over a finite (ordered) domain $U$.
		A set $X \subseteq U$ is a \emph{hitting set} for $\mathcal{I}$ if $I_i \cap X \neq \emptyset$ for every $I_i \in \mathcal{I}$.
	\end{definition}

	Since an ear can have an exponential number of bypasses and deviations, we are interested in reducing the number of blocking subpaths we need to consider.
	In particular, if a blocking subpath is fully contained within another, then we can ignore the longer subpath.

	Formally, we define a partial ordering $\preceq$ over the blocking paths of the bypasses and the deviations for an ear $P$ as follows.
	For two blocking subpaths $B_a, B_b$ set $B_a \preceq B_b$ if $\Start{B_a}$ is not before $\Start{B_b}$ in $P$ and $\End{B_a}$ is not after $\End{B_b}$ in $P$.
	That is, \(B_a \preceq B_b\) if and only if \(B_a\) is a subpath of \(B_b\).

	Let $B_1, B_2, \dots B_{k}$ be the minimal elements of $\preceq$.
	Every set of intervals $\mathcal{I}$ which contains the arc-interval set of each $B_1, B_2, \dots B_{k}$ is called the \emph{blocking interval set for $P$}.
	If \(\mathcal{I}\) contains only the arc-interval sets of $B_1, B_2, \dots B_{k}$, then it is
	the minimum blocking interval set for \(P\).

	We now establish a connection between hitting sets for a blocking interval set for an ear and the identifying sequence for that ear.

	\begin{lemma}
		\label{state:hitting-set-implies-identifying-sequence}
		Let $P$ be a maximal ear in an acyclic digraph $D$ and 
		let $\mathcal{I}$ be a blocking interval set for $P$.
		Let $\bar{a} = \Brace{a_1, a_2, \dots, a_{k}}$ be a hitting set for $\mathcal{I}$, sorted according to the occurrence of the arcs along $P$.
		Then $\bar{a}$ is an identifying sequence for $P$.
	\end{lemma}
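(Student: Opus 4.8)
The plan is to argue by contraposition: suppose $\bar a$ is \emph{not} an identifying sequence for $P$, so there is a conflicting ear $Q$ for $(P,\bar a)$, i.e.\ $Q$ visits $a_1,\dots,a_k$ in this order but $Q\not\subseteq P$. Since $D$ is acyclic and $\bar a$ contains at least one arc, $Q$ is a path, and all of $a_1,\dots,a_k$ lie on $P$. I will use the fact that $Q$ deviates from $P$ somewhere to locate a bypass or a deviation of $P$ whose blocking subpath is \emph{not hit} by $\bar a$, contradicting the hitting-set hypothesis.

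First I would set up the interaction between $Q$ and $P$. Consider the maximal subpaths of $Q$ that are internally disjoint from $V(P)$; these are the ``excursions'' of $Q$ away from $P$. Because $Q\not\subseteq P$, at least one such excursion $R$ exists (an excursion where $Q$ genuinely leaves the vertex set of $P$, or where $Q$ uses an arc not on $P$ between two consecutive $P$-vertices). For each excursion $R$ with both endpoints on $P$, say from $v_i$ to $v_j$: if $i<j$ in the order along $P$, then (after checking $V(R)\cap V(P)=\{v_i,v_j\}$, which holds by maximality of the excursion) $R$ is a bypass for $P$ with blocking subpath the $v_i$–$v_j$ subpath of $P$; if $i>j$, then $R$ together with the $v_j$–$v_i$ subpath of $P$ would form a directed cycle, contradicting acyclicity of $D$, so this case cannot occur. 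If $Q$ has an endpoint off $P$, the first/last excursion yields a deviation for $P$ in the sense of \cref{def:deviation}, with the corresponding $\Start P$–$\End R$ (or $\Start R$–$\End P$) blocking subpath. In every case I extract a blocking subpath $B$ of a bypass or deviation for $P$, hence some $B_\ell\preceq B$ among the minimal elements, and $B_\ell$'s arc-interval lies in $\mathcal I$.

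Next, the key claim: $\bar a$ is disjoint from $A(B)$ (and hence from $A(B_\ell)\subseteq A(B)$), which contradicts $\bar a$ being a hitting set for $\mathcal I$. To see this, write $B$ as the $s$–$t$ subpath of $P$ where $s,t$ are the two consecutive ``contact points'' of the excursion $R$ on $P$ (for a deviation one of $s,t$ is an endpoint of $P$). Any arc $a_p\in\bar a$ lies on $P$; since $Q$ visits $a_1,\dots,a_k$ in $P$-order and uses the excursion $R$ in place of the subpath $B$, no arc of $\bar a$ can be an arc of $B$ — informally, $Q$ ``skips over'' $B$ entirely, going from $s$ directly along $R$ to $t$, so no arc strictly inside $B$ is traversed by $Q$, yet every arc of $\bar a$ is traversed by $Q$. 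I would make this precise by using the $P$-order: the arcs of $\bar a$ that $Q$ reads before the detour lie on the $\Start P$–$s$ part of $P$, those read after lie on the $t$–$\End P$ part, so none lies in the $s$–$t$ part, which is exactly $B$; here it is essential that $Q$ respects the order of $\bar a$ along $P$. Thus $A(B)\cap\bar a=\emptyset$, contradicting that $\bar a$ hits every interval of $\mathcal I$. This establishes the lemma.

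The main obstacle I anticipate is the bookkeeping in the key claim — making rigorous that ``$Q$ reads $\bar a$ in $P$-order and detours over $B$, therefore $\bar a$ avoids $A(B)$.'' One has to rule out the possibility that $Q$ re-enters the interior of $B$ during a \emph{different} excursion and picks up an arc of $\bar a$ there; acyclicity is what prevents this, since a later return into the interior of $B$ combined with the $s$–$t$ subpath of $P$ would create a cycle, or would force $Q$ to revisit a vertex. A secondary point to handle carefully is the degenerate situation where an ``excursion'' consists of a single arc $(v_i,v_{i'})$ with $v_i,v_{i'}\in V(P)$ but this arc is a chord rather than a genuine vertex-detour; this still counts as a bypass (it is ``not a subpath of $P$'' per \cref{def:forward-shortcut}) and the same order argument applies. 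Once these are nailed down, the contradiction with the hitting-set property closes the proof.
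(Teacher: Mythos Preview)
Your proposal is correct and follows essentially the same approach as the paper: both argue by contradiction, locating a bypass or deviation of $P$ arising from an excursion of $Q$ and showing its blocking subpath carries no arc of $\bar a$, with acyclicity ruling out backward re-entries into $P$. The paper's organization differs only in that it first partitions $Q$ and $P$ into segments $Q_1,\dots,Q_{k+1}$ and $P_1,\dots,P_{k+1}$ between consecutive arcs of $\bar a$ and then finds the offending bypass or deviation inside a single segment $P_i$; this makes your ``key claim'' automatic (the blocking subpath lies within $P_i$, whose interior contains no arc of $\bar a$ by construction), thereby cleanly dissolving the bookkeeping obstacle you identify at the end of your sketch.
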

	\begin{proof}
		Let $Q$ be a maximal ear visiting $\bar{a}$ in this order.
		Let $(v^s_i, v^e_i) = a_i$ for each $1 \leq i \leq k$.
		Partition $Q$ and $P$ as follows.
		For each $2 \leq i \leq k$ let $Q_i$ be the $v^e_{i-1}$-$v^e_i$ subpath of $Q$ and let $P_i$ be the $v^e_{i-1}$-$v^e_i$ subpath of $P$.
		Let $Q_1$ be the $\Start{Q}$-$v^e_1$ subpath of $Q$ and let $Q_{k+1}$ be the $v^e_k$-$\End{Q}$ subpath of $Q$.
		Similarly, let $P_1$ be the $\Start{P}$-$v^e_1$ subpath of $P$ and let $P_{k+1}$ be the $v^e_k$-$\End{P}$ subpath of $P$.
		Note that $P = P_1 \cdot P_2 \cdot \ldots \cdot P_{k+1}$ and $Q = Q_1 \cdot Q_2 \cdot \ldots \cdot Q_{k+1}$.

		Assume towards a contradiction that $Q \neq P$.
	  In particular, \(Q_i \neq P_i\) holds for some $1 \leq i \leq k + 1$.
	  
		If \(Q_i\) contains an arc \(v_i, v_j\)
		such that
		both \(v_i\) and \(v_j\) lie in \(P\),
		but \((v_i, v_j)\) is not an arc in \(P\),
		then there is no arc of \(\bar{a}\)
		between \(v_i\) and \(v_j\) along \(P\), as \(D\) is acyclic.
		However, \((v_i, v_j)\) is a bypass for \(P\), and
		its corresponding blocking subpath is not hit by \(\bar{a}\), a contradiction to the choice of \(\bar{a}\).
		Hence, \(Q_i\) must contain some arc \((u_1, u_2)\)
		such that
		exactly of \(u_1,u_2\) is in \(P\).

		We now distinguish between two cases.

		\textbf{Case 1:} $i = 1$ or $i = k+1$.
		Assume without loss of generality that $i = 1$.
		The case $i = k+1$ follows analogously.
		Let $\Brace{u_1, u_2}$ be the first arc along $Q_1$ such that $u_1 \not\in \V{P_1}$ and $u_2 \in \V{P_1}$.
		Since $\Brace{v^s_1, v_1^e}$ is both in $Q_1$ and in $P_1$ and $Q_1 \neq P_1$, such an arc $\Brace{u_1, u_2}$ exists.

		If $u_2$ comes after or at $v^e_1$ along $P_1$, then $Q_1$ must contain a subpath $Q'$ from $u_2$ to some $u_3$ such that $u_3$ comes before or at $v^s_1$ along \(P_1\).
		This however implies the existence of a cycle in $D$, a contradiction to the assumption that $D$ is acyclic.
		Hence, $u_2$ lies before or at $v^s_1$ along $P_1$.
		
		By definition, $\Brace{u_1, u_2}$ is a deviation for $P_1$.
		Hence, the $\Start{P_1}$-$u_2$ subpath of $P_1$ contains a blocking path $B$ which is not hit by $\bar{a}$, a contradiction to the assumption that $\bar{a}$ is a hitting set for $\mathcal{I}$.

		\textbf{Case 2:} $2 \leq i \leq k$.
		Let $\Brace{u_1, u_2}$ be the first arc along $Q_i$ such that $u_1 \in \V{P_i}$ and $u_2 \not\in \V{P_i}$.
		Since both $Q_i$ and $P_i$ contain $v^e_{i-1}$ and $v^s_i$, such an arc $\Brace{u_1, u_2}$ exists.
		As $\Brace{v^s_i, v^e_i}$ is the last arc along $Q_i$, there must be a $u_2$-$v^s_i$ path $Q'$ in $Q$.
		If $Q'$ intersects some vertex of $P$ which comes at or after $v^e_i$ along $P$, then there is a cycle in $D$, a contradiction.
		Hence, $Q'$ must contain a bypass whose blocking subpath $B$ does not contain any arc of $\bar{a}$, contradicting the assumption that $\bar{a}$ is a hitting set for $\mathcal{I}$.

		As both \textbf{Case 1} and \textbf{Case 2} lead to a contradiction, we conclude that $Q = P$ and, hence, $\bar{a}$ is an identifying sequence for $P$, as desired.
	\end{proof}

	Note that \cref{state:hitting-set-implies-identifying-sequence} is not true if we allow the digraph to contain cycles, with \cref{fig:hitting-set-smaller-than-identifying-sequence} being a counter-example.
	\begin{figure}[H]
		\centering
		\vspace{-1.75cm}
		\begin{tikzpicture}[yscale=0.8]
			\node[circle, label = above:{{$v_3$}}, line width = 0.894, fill = black]
	(v1) at (2.3, 1.15){};
\node[circle, label = above right:{{$v_2$}}, line width = 0.894, fill = black]
	(v2) at (1.15, 1.15){};
\node[circle, line width = 0.894, fill = gray]
	(v3) at (1.15, 0){};
\node[circle, label = above:{{$v_1$}}, line width = 0.894, fill = black]
	(v4) at (0, 1.15){};
\node[circle, label = above:{{$v_7$}}, line width = 0.894, fill = black]
	(v5) at (8.05, 1.15){};
\node[circle, label = above right:{{$v_8$}}, line width = 0.894, fill = black]
	(v6) at (9.2, 1.15){};
\node[circle, line width = 0.894, fill = gray]
	(v7) at (9.2, 0){};
\node[circle, label = above:{{$v_9$}}, line width = 0.894, fill = black]
	(v8) at (10.35, 1.15){};
\node[circle, label = above:{{$v_4$}}, line width = 0.894, fill = black]
	(v9) at (3.45, 1.15){};
\node[circle, label = above:{{$v_5$}}, line width = 0.894, fill = black]
	(v10) at (4.6, 1.15){};
\node[circle, line width = 0.894, fill = gray]
	(v11) at (5.75, 2.3){};
\node[circle, label = above:{{$v_6$}}, line width = 0.894, fill = black]
	(v12) at (6.9, 1.15){};
\path[-latex, line width = 1.2, draw = green]
	(v10) to (v12);
\path[-latex, line width = 0.588, draw = black]
	(v1) to (v9);
\path[-latex, line width = 0.588, draw = black]
	(v9) to (v10);
\path[-latex, line width = 0.588, draw = black]
	(v12) to (v5);
\path[-latex, line width = 0.894, draw = red]
	(v12) .. controls (5.483, -0.303) and (3.684, -0.268) .. (v1);
\path[-latex, line width = 0.894, draw = red]
	(v1) .. controls (4.093, 2.171) and (8.556, 4.342) .. (v6);
\path[-latex, line width = 0.588, draw = black]
	(v2) to (v1);
\path[-latex, line width = 1.2, draw = yellow]
	(v5) to (v6);
\path[-latex, line width = 0.894, draw = red]
	(v2) .. controls (1.167, 3.073) and (5.087, 3.757) .. (v5);
\path[-latex, line width = 1.2, draw = green]
	(v6) to (v8);
\path[-latex, line width = 1.2, draw = green]
	(v4) to (v2);
\path[-latex, line width = 0.894, draw = red]
	(v5) .. controls (8.05, -0.704) and (4.591, -0.02) .. (v9);
\path[-latex, line width = 0.588, draw = gray]
	(v3) to (v2);
\path[-latex, line width = 0.588, draw = gray]
	(v6) to (v7);
\path[-latex, line width = 0.588, draw = gray]
	(v10) to (v11);
\path[-latex, line width = 0.588, draw = gray]
	(v11) to (v12);

		\end{tikzpicture}
		\caption{The set $\Set{(v_1, v_2), (v_5, v_6), (v_8, v_9)}$ is a hitting set of size 3 for the blocking interval set for $P = (v_1, v_2, \ldots, v_9)$, yet $\Anon[D](P) = 4$, witnessed by the sequence
		$((v_1, v_2), (v_5, v_6), (v_7, v_8), (v_8, v_9))$.}
		\label{fig:hitting-set-smaller-than-identifying-sequence}
	\end{figure}
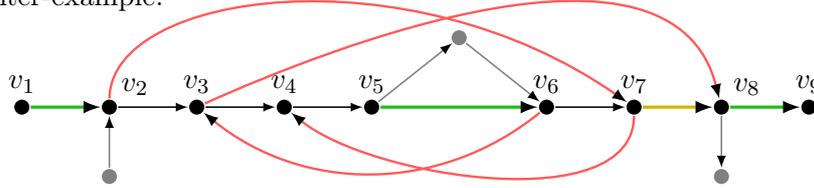
	The reverse direction of \cref{state:hitting-set-implies-identifying-sequence}, however, does hold in general, and is proven below.

		\begin{lemma} 			\label{state:identifying-sequence-implies-hitting-set}
		Let $P$ be a maximal path in a digraph $D$ and let $\mathcal{I}$ be the set of blocking intervals for $P$.
		Let $\bar{a}$ be an identifying sequence for $P$.
		Then $\bar{a}$ is a hitting set for $\mathcal{I}$.
	\end{lemma}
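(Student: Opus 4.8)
The plan is to prove the contrapositive: assuming that $\bar a$ is \emph{not} a hitting set for $\mathcal I$, I will produce a conflicting ear for $\Brace{P, \bar a}$, which by \cref{def:conflicting-ear,def:identifying-sequence} shows that $\bar a$ is not an identifying sequence for $P$. So fix an interval $I \in \mathcal I$ with $I \cap \bar a = \emptyset$. By the definition of the blocking interval set, $I$ contains the arc-interval set $\A{B}$ of some minimal (under $\preceq$) blocking subpath $B$, and $B$ is the blocking subpath of a bypass or of a deviation $Q$ for $P$; in particular $\bar a \cap \A{B} = \emptyset$. Write $v_1, v_2, \dots, v_n$ for the vertices of $P$ in order.

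The uniform idea is to reroute $P$ through $Q$. If $Q$ is a bypass from $v_i$ to $v_j$ with $i < j$, then $B$ is the $v_i$-$v_j$ subpath of $P$, and I let $Q'$ be the concatenation of the $v_1$-$v_i$ subpath of $P$, then $Q$, then the $v_j$-$v_n$ subpath of $P$. If $Q$ is a deviation with $\End{Q} = v_j \in \V{P}$, then $B$ is the $v_1$-$v_j$ subpath of $P$, and I let $Q'$ be $Q$ followed by the $v_j$-$v_n$ subpath of $P$. If $Q$ is a deviation with $\Start{Q} = v_i \in \V{P}$, then $B$ is the $v_i$-$v_n$ subpath of $P$, and I let $Q'$ be the $v_1$-$v_i$ subpath of $P$ followed by $Q$. (A deviation has length at least one, so its endpoint off $P$ always exists; this also covers the degenerate case where $Q$ is a single arc.)

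It then remains to verify three things in each case. First, $Q'$ is a path, and hence an ear: the internal-disjointness clauses of \cref{def:deviation,def:forward-shortcut} force $\V{Q}$ to meet $\V{P}$ only in the endpoints of $Q$ lying on $P$, so the $P$-pieces glued around $Q$ touch $Q$ only at the gluing vertices, and --- because $i < j$ in the bypass case --- those two $P$-pieces are themselves vertex-disjoint; hence no vertex of $Q'$ repeats. (No acyclicity of $D$ is needed here, in contrast with \cref{state:hitting-set-implies-identifying-sequence}.) Second, $Q'$ visits the arcs of $\bar a$ in the prescribed order: since $\bar a \cap \A{B} = \emptyset$, no arc of $\bar a$ lies on the removed subpath $B$, so every arc of $\bar a$ lies on one of the retained $P$-pieces, which occur inside $Q'$ in the same order as along $P$. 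Third, $Q'$ is not a subgraph of $P$: in the bypass case $Q$ uses an arc outside $\A{P}$ (otherwise $Q$ would equal the $v_i$-$v_j$ subpath of $P$, contradicting the definition of a bypass), while in the deviation cases $Q$, and therefore $Q'$, contains the endpoint of $Q$ that lies off $P$. Thus $Q'$ is a conflicting ear for $\Brace{P, \bar a}$, contradicting that $\bar a$ is an identifying sequence.

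The one step I expect to require real care is the first verification --- that the rerouted object $Q'$ is a genuine simple path. This hinges on the bypass or deviation $Q$ being unable to re-enter $P$ anywhere except at its designated endpoint(s), which is exactly what the internal-disjointness hypotheses in \cref{def:deviation,def:forward-shortcut} guarantee, and on not overlooking the degenerate single-arc cases. The order-preservation and ``not a subgraph'' checks are routine once the correct $Q'$ has been written down.
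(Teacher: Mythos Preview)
Your proposal is correct and follows essentially the same approach as the paper's proof: both assume some blocking subpath $B$ is missed by $\bar a$, split into the bypass and deviation cases, and build the conflicting ear by splicing the bypass/deviation into $P$ in place of (or in addition to) the portion covered by $B$. Your write-up is in fact slightly more careful than the paper's, since you explicitly verify that the rerouted object is a simple path using the internal-disjointness clauses, whereas the paper leaves this implicit.
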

	
	\begin{proof}
		Assume towards a contradiction that $\bar{a}$ is an identifying sequence for $P$ but there is some blocking subpath $B$ in $P$ which does not contain any arc in $\bar{a}$.

		We consider two cases.

		\textbf{Case 1:} $B$ is the blocking subpath of some bypass $S$ in $P$.
		Let $P_a$ be the $\Start{P}$-$\Start{B}$ subpath of $P$ and let $P_b$ be the $\End{B}$-$\End{P}$ subpath of $P$.
		Since $P_a$ and $P_b$ contain all arcs of $P$ except those in $B$, we have that $R \coloneqq P_a \cdot S \cdot P_b$ visits all arcs of $\bar{a}$ in the given order.
		However, $R$ is not a subgraph of $P$, contradicting the assumption that $\bar{a}$ is an identifying sequence for $P$.

		\textbf{Case 2:} $B$ is the blocking subpath of some deviation $S$ of $P$.
		Without loss of generality we assume that $\End{S} \in \V{P}$.
		The case where $\Start{S} \in \V{P}$ follows analogously.
		Let $P_b$ be the $\End{S}$-$\End{P}$ subpath of $P$.

		By definition, $B$ is a $\Start{P}$-$\End{S}$ subpath of $P$.
		Hence, $P_b$ visits all arcs of $\bar{a}$ in the given order.
		Thus, $R \coloneqq S \cdot P_b$ is a path visiting $\bar{a}$, however $R$ is not a subgraph of $P$, a contradiction to the assumption that $\bar{a}$ is an identifying sequence for $P$.

		Since both cases above lead to a contradiction, we conclude that $\bar{a}$ is a hitting set for~$\mathcal{I}$.
	\end{proof}

	Together, \cref{state:hitting-set-implies-identifying-sequence,state:identifying-sequence-implies-hitting-set} allow us to reduce \PathPA/ on acyclic digraphs to a hitting set problem on intervals which can be solved efficiently, as shown below.

	\IntervalHSDef

\begin{varalgorithm}{minimumHittingSet}
	\begin{algorithmic}[1]
		\Function{minimumHittingSet}{set of intervals $\mathcal{I} = \Set{I_0, I_1, \dots, I_{n-1}}$}
				\State $\mathcal{I}_{start} \gets $ sort $\mathcal{I}$ by starting points
		\State $\mathcal{I}_{end} \gets $ sort $\mathcal{I}$ by endpoints
				\State $hit \gets $ an array of length $n$, initialized with \texttt{false}
				\State $i_{start} \gets 0$
		\State $i_{end} \gets 0$
		\State $X \gets \emptyset$
				\While{$i_{end} < n$}
						\State $e \gets \End{\mathcal{I}_{end}[i_{end}]}$
			\State $X \gets X \cup \Set{e}$
						\While{$i_{start} < n \textbf{ and } \Start{\mathcal{I}_{start}[i_{start}]} \leq e$}
			\label{alg:hitting-set:while-start}
				\State $I_j \gets \mathcal{I}_{start}[i_{start}]$
				\State $hit[j] \gets $ \texttt{true}
				\State $i_{start} \gets i_{start} + 1$
			\EndWhile
			\LComment{The function $\FunctionName{id}$ returns the index of the interval, that is, $\FunctionName{id}(I_j) = j$.}
			\While{$i_{end} < n \textbf{ and } hit[\FunctionName{id}(\mathcal{I}_{end}[i_{end}])] = \texttt{true}$} 
			\label{alg:hitting-set:while-end}
				\State $i_{end} \gets i_{end} + 1$
			\EndWhile
		\EndWhile
		\State \Return $X$
		\EndFunction
		\end{algorithmic}
		\caption{Compute a minimum hitting set for a set of intervals.}
		\label{alg:minimum-interval-hitting-set}
\end{varalgorithm}

The algorithm in \cref{state:compute-interval-hitting-set} uses standard techniques (see, for example, \cite{CormenLRS01chap16}) to greedily compute the desired hitting set.

\begin{lemma}
	\label{state:compute-interval-hitting-set}
	An optimal solution for \IntervalHS/ can be computed in $\Bo(n \log n)$ time, where \(n\) is the size of the domain.
	Moreover, if two copies of the interval set \(mathcal{I}\) are provided to the algorithm,
	 whereas one copy is sorted according to the starting points and the other according to the endpoints of the intervals,
	 then the problem can be solved in \(\Bo(n)\) time.
\end{lemma}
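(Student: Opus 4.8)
The plan is to treat \cref{alg:minimum-interval-hitting-set} as a standard greedy ``stabbing'' procedure and to establish (i) that it outputs a valid hitting set, (ii) that this hitting set has minimum size, and (iii) the claimed running times, accounting separately for the two sorting steps and for the rest of the procedure.

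First I would pin down the loop invariant for the outer \textbf{while}-loop: after each iteration, $hit[j]$ is \texttt{true} if and only if $I_j$ contains some element of $X$, and $i_{end}$ points to the interval with smallest endpoint among those not yet hit (the intervals are scanned in endpoint order via $\mathcal{I}_{end}$). The crucial observation is that when $e \gets \End{\mathcal{I}_{end}[i_{end}]}$ is chosen, $e$ is the minimum endpoint over all currently unhit intervals; hence every still-unhit interval $I_j$ with $\Start{I_j} \le e$ satisfies $\End{I_j} \ge e$ and therefore genuinely contains the point $e$. This makes the marking at \cref{alg:hitting-set:while-start} sound (re-marking already-hit intervals is harmless), and it shows that the interval currently at $i_{end}$ is among those newly marked, so $i_{end}$ strictly increases each iteration; thus the procedure terminates and, by the invariant, every interval is hit by $X$.

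For optimality I would run an exchange argument. Let $e_1 < e_2 < \dots < e_t$ be the points chosen by the algorithm (in order of choice, which is increasing since the value $e$ is nondecreasing across outer iterations), and let $X^\ast$ be any hitting set; I argue $t \le |X^\ast|$ by induction on the number of intervals. The earliest-ending interval $I_1$ (realizing $e_1$) must contain some point $p \in X^\ast$ with $p \le e_1$. Since $e_1$ is the global minimum endpoint, the intervals \emph{not} hit by $e_1$ are exactly those with start strictly greater than $e_1$, and $p$ hits none of these; hence $X^\ast \setminus \{p\}$ is a hitting set of the residual instance of size $\le |X^\ast| - 1$, while from this point on the algorithm behaves exactly as it would on the residual instance and uses $t-1$ points, closing the induction.

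For the running time, let $n = |\mathcal{I}|$ (for the blocking interval sets of interest $n = \Bo(|U|)$, since the minimal blocking subpaths form an antichain under $\preceq$, which is consistent with phrasing the bound in terms of the domain size). Sorting $\mathcal{I}$ by start points and by endpoints costs $\Bo(n\log n)$; everything else is linear, because $i_{start}$ and $i_{end}$ are never decremented and are bounded by $n$, so the total work of the inner \textbf{while}-loops (\cref{alg:hitting-set:while-start,alg:hitting-set:while-end}) over the entire run is $\Bo(n)$, each outer iteration does $\Bo(1)$ extra work, and allocating and initializing $hit$ is $\Bo(n)$. If the presorted copies $\mathcal{I}_{start}$, $\mathcal{I}_{end}$ are supplied as input, the sorting is skipped and the total is $\Bo(n)$. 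I expect the only delicate point to be the bookkeeping behind ``the intervals marked at \cref{alg:hitting-set:while-start} are precisely those stabbed by $e$'' — in particular that no unhit interval with start $\le e$ can end before $e$; once that is settled, both the correctness proof and the linear-time accounting are routine.
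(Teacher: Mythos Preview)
Your proposal is correct and follows essentially the same approach as the paper: both analyze \cref{alg:minimum-interval-hitting-set} as a greedy stabbing procedure, prove optimality via an exchange argument (yours phrased as induction on the residual instance, the paper's as an iterated swap transforming an arbitrary hitting set $Y$ into one containing $X$), and give the same amortized accounting for the two inner loops to obtain the $\Bo(n\log n)$ and $\Bo(n)$ bounds. Your explicit loop invariant and validity check for $X$ are more detailed than what the paper spells out, but there is no substantive difference in method.
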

\begin{proof}
	Let $\mathcal{I}$ be the input set of intervals.
	Without loss of generality we assume that the domain of the intervals is $\Set{1, 2, \ldots, 2n}$, where $n$ is the number of intervals.
	If this is not the case, we can compress the domain in $\Bo(n \log n)$ time by sorting the start and endpoints and then assigning each one of them a number from 1 to $2n$, preserving the original order.

	We show that the \cref{alg:minimum-interval-hitting-set} computes a minimum hitting set for $I$ in $\Bo(n \log n)$ time.
	Towards this end, let $Y$ be a hitting set of $\mathcal{I}$ that is distinct from $X$.
	Let $e \in X \setminus Y$ be the earliest such element in the domain of $\mathcal{I}$.
	If no such $e$ exists, then the choice of $X$ was clearly optimal and there is nothing to show.

	Otherwise, $e$ was chosen as the endpoint of some interval $I_j$.
	Furthermore, $e$ is the only element in $X$ hitting $I_j$ as all arcs
	added to $X$ afterwards come after the end of $I_j$.
	Hence, there is some $e' \in Y \setminus X$ such that $e'$ also hits $I_j$.
	Further, $e'$ must come before $e$ in the domain of $\mathcal{I}$, as $e$ is the endpoint of $I_j$.
	Let $Y'$ be the elements of $Y$ coming before $e'$ and let $X'$ be the elements of $X$ coming before $e$.

	By assumption, $X' \subseteq Y'$.
	Hence, $Y'$ hits all intervals hit by $X'$ and potentially more.
	Let $\mathcal{I}'$ be the intervals hit by $X'$.
	Since $I_j$ was chosen as the interval with the earliest endpoint in $\mathcal{I} \setminus \mathcal{I}'$, every interval in $\mathcal{I} \setminus \mathcal{I}'$ which is hit by $e'$ is also hit by $e$.
	Thus, the set $Z = (Y \setminus \Set{e'}) \cup \Set{e}$ is a hitting set for $\mathcal{I}$ and $\Abs{Z} \leq \Abs{Y}$.
	By repeatedly applying the argument above, we obtain a hitting set $Z'$ with $X \subseteq Z'$.
	By choosing $Y$ as a minimum hitting set, we obtain that $X$ must be a minimum hitting set as well.

	We now analyze the running time of the algorithm above.
	Sorting the intervals can be done in $\Bo(n \log n)$ time.
The while loops on lines \ref{alg:hitting-set:while-start} and \ref{alg:hitting-set:while-end} iterate at most $n$ times each, as in each iteration the variable $i_{start}$ or $i_{end}$ is incremented by one.
	Further, each iteration takes $\Bo(1)$ time.
	Hence, the total running time is in $\Bo(n \log n)$, as desired.

	If we the set \(\mathcal{I}\) is already sorted both ways during at the input, then we can skip the steps which sort this set,
	obtaining a running time of \(\Bo(n)\) instead.
\end{proof}

In order to effectively use \cref{state:compute-interval-hitting-set} when solving \PathPA/ on acyclic digraphs, we need to be able to efficiently compute the blocking interval set for an ear.

		\begin{lemma}
		\label{state:compute-blocking-interval-graph}
	Let $P$ be a maximal ear in a digraph $D$.
	Then a set $\mathcal{I}$ of blocking intervals for $P$ can be computed in $\Bo(n + m)$ time, where $n = \Abs{\V{D}}$ and $m = \Abs{\A{D}}$.
	Further $\Abs{\mathcal{I}} \leq \Abs{\V{P}} + 2$.
	Finally, two copies of \(\mathcal{I}\) can be outputted simultaneously,
	one sorted according to the starting points of the intervals,
	and one sorted according to the end points.
\end{lemma}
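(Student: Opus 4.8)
The plan is to reduce the construction of a blocking interval set to three almost‑independent tasks: (i) output the at most two $\preceq$‑minimal \emph{deviation} blocking subpaths, (ii) output a linear‑size family of \emph{bypass} blocking subpaths that contains every $\preceq$‑minimal one, and (iii) emit the result twice, once sorted by start point and once by end point (so that \cref{state:compute-interval-hitting-set} can subsequently be applied in linear time). For task (i), observe that every deviation blocking subpath is either the prefix $\Start{P}$‑$v_j$ subpath of $P$ — which exists exactly when $v_j$ has an inneighbour outside $\V{P}$ — or the suffix $v_i$‑$\End{P}$ subpath of $P$ — which exists exactly when $v_i$ has an outneighbour outside $\V{P}$. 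Within each of these two families the blocking subpaths are totally ordered by $\preceq$, so each has a unique $\preceq$‑minimal element: the prefix for the smallest such $j$ and the suffix for the largest such $i$. Both indices are found by a single scan of $\A{D}$, so task (i) runs in $\Bo(n+m)$ and produces at most two intervals.

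Task (ii) is the core. Write $v_1, v_2, \dots$ for the vertices of $P$ in order, and let $D'' \coloneqq D - \V{P}$. A bypass from $v_i$ to $v_j$ with $i < j$ is either the arc $(v_i, v_j)$ with $j \ge i+2$, or a path from $v_i$ to $v_j$ all of whose internal vertices lie in $D''$. If $[v_i, v_j]$ is a $\preceq$‑minimal bypass blocking subpath then $j$ must be the \emph{smallest} index reachable from $v_i$ in this sense, since any bypass from $v_i$ to a $v_{j'}$ with $i < j' < j$ would give a strictly smaller blocking subpath. Hence for each $v_i$ it suffices to output the single interval $v_i$‑$v_{\mathrm{fwd}(i)}$, where $\mathrm{fwd}(i)$ is that smallest reachable index (if any); this yields at most $\Abs{\V{P}}$ bypass intervals, it contains every $\preceq$‑minimal bypass blocking subpath, and combined with task (i) it gives $\Abs{\mathcal{I}} \le \Abs{\V{P}} + 2$. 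To compute all $\mathrm{fwd}(i)$ in $\Bo(n+m)$ total I would first compute, for every vertex $w$ of $D''$, the quantity $\mathrm{exit}(w)$, defined as the smallest index $j$ such that $D$ has a $w$‑$v_j$ path whose internal vertices all lie outside $\V{P}$: bucket the vertices $w'$ of $D''$ by $\min\Set{\mathrm{pos}(v_j) : (w', v_j) \in \A{D},\ v_j \in \V{P}}$ and, for increasing bucket value $p$, run a search from each not‑yet‑visited bucket‑$p$ vertex in the reverse of $D''$, stamping each newly reached vertex with $p$; every vertex and arc of $D''$ is processed once. Then $\mathrm{fwd}(i)$ is the minimum of $\mathrm{pos}(v_j)$ over outarcs $(v_i, v_j)$ with $v_j \in \V{P}$ and $\mathrm{pos}(v_j) \ge i+2$, and of $\mathrm{exit}(w)$ over outarcs $(v_i, w)$ with $w \notin \V{P}$; the latter is a valid candidate because, when $D$ is acyclic, every index reachable from such a $w$ through $D''$ must exceed $i$ (otherwise it would close a cycle together with the $v_i$‑prefix of $P$).

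Task (iii) is immediate: every output interval has both endpoints in $\Set{1, \dots, \Abs{\A{P}}}$, so the at most $\Abs{\V{P}} + 2$ intervals can be bucket‑sorted by start point and, separately, by end point in $\Bo(n+m)$ time. Correctness — that $\mathcal{I}$ is a blocking interval set for $P$ — follows from the two structural observations: each $\preceq$‑minimal deviation subpath is the unique minimum of its family and is emitted in task (i), and each $\preceq$‑minimal bypass subpath equals $v_i$‑$v_{\mathrm{fwd}(i)}$ for its left endpoint $v_i$ and is emitted in task (ii). The step I expect to be the main obstacle is making $\mathrm{fwd}(i)$ computable in linear time \emph{without} acyclicity: a single value $\mathrm{exit}(w)$ no longer suffices, since $w$ may reach indices both at most $i$ and greater than $i$ while only an index greater than $i$ yields a bypass from $v_i$. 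I would handle this with a more careful traversal — for instance condensing $D''$ into strongly connected components and propagating, along the resulting DAG, the relevant ``smallest exit above a threshold'' information, or sweeping the entry indices in increasing order while maintaining a union-find over the positions of $P$ — together with the bookkeeping needed for the case $\Start{P} = \End{P}$ (i.e.\ $P$ a cycle), where a prefix or suffix ``blocking subpath'' can degenerate to a single vertex.
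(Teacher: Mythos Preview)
Your proposal is correct and follows essentially the same approach as the paper: both compute, for every vertex $w$ off $P$, the earliest vertex of $P$ reachable from $w$ through $D - \V{P}$ (your $\mathrm{exit}(w)$ is the paper's $a_{\text{out}}[w]$), and then emit one bypass interval per vertex of $P$ together with at most two deviation intervals. The differences are purely organizational --- the paper uses a single topological-order sweep instead of your bucketed reverse search, keys bypass intervals by their right endpoint (via an array $b[u_j]$) rather than by their left endpoint $\mathrm{fwd}(i)$, and filters the deviation interval slightly more aggressively --- and, as you already noticed, both arguments tacitly rely on $D$ being acyclic (the paper through its use of a topological ordering), which is the only setting in which the lemma is subsequently invoked.
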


\begin{proof}
	Let $\Brace{u_1, u_2, \dots, u_{k}}$ be the vertex sequence of $P$.
	Construct two arrays \(\Var{a_{\text{in}}}\) and \(\Var{a_{\text{out}}}\)
	using the vertices of \(D\) as indices and
	initializing all entries with \(\bot\).
	The value of \(\Var{a_{in}}[u]\) will be the last
	vertex along \(P\) which can reach \(u\), and
	The value of \(\Var{a_{out}}[u]\) will be the first
	vertex along \(P\) which \(u\) can reach.

	Compute the entries of \(\Var{a_{\text{out}}}\) as follows.
	Iterate through the vertices of \(D\)
	following the topological ordering of \(D\) in reverse (that is, starting with the sinks).
	When considering the vertex \(v\),
	let \(i\) be the lowest index
	such that
	\(u_i \in \OutN{v}\) or
	there is some \(w \in \OutN{v}\) with
	\(u_i = \Var{a_{\text{out}}}[w]\).
	If such an \(i\) exists, set \(\Var{a_{\text{out}}}[v] = u_i\).

	The entries of \(\Var{a_{\text{out}}}\) are computed analogously
	by starting at the sources instead of sinks, considering inneighbors instead of outneighbors, and
	taking the highest instead of the lowest index \(i\).

	Set \(\Var{d_{start}} \coloneqq \bot\) and \(\Var{d_{end}} \coloneqq \bot\).
	These variables store whether we found a deviation at the beginning or at the end of \(P\).
	Now construct an array \(\Var{b}\) using the vertices of \(P\) as indices and
	initializing all entries with \(\bot\).
	
	Iterate through the vertices of \(P\) according to their order along \(P\).
	When considering the vertex \(u_i\),
	let \(j\) be the lowest index greater than \(i+1\)
	such that
	\(u_j \in \OutN{v}\) or
	there is some \(w \in \OutN{v} \setminus \V{P}\) with
	\(u_j = \Var{a_{\text{out}}}[w]\).
	If such a \(j\) exists, set \(\Var{b}[u_j] = u_i\).

	Let \(N_{\text{in}} = \InN{u_i} \setminus \V{P}\) and
	let \(N_{\text{out}} = \OutN{u_i} \setminus \V{P}\).
	If \(\Var{d_{start}} = \bot\), \(\Abs{N_{\text{in}}} \neq \emptyset\) and \(\Var{a_{in}}[w] = \bot\) holds for all \(w \in N_{\text{in}}\),
	then set \(\Var{d_{start}} \coloneqq u_i\).
	If \(\Abs{N_{\text{out}}} \neq \emptyset\) and \(\Var{a_{out}}[w] = \bot\) holds for all \(w \in N_{\text{out}}\),
	then set \(\Var{d_{end}} \coloneqq u_i\).

	Because we only consider the neighbors of a vertex, the arrays \(\Var{a_{out}}, \Var{a_{in}}\) and \(\Var{b}\) above can be computed in \(\Bo(n + m)\) time.

	We now output the set \(\mathcal{I}\).
	To output it ordered according to the starting points of the intervals,
	first add the \(\Start{P}\)-\(\Var{d_{start}}\) subpath of \(P\) to \(\mathcal{I}\) if
	\(\Var{d_{start}}\) is not \(\bot\).
	Now iterate through the vertices of \(P\) according to their order along \(P\).

	When considering \(u_i\),
	if there is some \(w \in \OutN{u_i}\) with
	\(\Var{b}[w] = u_i\), then add the \(u_i\)-\(w\) subpath of \(P\) to \(\mathcal{I}\).

	Finally, add the \(\Var{d_{end}}\)-\(\End{P}\) subpath of \(P\) to \(\mathcal{I}\) at the end of the iteration above if
	\(\Var{d_{end}}\) is not \(\bot\).

	To output the set ordered according to the endpoints of the intervals,
	we proceed in an analogous way, but iterating through \(P\) in the reverse order.
	In particular, in both cases we only output the blocking subpath from \(u_i\) to \(u_j\)
	if \(u_i = \Var{b}[u_j]\).
	This guarantees that the same set \(\mathcal{I}\) is outputted in both iterations.
	We now show that the algorithm above is correct.

	For each vertex we add at most one blocking subpath to $\mathcal{I}$ plus at most two additional blocking subpaths for the deviations found.
	Hence, $\mathcal{I}$ contains at most $\Abs{\V{P}} + 2$ intervals.
	To complete the proof, we show that every minimal element of $\preceq$ is in $\mathcal{I}$.

	Assume towards a contradiction that there is some blocking path $B$ which is not the supergraph of any blocking subpath in $\mathcal{I}$.
	We consider two cases.

	\textbf{Case 1:} $B$ is the blocking subpath of some bypass $S$.
	Let $u_i = \Start{B}$ and $u_j = \End{B}$.
	Since \(B\) is not in \(\mathcal{I}\), we know that \(\Var{b}[u_j] \neq u_i\).
	However, since \(B\) is a path from \(u_i\) to \(u_j\), we know that \(\Var{b}[u_j] \neq \bot\).
	Hence, there is some \(u_\ell\) with \(i < \ell < j\)
	such that
	\(\Var{b}[u_j] = u_\ell\).
	However, this implies that the $u_\ell$-$u_j$ subpath $B'$ of $P$ was added to $\mathcal{I}$, and $B'$ is a subgraph of $B$, a contradiction to the assumption that \(B\) is a minimal element of \(\preceq\).

	\textbf{Case 2:} $B$ is the blocking subpath of some deviation $S$.
	Assume without loss of generality that $\End{S} \in \V{P}$.
	The case where $\Start{S} \in \V{P}$ follows analogously.
	Let $u_i = \End{S}$.
	Since $S$ is a deviation, it contains some arc $\Brace{w, u_i}$ which is not in \(P\).

	If \(\Var{d_{start}} \neq u_i\),
	then it must have been set to some \(u_\ell\) where \(\ell < i\).
	However, this implies that the $\Start{P}$-$u_\ell$ subpath $B'$ of $P$ was added to $\mathcal{I}$, and $B'$ is a subgraph of $B$, a contradiction to the assumption that \(B\) is a minimal element of \(\preceq\).

	Since both \textbf{Case 1} and \textbf{Case 2} lead to a contradiction, we conclude that every minimal element of $\preceq$ is in $\mathcal{I}$, as desired.
\end{proof}

Combining the previous results, we can now conclude that \PathPA/ is in \P/ if the input digraph is acyclic.

\begin{theorem}
		\label{state:local-ear-anonymity-in-P-on-DAGs}
		Given an acyclic digraph $D$ and a maximal ear $P$ in $D$, we can compute $\Anon[D](P)$ and find an identifying sequence for $P$ of minimum length in $\Bo(n + m)$ time, where $n = \Abs{\V{D}}$ and $m = \Abs{\A{D}}$.
	\end{theorem}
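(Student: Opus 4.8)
The plan is to reduce the computation of $\Anon[D](P)$ to a minimum hitting set problem on intervals and then assemble the linear-time building blocks developed in this section. Concretely, since $D$ is acyclic the maximal ear $P$ is a path, and \cref{state:hitting-set-implies-identifying-sequence,state:identifying-sequence-implies-hitting-set} together say that a sequence $\bar a$ of arcs of $P$, sorted along $P$, is an identifying sequence for $P$ if and only if $\bar a$ is a hitting set for a blocking interval set $\mathcal{I}$ of $P$. Hence $\Anon[D](P)$ is (essentially) the minimum size of a hitting set for $\mathcal{I}$, and both $\mathcal{I}$ and a minimum hitting set for it can be produced in linear time.

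First I would run the algorithm of \cref{state:compute-blocking-interval-graph} on $D$ and $P$, obtaining a blocking interval set $\mathcal{I}$ with $\Abs{\mathcal{I}} \leq \Abs{\V{P}} + 2 \leq n + 2$, together with two copies of $\mathcal{I}$, one sorted by the starting points and one by the endpoints of the intervals, in $\Bo(n + m)$ time. The domain of these intervals is $\A{P}$ with the natural order along $P$; after indexing the arcs of $P$ by their position this domain is exactly $\{1, 2, \dots, \Abs{\A{P}}\}$, so no domain compression is needed before invoking the interval algorithm.

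Next I would feed the two pre-sorted copies of $\mathcal{I}$ to \cref{alg:minimum-interval-hitting-set}; by the second part of \cref{state:compute-interval-hitting-set} this yields a minimum hitting set $X$ for $\mathcal{I}$ in $\Bo(\Abs{\mathcal{I}}) = \Bo(n)$ time. Inspecting the algorithm, the elements of $X$ are endpoints of intervals added in increasing order of the domain, so $X$ already comes out sorted according to the occurrence of its arcs along $P$; no further sorting is required, and the total running time stays in $\Bo(n + m)$.

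It remains to argue that this is correct, i.e. that $\Anon[D](P) = \max(1, \Abs{X})$ when $P$ has at least one arc (when $P$ has no arc, $\Anon[D](P) = 0$ by \cref{def:ear-ear-anonymity}, a case handled trivially). By \cref{state:identifying-sequence-implies-hitting-set} every identifying sequence for $P$ is a hitting set for $\mathcal{I}$ (any non-minimal interval in $\mathcal{I}$ contains a minimal one, which is hit, hence so is the larger interval), so no identifying sequence is shorter than $\Abs{X}$. Conversely, by \cref{state:hitting-set-implies-identifying-sequence} the set $X$ sorted along $P$ is an identifying sequence for $P$ whenever $\Abs{X} \geq 1$; if $\mathcal{I} = \emptyset$ then $X = \emptyset$, but adding any single arc of $P$ still gives a hitting set, so $P$ admits an identifying sequence of length $1$, and length $0$ is excluded since \cref{def:identifying-sequence} requires $k \geq 1$. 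In all cases we output the sorted $X$ (padded by one arbitrary arc of $P$ if $X$ is empty) as an identifying sequence of minimum length $\Anon[D](P) = \max(1, \Abs{X})$. The only point that genuinely needs care is the running time: a naive pipeline that re-sorts the intervals or re-sorts the output would cost $\Bo(n \log n)$, which is why the simultaneously doubly-sorted output of \cref{state:compute-blocking-interval-graph} and the linear-time mode of \cref{state:compute-interval-hitting-set} are essential; beyond that, the theorem is a direct assembly of the preceding lemmas and has no further obstacle.
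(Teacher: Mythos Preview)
Your proposal is correct and follows essentially the same route as the paper: compute the blocking interval set via \cref{state:compute-blocking-interval-graph}, feed its doubly-sorted output into the linear-time mode of \cref{state:compute-interval-hitting-set}, and invoke \cref{state:hitting-set-implies-identifying-sequence,state:identifying-sequence-implies-hitting-set} for correctness. If anything, you are more careful than the paper in two places: you explicitly treat the degenerate case $\mathcal{I}=\emptyset$ (where the minimum hitting set is empty but \cref{def:identifying-sequence} still demands $k\geq 1$), and you spell out why no extra sorting cost is incurred; the paper's proof silently writes $\Anon[D](P)=\Abs{\bar a}$ without addressing the empty case.
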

	
	\begin{proof}
		Compute a set $\mathcal{I}$ of blocking intervals for $P$ in $\Bo(n + m)$-time using \cref{state:compute-blocking-interval-graph}.
		Note that $\mathcal{I}$ has at most $\Abs{\V{P}} + 2 \in \Bo(n)$ intervals.
		Further, we obtain two copies of \(\mathcal{I}\),
		one sorted according to the starting points of the intervals, and
		one sorted according to the endpoints.
		Then, compute a minimum hitting set $\bar{a}$ for $\mathcal{I}$ in $\Bo(n)$ time using \cref{state:compute-interval-hitting-set}.
		By \cref{state:hitting-set-implies-identifying-sequence}, $\bar{a}$ is an identifying sequence for $P$.
		Further, by \cref{state:identifying-sequence-implies-hitting-set} every identifying sequence for $P$ must be at least as long as $\bar{a}$.
		Hence, $\Anon[D](P) = \Abs{\bar{a}}$.
		Finally, the overall running time is in $\Bo(n + m)$.
	\end{proof}
	
	Further, using similar methods as in \cref{state:local-ear-anonymity-in-P-on-DAGs}, we can also solve \EarIS/ in polynomial time if the input digraph is acyclic.

\begin{theorem}
		\label{state:eis-in-P-on-dags}
		\EarIS/ can be solved in $\Bo(n+m)$ time if the input digraph $D$ is acyclic, where $n = \Abs{\V{D}}$ and $m = \Abs{\A{D}}$.
	\end{theorem}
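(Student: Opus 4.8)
The plan is to reduce \EarIS/ on acyclic digraphs to deciding whether the given arc sequence $\bar{a}$ is a hitting set for a blocking interval set of the input ear $P$, which can be done in linear time once such an interval set is at hand. By \cref{state:hitting-set-implies-identifying-sequence} together with \cref{state:identifying-sequence-implies-hitting-set}, for a maximal ear $P$ in an acyclic digraph the sequence $\bar{a}$ is an identifying sequence for $P$ if and only if it hits every interval of a blocking interval set $\mathcal{I}$ for $P$; which blocking interval set is used is irrelevant here, since hitting the minimal blocking subpaths with respect to $\preceq$ is equivalent to hitting all of them. As a conflicting ear for $(P,\bar{a})$ exists exactly when $\bar{a}$ is not an identifying sequence for $P$, the answer to \EarIS/ is ``yes'' precisely when $\bar{a}$ is not a hitting set for $\mathcal{I}$.

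First I would dispose of the case that $P$ is not maximal, which is cheap: in an acyclic digraph every ear is a path, and a path $P$ is maximal if and only if $\Start{P}$ is a source and $\End{P}$ is a sink (an extension at an end requires a neighbour outside $P$, and a neighbour inside $P$ would close a cycle), which can be tested in $\Bo(n+m)$ time. If $P$ is not maximal, then $\Start{P}$ has an inneighbor $w \notin \V{P}$ or $\End{P}$ has an outneighbor $w \notin \V{P}$; in the first case the ear obtained from $P$ by prepending $w$ (in the second, by appending $w$) visits the arcs of $\bar{a}$ in the prescribed order and is not a subgraph of $P$, so a conflicting ear exists and the algorithm returns ``yes''. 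Otherwise $P$ is maximal, and I would compute a blocking interval set $\mathcal{I}$ for $P$ via \cref{state:compute-blocking-interval-graph} in $\Bo(n+m)$ time, also obtaining $\Abs{\mathcal{I}} \leq \Abs{\V{P}} + 2$ and copies of $\mathcal{I}$ sorted by startpoint and by endpoint.

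It then remains to test, in linear time, whether $\bar{a}$ hits every interval of $\mathcal{I}$. Identifying each arc of $P$ with its position along $P$, the arcs of $\bar{a}$ become an increasing sequence of positions and each interval of $\mathcal{I}$ becomes an interval of positions; one simultaneous sweep over $\bar{a}$ and over $\mathcal{I}$ ordered by startpoint (maintaining a pointer into $\bar{a}$ and, for each interval in turn, checking whether $\bar{a}$ contains a position inside it) decides whether all intervals are hit in $\Bo(\Abs{\V{P}} + \Abs{\bar{a}}) \subseteq \Bo(n)$ time. The algorithm returns ``yes'' if and only if some interval remains unhit; correctness is immediate from the equivalence above, and the total running time is $\Bo(n+m)$.

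I expect the main effort to be in the correctness argument rather than in the algorithm: one must check that the biconditional ``identifying sequence $\Leftrightarrow$ hitting set'' furnished by \cref{state:hitting-set-implies-identifying-sequence,state:identifying-sequence-implies-hitting-set} is applied to exactly the objects handed to \EarIS/, in particular the reduction to the maximal case above and the convention that an identifying sequence has length at least one, and that the linear-time verification is consistent with the concrete representation of $\mathcal{I}$ produced by \cref{state:compute-blocking-interval-graph}. Beyond the ingredients already used for \cref{state:local-ear-anonymity-in-P-on-DAGs}, no new combinatorial insight appears to be needed.
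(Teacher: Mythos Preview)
Your proposal is correct and follows essentially the same approach as the paper: compute a blocking interval set for $P$ via \cref{state:compute-blocking-interval-graph}, invoke the equivalence from \cref{state:hitting-set-implies-identifying-sequence,state:identifying-sequence-implies-hitting-set}, and check in linear time by a two-pointer sweep whether $\bar{a}$ hits every interval. The only differences are cosmetic: the paper sweeps using the endpoint-sorted copy of $\mathcal{I}$ whereas you use the startpoint-sorted copy (both work), and you explicitly dispose of the non-maximal case up front, which the paper leaves implicit in its problem formulation.
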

	
	\begin{proof}
		First, sort the vertices of \(D\) according to their topological ordering.
		Then, compute the blocking interval set $\mathcal{I}$ for $P$ in $\Bo(n+m)$ time using \cref{state:compute-blocking-interval-graph}.
		By \cref{state:identifying-sequence-implies-hitting-set,state:hitting-set-implies-identifying-sequence}, $\bar{a}$ is an identifying sequence for $P$ if, and only if, it is a hitting set for $\mathcal{I}$.

		We can verify if $\bar{a}$ is a hitting set for $\mathcal{I}$ by iterating through \(\mathcal{I}\) and
		\(\bar{a}\) simultaneously as follows.

		First, we use the copy of \(\mathcal{I}\) which is sorted according to the endpoints of the intervals.
		If the arc \(a_i\) being considered lies between the indices of the interval \(I_j\), we hit \(I_j\) and can move to the next interval.
		Otherwise, \(a_i\) does not hit any further intervals and we move to the next arc along \(\bar{a}\).
		If we exhaust \(\bar{a}\) before hitting all elements of \(\mathcal{I}\), then \(\bar{a}\) is not a hitting set for \(\mathcal{I}\).
		Otherwise, it is a hitting set for \(\mathcal{I}\).
		This step can be done in linear time, as the indices of vertices are integers and both \(\mathcal{I}\) and \(\bar{a}\) are already sorted.
		Hence, the overall running time is in $\Bo(n + m)$, as desired.
	\end{proof}

	In order to solve \PAlong/ in polynomial time on DAGs, we compute for each vertex $v$ a number $\Var{anon}[v]$ which is a lower bound
	to the number of arcs required in the ear-identifying sequence of any maximal path containing \(v\).
	We do this by following the topological ordering of the vertices and
	by keeping track of the bypasses and deviations found.

	The algorithm relies on the property of DAGs that, given four distinct vertices $v_1, v_2, v_3, v_4$,
	sorted according to their topological ordering,
	if $v_2$ can reach $v_3$,
	then every $v_1$-$v_2$ path is disjoint from every $v_3$-$v_4$ path.
	This allows us to efficiently compute bypasses using breadth-first search.
	The pseudo-code is provided in \cref{alg:dag-ear-anonymity}.

\begin{varalgorithm}{DAGEarAnonymity}
	\begin{algorithmic}[1]
		\Function{DAGEarAnonymity}{DAG $D$}
		\State $V \gets $ sort $\V{D}$ by the topological ordering of $D$
		\State $\Var{anon} \gets $ empty array over $V$
		\ForAll{$v \in V$}
			\label{alg:dag-ear:outer-for-loop}
			\State $\Var{anon}[v] \gets \max(\{\Var{anon}[u] \ST u \in \InN{v} \} \cup \{0\}$)
			\label{alg:dag-ear:inneighbor-set}
			\If{$\In{v} > 1$ and $\Var{anon}[v] = 0$}
				\State $\Var{anon}[v] \gets 1$
				\label{alg:dag-ear:first-deviation-set}
			\Else
				\ForAll{$u \in \InN{v}$}
					\State $V_u \gets $ vertices which can reach $u$
					\label{alg:dag-ear:V_u}
					\State \( U_u \gets 
						\{ w \in V_u
						\ST
						\text{ there is a \(w\)-\(v\) path in \(D - u\)}\)
					\label{alg:dag-ear:U_u}
					\State \hspace{1.25cm}
						which is internally disjoint from $V_u$
						\}
					\State $\Var{anon}[v] \gets \max(\{\Var{anon}[w] + 1 \ST w \in U_u\} \cup \Set{\Var{anon}[v]})$
					\label{alg:dag-ear:bypass-set}
				\EndFor
				\If{$\Out{v} = 0$ and \(\In{v} > 0\)}
					\State \(P \gets \) shortest path ending on $v$ such that $\In{\Start{P}} > 1$
					\label{alg:dag-ear:sink-path-P}
					\If{no such $P$ exists}
						\State \(P' \gets \) the unique maximal path with \(\End{P'} = v\)
						\label{alg:dag-ear:no-deviation-set}
											\ElseIf{there is some \(u \in \V{P}\) with \(\Out{u} > 1\)}
						\State \(P' \gets \) shortest subpath of $P$ ending on $v$ such that 
						\label{alg:dag-ear:no-incoming-deviation-set}
						\State \hspace{1.0cm} $\Out{\Start{P'}} > 1$
					\Else
						\State \textbf{continue}
					\EndIf
					\ForAll{$u \in \V{P'} \setminus \{\Start{P'}\}$}
						\State $\Var{anon}[u] \gets \Var{anon}[u] + 1$
						\label{alg:dag-ear:last-deviation-set}
					\EndFor
				\EndIf
			\EndIf
		\EndFor
		\State \Return \(\max(\{\Var{anon}[v] \ST v \in \V{D}\})\)
		\EndFunction
		\end{algorithmic}
		\caption{Compute the ear anonymity of a DAG.}
		\label{alg:dag-ear-anonymity}
\end{varalgorithm}

\begin{lemma}
	\label{state:dag-ear-anonymity:at-least}
	At the end of the execution of \cref{alg:dag-ear-anonymity},
	for every \(v \in \V{D}\) there is 
	a path \(P\) which 
	starts at some source of \(D\) and
	ends in \(v\)
	such that,
	for every path \(Q\) starting in \(v\) and
	ending in some sink,
	the path \(R \coloneqq P \cdot Q\) has \(\Anon[D](R) \geq \Var{anon}[v]\).
	Furthermore, there is an ear-identifying sequence \(\bar{a}\) of minimum length for \(R\)
	such that
	at least \(\Var{anon}[v]\) arcs of \(\bar{a}\) lie in \(P\).
\end{lemma}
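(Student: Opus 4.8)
We would prove the statement by induction on the position of $v$ in a fixed topological ordering of $D$, strengthening the hypothesis so that it records an explicit witness: for every $v$ we want to produce a single source-to-$v$ path $P$ (depending on $v$ but not on $Q$) together with a family $B_1,\dots,B_t$, $t=\Var{anon}[v]$, of pairwise arc-disjoint \emph{blocking subpaths} (in the sense of \cref{def:deviation,def:forward-shortcut}) of $R\coloneqq P\cdot Q$, each of which is a subpath of $P$. Given such a family the lemma follows from \cref{state:identifying-sequence-implies-hitting-set}: since $R$ is a maximal path, every identifying sequence $\bar a$ for $R$ is a hitting set for the blocking interval set of $R$, so $\bar a$ contains, for each $i$, an arc of the $\preceq$-minimal blocking subpath nested inside $B_i$; these $t$ arcs are pairwise distinct because the $B_i$ are arc-disjoint, and they all lie in $P$ because the $B_i$ do. Hence $\Anon[D](R)\ge t$, and a shortest identifying sequence for $R$ already has at least $t=\Var{anon}[v]$ of its arcs in $P$. (Having \emph{more} than $t$ such blocking subpaths is harmless, which is convenient when a stale value of $\Var{anon}[u]$ was used for an inneighbour $u$ of $v$.) The base case is a source $v$: here $\Var{anon}[v]=0$, so $P=(v)$ and the empty family work.

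The inductive step follows how \cref{alg:dag-ear-anonymity} assembles $\Var{anon}[v]$: during the iteration of the outer loop (\cref{alg:dag-ear:outer-for-loop}) for $v$ its value becomes the maximum of a propagation term (line~\ref{alg:dag-ear:inneighbor-set}), a merge term (line~\ref{alg:dag-ear:first-deviation-set}) and a bypass term (line~\ref{alg:dag-ear:bypass-set}), and afterwards it may be incremented once for each later sink iteration whose computed path passes through $v$. \textbf{Propagation.} If $\Var{anon}[v]$ inherits $\Var{anon}[u]$ for an inneighbour $u$, take the witness $(P_u;B_1,\dots)$ for $u$ and let $P$ be $P_u$ followed by the arc $(u,v)$; every $v$-to-sink path $Q$ induces the $u$-to-sink path $(u,v)\cdot Q$, so the $B_i\subseteq P_u\subseteq P$ are still blocking subpaths of $R=P_u\cdot((u,v)\cdot Q)$. \textbf{Merge.} If $\In[D]{v}>1$ forces $\Var{anon}[v]=1$, fix any path $P$ to $v$; a second inneighbour $w$ of $v$ (not the predecessor of $v$ on $P$) yields either a deviation or a bypass ending at $v$ whose blocking subpath is a subpath of $P$ ending at $v$, and acyclicity keeps $w$ out of $\V{Q}$; this single blocking subpath suffices. \textbf{Bypass.} If $\Var{anon}[v]=\Var{anon}[w]+1$ for some $w\in U_u$, then $w$ reaches $u$ and there is a $w$-$v$ path $S$ in $D-u$ internally disjoint from the set $V_u$ of vertices reaching $u$; letting $P$ be the witness path to $w$, then a $w$-$u$ path inside $V_u$, then the arc $(u,v)$, acyclicity shows $S$ is a bypass of $R$ whose blocking subpath is the $w$-to-$v$ stretch of $P$, which is arc-disjoint from the blocking subpaths inherited for $w$ (they lie in the prefix of $P$ ending at $w$); this gives $\Var{anon}[w]+1$ arc-disjoint blocking subpaths inside $P$.

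The main obstacle is the sink block (lines~\ref{alg:dag-ear:sink-path-P}--\ref{alg:dag-ear:last-deviation-set}), both because its increments are applied retroactively to vertices already processed and because the extra blocking subpath they account for naturally extends beyond $P$. Concretely, when $v$ is a non-start vertex of the path $P'$ computed for a downstream sink $s$, the branch point $\Start{P'}$ (of indegree greater than one in the case of line~\ref{alg:dag-ear:no-deviation-set}, of outdegree greater than one in the case of line~\ref{alg:dag-ear:no-incoming-deviation-set}) provides a deviation or bypass, but the blocking subpath it induces for $R$ runs from $\Start{P'}$ down to the sink of $Q$ and hence leaves $P$ in general. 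The plan is to re-choose $P$ so that it passes through $\Start{P'}$ and then follows $P'$ up to $v$, to exploit that the vertices of $P'$ strictly between $\Start{P'}$ and $s$ have outdegree one (so any walk's continuation out of $v$ is forced along $P'$), and to add the blocking subpath supplied by $\Start{P'}$ to the family currently stored for $v$; the delicate points are (i) showing the stored family of $v$ remains realizable inside such a re-chosen $P$, and (ii) arguing that a shortest identifying sequence for $R$ can be taken to hit the new blocking subpath at the first arc leaving $\Start{P'}$, which lies in $P$, while keeping arc-disjointness from everything inherited for $\Start{P'}$. Maintaining this simultaneously over all downstream sinks whose path $P'$ contains $v$ is the remaining bookkeeping, and we expect the interaction of the forced-continuation argument with the partial order $\preceq$ on blocking subpaths and with the minimality analysis behind \cref{state:compute-interval-hitting-set} to be where most of the work lies.
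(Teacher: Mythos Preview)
Your overall plan---induction on the topological order with a case split on which line of \cref{alg:dag-ear-anonymity} produced the current value of $\Var{anon}[v]$---is exactly the paper's. The difference is that you strengthen the hypothesis to carry explicit arc-disjoint blocking subpaths inside $P$ and then invoke \cref{state:identifying-sequence-implies-hitting-set}, whereas the paper argues directly about identifying sequences. For the propagation, merge, and bypass cases your argument is correct and essentially equivalent to the paper's.

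There are two concrete problems in your treatment of the sink block. First, you misread line~\ref{alg:dag-ear:no-deviation-set}: that branch is taken precisely when \emph{no} path ending at the sink starts at a vertex of indegree greater than one, so every vertex reaching the sink has indegree at most one and $\Start{P'}$ is a \emph{source}, not a ``branch point of indegree greater than one''. There is no deviation or bypass to harvest there; in fact if $D$ is a bare path there are no blocking subpaths at all, yet $\Var{anon}[v]=1$ by the $k\ge1$ floor in \cref{def:identifying-sequence}. Your blocking-subpath invariant cannot produce the required witness in this case, so you need a separate (easy) argument: the unique maximal path has anonymity~$1$ and any arc of the prefix serves. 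The paper does exactly this in one line.

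Second, your worry about ``all downstream sinks whose path $P'$ contains $v$'' and the anticipated interaction with $\preceq$ and the greedy analysis is largely unnecessary for the branch at line~\ref{alg:dag-ear:no-incoming-deviation-set}. There every vertex of $P'$ strictly after $\Start{P'}$ has outdegree at most one, hence reaches a \emph{unique} sink, so it is incremented by at most one sink and there is exactly one $v$-to-sink path $Q$. The paper's argument is then short: write $u=\Start{P'}$, take the inductive witness $P_u$ for $u$, set $P=P_u\cdot(u,v)$; the outgoing arc from $u$ off $P'$ is a deviation forcing some arc of any minimum identifying sequence into $(u,v)\cdot Q$, and since $Q$ is forced that arc may be taken to be $(u,v)\in\A{P}$. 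This replaces your proposed re-choosing of $P$ and the $\preceq$/hitting-set bookkeeping entirely.
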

\begin{proof}
	We prove the statement by induction on the index of \(v\)
	in the topological ordering of \(D\).

	The statement is clearly true if \(v\) is a source.
	So assume that \(v\) is not a source.
	
	\begin{CaseDistinction}
		\Case{The last change in the value of \(\Var{anon}[v]\) was on line \ref{alg:dag-ear:inneighbor-set}.}
		Then there is some \(u \in \InN{v}\)
		such that
		\(\Var{anon}[u] = \Var{anon}[v]\).

		By the induction hypothesis,
		there is some \(P\) which starts in a source and ends in \(u\)
		satisfying the additional conditions given at the statement.
		Let \(Q\) be some path starting in \(v\) and ending in some sink.
		Let \(P' = P \cdot (u, v)\).

		By assumption, there is some ear-identifying sequence \(\bar{a}\) for \(R \coloneqq P \cdot (u,v) \cdot Q\)
		such that
		at least \(\Var{anon}[u] = \Var{anon}[v]\) arcs of \(\bar{a}\) lie in \(P\) and, hence, in \(P'\).

		\Case{The last change in the value of \(\Var{anon}[v]\) was on line \ref{alg:dag-ear:first-deviation-set}.}
		Then \(\Var{anon}[u] = 0\) for all inneighbors \(u \in \InN{v}\) of \(v\).
		Let \(u_1, u_2\) be two distinct inneighbors of \(v\).
		By the induction hypothesis, there is some path \(P\) starting at some source and
		ending in \(u_1\) satisfying the properties given in the statement.

		Let \(P' = P \cdot (u_1, v)\).
		Let \(Q\) be some path starting in \(v\) and ending in some sink.
				Let \(R' = P' \cdot Q\) and
		let \(\bar{a}\) be some ear-identifying sequence for \(R'\) of minimum length.
		As there are at least two maximal paths (one coming from \(u_1\) and the other from \(u_2\)) visiting all arcs of \(\bar{a}\) lying on \(Q\), some arc of \(P'\) must be on \(\bar{a}\).
								Hence, \(P'\) contains at least \(\Var{anon}[v] = 1\) arcs of \(\bar{a}\), as desired.

		\Case{The last change in the value of \(\Var{anon}[v]\) was on line \ref{alg:dag-ear:bypass-set}.}
		Then there are vertices \(w, u_1\)
		such that
		\(u_1\) is an inneighbor of \(v\),
		\(\Var{anon}[w] = \Var{anon}[v] - 1\),
		there is a \(w-v\) path \(P_1\) which is disjoint from \(u_1\), and
		\(w\) can reach \(u_1\).
				By the induction hypothesis, there is a path \(P\) starting at a source and
		ending in \(w\) which satisfies the conditions given in the statement.

		Let \(P' = P \cdot P_1\)
		Let \(Q\) be some path starting at \(v\) and ending at some sink.
		By assumption, there is an ear-identifying sequence \(\bar{a}\) for \(R \coloneqq P \cdot P_1 \cdot Q\) of minimum length
		such that
		at least \(\Var{anon}[w]\) arcs of \(\bar{a}\) lie on \(P\) and hence on \(P'\).
		However, there are at least two different paths from \(w\) to \(v\) (one visiting \(u_1\) and one not).
		Hence, at least one arc of \(\bar{a}\) must lie on \(P_1\) and hence on \(P'\).

		\Case{The last change in the value of \(\Var{anon}[v]\) was on line \ref{alg:dag-ear:last-deviation-set}.}

		If the path \(P'\) was chosen on line \ref{alg:dag-ear:no-deviation-set},
		then there is exactly one maximal path \(Q\) containing \(v\).
		In particular, every vertex in \(Q\) has indegree at most one.
		Hence, \(\Anon[D](Q) = \Var{anon}[v] = 1\) and
		the statement follows.

		Otherwise, there is a path \(P\) with
		\(\In{\Start{P}} > 1\) which
		ends on some sink \(t\) and
		contains \(v\).
		Furthermore, there is exactly one \(v\)-\(t\) path in \(D\).
		Let \(P'\) be the shortest subpath of \(P\) ending on \(t\)
		such that
		\(\Out{\Start{P'}} > 1\).
		Note that \(P'\) contains \(v\) since
		\(\Var{anon}[v]\) was incremented on line \ref{alg:dag-ear:last-deviation-set}.
		Further, \(\In{u} = 1\) holds for all \(u \in V(P') \setminus \{\Start{P'}\}\).
		In particular, \(\In{v} = 1\)
		and the value of \(\Var{anon}[v]\) was not modified on line
		\ref{alg:dag-ear:first-deviation-set} nor on
		\ref{alg:dag-ear:bypass-set}.

		Let \(u \in \InN{v}\).
		By the induction hypothesis, there is a path
		\(P_u\) ending on \(u\) and
		satisfying the conditions in the statement.
		Note that \(\Var{anon}[v] \leq \Var{anon}[u] + 1\).
		
		If \(u \in \V{P'} \setminus \{\Start{P'}\}\), then \(\Var{anon}[u] = \Var{anon}[v]\) and
		there is exactly one \(u\)-\(v\) path in \(D\).
		The statement follows trivially.

		Otherwise we have \(u = \Start{P'}, \Out{u} > 1\) and \(\In{v} = 1\).
		Further, there is exactly one path \(Q\) starting in \(v\) and ending on some sink, and
		there is exactly one \(u\)-\(v\) path.
		Let \(R = P_u \cdot (u,v) \cdot Q\) and
		let \(w \in \Out{u} \setminus \{v\}\).

		By the induction hypothesis,
		there is an ear-identifying sequence \(\bar{a}\) of minimum length for \(R\) 
		such that
		at least \(\Var{anon}[u]\) arcs of \(\bar{a}\) lie in \(P_u\).
		Since \((u,w)\) is a deviation for \(R\),
		at least one arc of \(\bar{a}\)
		must lie on \((u,v) \cdot Q\).
		As \(Q\) is the only maximal path starting in \(v\),
		we choose \(\bar{a}\)
		such that
		it contains the arc \((u,v)\) and
		no arcs in \(Q\).
		Hence, at least \(\Var{anon}[v] = \Var{anon}[u] + 1\) arcs of \(\bar{a}\)
		lie in \(P_u \cdot (u, v)\), as desired.
	\end{CaseDistinction}
\end{proof}

\begin{lemma}
	\label{state:dag-ear-anonymity:at-most}
	At the end of the execution of \cref{alg:dag-ear-anonymity},
	for every \(v \in \V{D}\) and
	every maximal path \(R \coloneqq P \cdot v \cdot Q\)
	there is an ear-identifying sequence \(\bar{a}\) of minimum length for \(R\) 
	such that
	at most \(\Var{anon}[v]\) arcs of \(\bar{a}\) lie in \(P\).
\end{lemma}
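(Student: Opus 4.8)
The plan is to follow the proof of \cref{state:dag-ear-anonymity:at-least}, again by induction on the position of \(v\) in the topological ordering of \(D\) and with a case distinction on the line of \cref{alg:dag-ear-anonymity} that last modified \(\Var{anon}[v]\), but now producing an identifying sequence that is \emph{short} on the prefix \(P\). First I would reformulate the goal via hitting sets. As \(D\) is acyclic and \(R\) is a maximal ear, \cref{state:hitting-set-implies-identifying-sequence,state:identifying-sequence-implies-hitting-set} show that the minimum-length identifying sequences for \(R\) are precisely the minimum hitting sets for the blocking interval set \(\mathcal{I}_R\) of \(R\), ordered along \(R\). Run the greedy routine of \cref{alg:minimum-interval-hitting-set} on \(\mathcal{I}_R\). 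A blocking subpath of \(R\) is contained in \(P\) exactly when its last vertex lies on \(P\), in which case both its endpoints lie on \(P\) and it is also a blocking subpath of \(P\); every other member of \(\mathcal{I}_R\) has its last arc strictly after \(v\) on \(R\), hence inside \(Q\). Since greedy selects the right endpoints of intervals in nondecreasing order along \(R\), it picks every arc of its output that lies in \(P\) before any arc in \(Q\), and during that initial phase it behaves exactly as it would on the subfamily of \(\mathcal{I}_R\) consisting of the blocking subpaths contained in \(P\); by \cref{state:compute-interval-hitting-set} it then uses an optimal number of arcs for that subfamily. Hence it suffices to exhibit, for every maximal path \(R = P\cdot Q\) through \(v\), a hitting set of size at most \(\Var{anon}[v]\) for the blocking subpaths of \(R\) contained in \(P\).

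For the inductive step, let \(u\) be the predecessor of \(v\) on \(P\) and \(P_u = P - v\). Every blocking subpath of \(R\) contained in \(P\) is either already contained in \(P_u\) or ends at \(v\), and in the latter case its last arc is the arc of \(R\) entering \(v\), so all such blocking subpaths are hit by that one arc. If \(\Var{anon}[v]\) was last set on line~\ref{alg:dag-ear:inneighbor-set}, then \(\Var{anon}[v] = \Var{anon}[u]\), no bypass of \(P\) or deviation of \(R\) ends at \(v\), and the hitting set supplied by the induction hypothesis for \(u\) already works. If it was last set on line~\ref{alg:dag-ear:first-deviation-set}, line~\ref{alg:dag-ear:bypass-set} or line~\ref{alg:dag-ear:last-deviation-set}, then it is one larger than \(\Var{anon}[u]\) --- respectively than \(\Var{anon}[w]\) for the witness \(w \in U_u\) of a bypass with head \(v\) --- and I would take the hitting set from the induction hypothesis for \(u\) (respectively \(w\)), add the arc of \(R\) entering \(v\) (respectively the last arc of the witnessing \(w\)-\(v\) path), and check via \cref{state:compute-blocking-interval-graph} that this hits all blocking subpaths ending at \(v\) while leaving the whole collection hit by an optimal number of arcs.

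The step I expect to be the main obstacle is line~\ref{alg:dag-ear:last-deviation-set}, the ``end deviation'' update: it is the only place where \(\Var{anon}\) is changed for a vertex processed \emph{before} the sink currently under consideration, so \(\Var{anon}\) need not be nondecreasing along \(P\), and one can no longer assume \(\Var{anon}[u] \le \Var{anon}[v]\) for the predecessor \(u\) of \(v\) --- which would make the induction hypothesis for \(u\) unusable in the cases above. The remedy, mirroring the choice ``we choose \(\bar{a}\) such that it contains the arc \((u,v)\) and no arcs in \(Q\)'' in the proof of \cref{state:dag-ear-anonymity:at-least}, is to strengthen the inductive statement so that the witnessing hitting set can always be taken with its last arc equal to the arc entering \(v\); an end deviation then charges its extra arc to the suffix \(Q\) whenever \(v\) lies strictly inside its blocking subpath rather than at the head, so it never inflates the prefix count. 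A second point that needs care is that the breadth-first search on lines~\ref{alg:dag-ear:V_u} and~\ref{alg:dag-ear:U_u} really detects a witness \(w \in U_u\) for \emph{every} minimal bypass of \(P\) with head \(v\); otherwise \(\Var{anon}[v]\) could be too small and the claimed bound would fail. Here one invokes the DAG property quoted before \cref{alg:dag-ear-anonymity}, which ensures that all such bypasses are seen by a single breadth-first search from the inneighbours of \(v\). Combining these observations with the greedy reformulation of the first paragraph yields, for each maximal \(R\) through \(v\), a minimum-length identifying sequence with at most \(\Var{anon}[v]\) arcs in \(P\), as desired.
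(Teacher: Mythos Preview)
Your greedy reformulation in the first paragraph is correct and is a genuinely different, cleaner route than the paper's: the paper never invokes \cref{alg:minimum-interval-hitting-set} here but instead strengthens the induction hypothesis to ``there is a single minimum identifying sequence \(\bar a\) that satisfies the bound simultaneously for \emph{every} prefix \(P_{u'}\) of \(P\)'', and then in the critical case \(\Var{anon}[v]=\Var{anon}[u]\) argues by contradiction, swapping the arc \((u,v)\) in \(\bar a\) for \((v,w)\) (or deleting it when \(v\) is a sink) and exhibiting a conflicting ear. Your observation that the greedy output restricted to \(P\) is exactly the greedy output on \(I_P\) replaces that strengthening for free, since the prefix property of greedy gives the simultaneous bound automatically.

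However, your inductive step has a real gap. In the line-\ref{alg:dag-ear:inneighbor-set} case you assert that ``no bypass of \(P\) or deviation of \(R\) ends at \(v\)''. This is false: even when \(\Var{anon}[v]=\Var{anon}[u]\) there may well be a bypass with head \(v\); all the algorithm guarantees is that for every such bypass with tail \(w\) (more precisely, for the last vertex \(w'\in V_u\) on that bypass, which lies in \(U_u\)) one has \(\Var{anon}[w']+1\le \Var{anon}[v]\). Similarly, in the line-\ref{alg:dag-ear:bypass-set} case you take the hitting set from the induction hypothesis for the witness \(w\) and ``add the last arc of the witnessing \(w\)--\(v\) path''; this does not hit blocking subpaths that lie strictly between \(w\) and \(u\), and the witness \(w\in U_u\) need not even be on \(P\). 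The right argument, which your greedy setup makes available, is uniform: always apply the induction hypothesis to the predecessor \(u\). If \(\Var{anon}[v]>\Var{anon}[u]\) you are done, since greedy picks at most one extra point. If \(\Var{anon}[v]=\Var{anon}[u]\) and greedy on \(I_P\) does need an extra point, that point serves an interval \([w,v]\) unhit by \(\mathrm{greedy}(I_{P_u})\); then all of \(\mathrm{greedy}(I_{P_u})\) lies in \(P_w\), so by the prefix property \(\mathrm{greedy}(I_{P_u})=\mathrm{greedy}(I_{P_w})\), which by the induction hypothesis for \(w\) has size at most \(\Var{anon}[w]\). Monotonicity of \(\Var{anon}\) along the bypass and line~\ref{alg:dag-ear:bypass-set} give \(\Var{anon}[w]+1\le\Var{anon}[v]\), and you are done. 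Finally, your worry that line~\ref{alg:dag-ear:last-deviation-set} breaks monotonicity of \(\Var{anon}\) along \(P\) is unfounded: whenever that line increments \(\Var{anon}[x]\), the vertex \(x\) has out-degree at most one, so its unique successor on \(P\) lies on the same \(P'\) and is incremented as well.
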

\begin{proof}
	We prove the following slightly stronger statement.

	\textbf{Claim:}
	for every \(v \in \V{D}\) and
	every maximal path \(R \coloneqq P \cdot v \cdot Q\)
	there is an ear-identifying sequence \(\bar{a}\) of minimum length for \(R\) 
	such that
	for every \(u \in V(P \cdot v)\),
	at most \(\Var{anon}[u]\) arcs of \(\bar{a}\) lie in \(P_u\),
	where \(P_u\) is the subpath of \(P\)
	starting on \(\Start{P}\) and ending on
	\(u\).

	We prove the statement by induction on the index of \(v\)
	in the topological ordering of \(D\).

	The statement is clearly true if \(v\) is a source, as \(P\) then becomes empty.
	So assume that \(v\) is not a source and
	let \(R \coloneqq P \cdot (u, v) \cdot Q\) be a maximal path,
	where \(u\) is the predecessor of \(v\) along \(R\). 
	Let \(P_v = P \cdot (u, v)\). 

	By the induction hypothesis,
	there is an ear-identifying sequence \(\bar{a}\)
	of minimum length for \(R\)
	which satisfies the condition given in the claim above.
	In particular,
	at most \(\Var{anon}[u]\) arcs of \(\bar{a}\) lie in \(P\).

	If \(\Var{anon}[v] > \Var{anon}[u]\), then clearly
	at most \(\Var{anon}[u] + 1 \leq \Var{anon}[v]\) arcs of \(\bar{a}\) lie in \(P_v\).

	Now assume that \(\Var{anon}[v] = \Var{anon}[u]\).
	If \(\Var{anon}[v] = 0\), then \(\In{u'} \leq 1\) holds for all \(u'\) which can reach \(v\),
	as otherwise we would increment \(\Var{anon}[u']\) on line \ref{alg:dag-ear:first-deviation-set} and
	propagate this through line \ref{alg:dag-ear:inneighbor-set}.
	In particular, there is exactly one path which starts at some source and ends in \(v\).
	Since \(v\) is not a source and
	\(\Var{anon}[v]\) was not incremented on line \ref{alg:dag-ear:last-deviation-set},
	we know that \(v\) can reach some vertex \(w'\) with \(\Var{anon}[w'] \geq 1\).
	This means that \(v\) has at least one outneighbor \(w\).
	Hence, any sequence containing some arc in \(Q\) satisfies the required condition.
	Thus, we can assume that \(\Var{anon}[v] \geq 1\).
				
	If \(\bar{a}\) does not contain \((u, v)\), then there is nothing to show.
	Further, if less than \(\Var{anon}[u]\) arcs of \(\bar{a}\) lie in \(P\),
	then clearly at most \(\Var{anon}[u] = \Var{anon}[v]\) arcs of \(\bar{a}\)
	lie in \(P_v\).

	Let \(P_a\) be the shortest subpath of \(P\) starting at \(\Start{P}\)
	which contains the same arcs of \(\bar{a}\) as \(P\),
	and let \(P_b\) be the rest of \(P\), that is, \(P = P_a \cdot P_b\).
	Observe that, by the induction hypothesis and
	by the assignment on line \ref{alg:dag-ear:inneighbor-set},
	\(\Var{anon}[u'] = \Var{anon}[u]\) holds for all \(u' \in \V{P_b}\).

	If there is some path \(R'\) starting in \(V(P_b)\) and
	ending on \(v\) without using \((u,v)\), then \(\Var{anon}[v]\)
	is incremented on line \ref{alg:dag-ear:bypass-set}, as either \(\Start{R'}\)
	can reach \(u\) or \(u\) can reach the predecessor of \(v\) on \(R'\).
	This contradicts, however, the equality \(\Var{anon}[v] = \Var{anon}[u]\) assumed previously.

	Assume towards a contradiction that,
	if \(\bar{b}\) is an ear-identifying sequence of minimum length for \(R\)
	satisfying the conditions in the claim above, then
	\(\bar{b}\) contains \((u, v)\) and
	exactly \(\Var{anon}[u]\) arcs of \(\bar{b}\) lie in \(P\).
	We consider the following cases.

	\begin{CaseDistinction}
		\Case{\(v\) is not a sink.}
	
		Let \(w\) be the successor of \(v\) in \(R\) and
		let \(\bar{b}_1\) be the arc sequence obtained by
		replacing \((u,v)\) with \((v, w)\) in \(\bar{b}\)
		while preserving the topological ordering of the arcs.

		By assumption, there is a conflicting ear \(R'\) for \((R, \bar{b}_1)\).
		Further, \(R'\) does not contain \((u, v)\)
		as \(D\) is acyclic and \(\bar{b}\) is an ear-identifying sequence for \(R\).

		If \(R'\) is disjoint from \(P\),
		then \(\Var{anon}[u] = 0\), a contradiction.

		Otherwise, 
		\(\Start{R'}\) must lie on \(P_b\), as \(R'\) must visit all arcs of \(\bar{a}\) which are in \(P_a\).
		Since \(R'\) contains \(v\), it also contains a path from \(\V{P_b}\) to \(v\) avoiding \((u,v)\).
		This contradicts the argumentation above before the case distinction.

		\Case{\(v\) is a sink.}

		Let \(\bar{b}_1\) be the subsequence of \(\bar{b}\) obtained
		by removing \((u,v)\) from \(\bar{b}\).
		By assumption, there is a conflicting ear \(R'\) for
		\((R, \bar{b}_1)\).
		In particular, \(R'\) intersects \(P_a\), as at least \(\Var{anon}[u] \geq 1\) arcs of
		\(\bar{a}\) lie in \(P_a\).
		This implies that \(R'\) contains a path from \(\V{P_b}\) to \(v\) which avoids \((u,v)\),
		a contradiction.
	\end{CaseDistinction}
\end{proof}

	\Cref{state:dag-ear-anonymity:at-least,state:dag-ear-anonymity:at-most} essentially prove that
	\cref{alg:dag-ear-anonymity} is correct.
	Hence, we now only need to provide a running-time analysis.

	\begin{theorem}
		\label{state:ear-anonymity-in-p-on-dags}
		\PAlong/ can be solved in $\Bo(m(n + m))$-time if the input digraph $D$ is acyclic,
		where $n = \Abs{\V{D}}$ and $m = \Abs{\A{D}}$.
	\end{theorem}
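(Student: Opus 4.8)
The plan is to analyze the running time of \cref{alg:dag-ear-anonymity}; its correctness has already been established by \cref{state:dag-ear-anonymity:at-least,state:dag-ear-anonymity:at-most} (which together yield $\max_{v}\Var{anon}[v] = \Anon(D)$), so only the running-time bound remains. We may assume $m \ge 1$, since otherwise $D$ has no arcs, $\Anon(D) = 0$, and this is reported in $\Bo(n)$ time. Under this assumption every term of the form $\Bo(n)$ or $\Bo(n+m)$ lies in $\Bo(m(n+m))$, which lets me freely discard such lower-order contributions in what follows.

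Next I would bound the cost of \cref{alg:dag-ear-anonymity} line by line. Computing a topological ordering of $D$ costs $\Bo(n+m)$. Fix one iteration of the outer loop (\cref{alg:dag-ear:outer-for-loop}) at a vertex $v$. The assignment on \cref{alg:dag-ear:inneighbor-set} costs $\Bo(\In{v})$, and over all $v$ these terms sum to $\Bo(m)$. The dominant work is the loop over the inneighbors $u$ of $v$: for each such $u$ the set $V_u$ of vertices that can reach $u$ (\cref{alg:dag-ear:V_u}) is computed by a single reverse breadth-first search from $u$ in $\Bo(n+m)$ time, and the set $U_u$ (\cref{alg:dag-ear:U_u}) is computed by first running a reverse breadth-first search from $v$ that never enters $V_u$ — yielding the set $B$ of vertices that can reach $v$ along a path avoiding $V_u$ — and then scanning the arc set once, placing a vertex $w \in V_u \setminus \{u\}$ into $U_u$ exactly when it has an outneighbor in $B$; both steps take $\Bo(n+m)$ time, and the update on \cref{alg:dag-ear:bypass-set} costs $\Bo(\Abs{U_u}) \subseteq \Bo(n)$. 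Hence this inner loop runs in $\Bo(\In{v}(n+m))$ time for the vertex $v$, and since $\sum_{v}\In{v} = m$, summing over all vertices gives $\Bo(m(n+m))$.

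It remains to charge the sink branch beginning at \cref{alg:dag-ear:sink-path-P}, which is entered only when $\In{v} \ge 1$. The path $P$ there — a shortest path ending at $v$ whose start vertex has indegree larger than one — is obtained from one reverse breadth-first search from $v$, stopping at the first dequeued vertex of indegree larger than one and reconstructing the path from the search-tree parent pointers, in $\Bo(n+m)$ time; selecting the subpath $P'$ on \cref{alg:dag-ear:no-deviation-set,alg:dag-ear:no-incoming-deviation-set} and applying the increments on \cref{alg:dag-ear:last-deviation-set} only traverse $P'$, which has at most $n$ vertices, in $\Bo(n+m)$ time. So this branch costs $\Bo(n+m)$ per vertex, and since it is reached only when $\In{v}\ge 1$ it is already dominated by the $\Bo(\In{v}(n+m))$ charged above for the inneighbor loop. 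Together with the $\Bo(n)$ needed to return the maximum, the overall running time is $\Bo(m(n+m))$.

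The analysis is mostly routine; the step requiring the most care is the linear-time computation of $U_u$, where one must verify that ``$w$ admits a $w$-$v$ path in $D-u$ internally disjoint from $V_u$'' is equivalent to ``$w \in V_u \setminus \{u\}$ and some outneighbor of $w$ can reach $v$ along a path avoiding $V_u$'', so that all of $U_u$ is recovered from one arc scan instead of being searched path by path. Acyclicity of $D$ is exactly what makes this equivalence hold — as recorded just before \cref{alg:dag-ear-anonymity}, if four topologically ordered vertices $v_1,v_2,v_3,v_4$ satisfy that $v_2$ reaches $v_3$ then every $v_1$-$v_2$ path is disjoint from every $v_3$-$v_4$ path — and it is also what guarantees $v \notin V_u$ for an inneighbor $u$ of $v$, so that the reverse search from $v$ avoiding $V_u$ is well defined.
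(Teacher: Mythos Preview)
Your proof is correct and follows essentially the same approach as the paper: both invoke \cref{state:dag-ear-anonymity:at-least,state:dag-ear-anonymity:at-most} for correctness and then bound the running time of \cref{alg:dag-ear-anonymity} by observing that the dominant cost is a constant number of breadth-first searches per arc $(u,v)$. Your write-up is in fact more careful than the paper's on two points: you spell out concretely how $U_u$ can be computed in $\Bo(n+m)$ time via a reverse search from $v$ avoiding $V_u$ followed by an arc scan (the paper merely asserts ``using breadth-first searches''), and you justify why the sink branch does not blow up the bound by charging it against the already-counted $\Bo(\In{v}(n+m))$ term, using that the branch is entered only when $\In{v}\ge 1$.
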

	\begin{proof}
		We run \cref{alg:dag-ear-anonymity} on the input digraph, obtaining a value \(k = \Var{anon}[v]\)
		for some sink \(v \in \V{D}\) for which \(\Var{anon}[v]\) is maximum.
		By \cref{state:dag-ear-anonymity:at-least}, \(k \leq \Anon(D)\).
		By \cref{state:dag-ear-anonymity:at-most}, \(k \geq \Anon(D)\), and so \(k = \Anon(D)\).

		We now analyze the running time of \cref{alg:dag-ear-anonymity}.
		Sorting \(\V{D}\) according to the topological ordering can be done in 
		\(\Bo(n + m)\)-time using standard techniques.

		The \textbf{for}-loop on line \ref{alg:dag-ear:outer-for-loop} is executed exactly \(n\) times.
		We then iterate over all inneighbors of \(v\).
		Hence, each arc is considered a constant number of times.
		For each arc, we compute the sets \(V_u, U_u\) on lines \ref{alg:dag-ear:V_u} and \ref{alg:dag-ear:U_u}
		using breadth-first searches in \(\Bo(n + m)\)-time. 
		For each sink, the paths \(P, P'\) on lines
		\ref{alg:dag-ear:sink-path-P}, \ref{alg:dag-ear:no-deviation-set} and \ref{alg:dag-ear:no-incoming-deviation-set}
		can also be computed with breadth-first search.
		Hence, the running time is dominated by computing a constant number of
		breadth-first searches for each arc, 
		and so it lies in \(\Bo(m(n + m))\).
	\end{proof}

\section{\NP/-hardness}
\label{sec:eis-is-np-hard}

	We consider the problems \EarIS/, \PathPA/ and \PAlong/ in the general setting without any restrictions on the input digraph.
	We show that \EarIS/ is \NP/-hard, providing a reduction from the \NP/-complete problem \kLinkage/, defined below.

	\kLinkageDef

	\kLinkage/ remains \NP/-hard even if $k = 2$. \cite{Fortune:1978}

	\begin{theorem}
		\label{state:ear-id-sequence-is-NP-hard}
		\EarIS/ is \NP/-complete even if $\bar{a}$ has length 3.
	\end{theorem}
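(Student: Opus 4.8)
Membership in \NP/ is already recorded in \cref{state:eis-is-in-np}, so the task is \NP/-hardness, and the plan is to reduce from \kLinkage/ with $k=2$, which remains \NP/-hard on DAGs by \cite{Fortune:1978}. Given a linkage instance $(H,s_1,t_1,s_2,t_2)$ I would first apply the standard normalisations (delete the in-arcs of $s_1,s_2$ and the out-arcs of $t_1,t_2$, making the $s_i$ sources and the $t_i$ sinks) and, by a small additional preprocessing step, arrange that $s_1$ cannot reach $t_2$ and $s_2$ cannot reach $t_1$; this does not affect whether a $2$-linkage exists. From this I build the \EarIS/ instance $(D,P,\bar a)$ with $\bar a=(a_1,a_2,a_3)$ of length three. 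The guiding idea is that a conflicting ear should be exactly a ``chained'' path $w_0\to s_1\rightsquigarrow t_1\to s_2\rightsquigarrow t_2\to z_0$ whose two internal detours realise the linkage. So I add a source $w_0$ with $a_1=(w_0,s_1)$, a sink $z_0$ with $a_3=(t_2,z_0)$, and the chain arc $a_2=(t_1,s_2)$; since the reference ear $P$ must also contain $a_1,a_2,a_3$ in this order, I equip $D$ with two \emph{private detours} built from fresh vertices, one running from $s_1$ to $t_1$ and one from $s_2$ to $t_2$, let $P$ be the $w_0$--$z_0$ path that uses both private detours, and take $\bar a=(a_1,a_2,a_3)$.

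The heart of the construction, and the step I expect to be the main obstacle, is ruling out \emph{spurious} conflicting ears: ears that differ from $P$ only by partially re-using a private detour (e.g.\ taking the private $s_1$--$t_1$ detour but routing the $s_2$--$t_2$ part through $H$, which would merely require an $s_2$--$t_2$ path avoiding a handful of vertices rather than a genuine linkage). I would kill these with a \emph{bottleneck} trick: reroute the out-arcs of $s_1$ through a new vertex $\nu_2$ and the out-arcs of $s_2$ through a new vertex $\nu_1$, so that every $s_1$--$t_1$ path that runs through $H$ passes through $\nu_2$ and every $s_2$--$t_2$ path through $H$ passes through $\nu_1$; and route the private $s_1$--$t_1$ detour of $P$ through $\nu_1$ and the private $s_2$--$t_2$ detour through $\nu_2$. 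Then the private $s_1$--$t_1$ detour shares the vertex $\nu_1$ with every through-$H$ $s_2$--$t_2$ route (and symmetrically), so that as soon as one of the two internal parts of an ear takes its private detour, the other is forced to take its private detour as well, and the ear equals $P$. The reachability condition obtained in the preprocessing is precisely what is needed to exclude the remaining way of sneaking from a private detour back into $H$.

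With this in place the verification is short. Because $w_0$ is a source and $z_0$ a sink, any conflicting ear is a $w_0$--$z_0$ path of the form $(w_0,s_1)\cdot Q^{(1)}\cdot(t_1,s_2)\cdot Q^{(2)}\cdot(t_2,z_0)$ with $Q^{(1)}$ an $s_1$--$t_1$ path, $Q^{(2)}$ an $s_2$--$t_2$ path, and $Q^{(1)},Q^{(2)}$ internally disjoint since an ear is a simple walk; by the previous paragraph neither $Q^{(1)}$ nor $Q^{(2)}$ can be a private detour, hence both run through $H$ and $Q^{(1)},Q^{(2)}$ exhibit a $2$-linkage in $H$. Conversely, a $2$-linkage in $H$ lifts to such a path $Q$, which uses an arc not on $P$ (e.g.\ the first rerouted arc leaving $s_1$) and is therefore a conflicting ear. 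This yields the equivalence ``$(D,P,\bar a)$ admits a conflicting ear iff $H$ admits a $2$-linkage'', so \EarIS/ is \NP/-hard, and with \cref{state:eis-is-in-np} \NP/-complete, even when $\bar a$ has length $3$.
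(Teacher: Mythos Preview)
Your high-level plan matches the paper's: reduce from $2$-\textsc{Linkage} and place three marker arcs $a_1,a_2,a_3$ so that a conflicting ear is forced to thread an $s_1$--$t_1$ path and then a disjoint $s_2$--$t_2$ path through $H$. The gap is the line ``by a small additional preprocessing step, arrange that $s_1$ cannot reach $t_2$ and $s_2$ cannot reach $t_1$; this does not affect whether a $2$-linkage exists.'' This is not one of the standard normalisations, you give no construction, and there is no evident local edit of $H$ that enforces it while preserving the answer (try $H$ on vertices $s_1,s_2,a,t_1,t_2$ with arcs $s_i\to a\to t_j$ for all $i,j$: every arc deletion that kills $s_1$--$t_2$ reachability also kills one of the legitimate terminal paths). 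At best you would need a separate \NP/-hardness proof for $2$-\textsc{Linkage} restricted to instances with this non-reachability property; that is a genuine reduction in its own right, not a preprocessing step, and it is not in \cite{Fortune:1978}. You yourself pinpoint why the assumption is load-bearing: without it, take $Q^{(1)}$ to be the private $s_1$--$t_1$ detour (through $\nu_1$) and $Q^{(2)}=s_2\to\nu_2\to(\text{into }H\text{ via }s_1\text{'s old out-neighbours})\rightsquigarrow t_2$; this is a conflicting ear whenever $s_1$ reaches $t_2$ in $H$, yet it certifies only a single path in $H$, not a linkage. So as written the reduction is incomplete.

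The paper closes exactly this hole with a different gadget that needs no hypothesis on $H$. It takes $P_1=(u_1,v_1,v_2,u_2,v_3,v_4,u_3,v_5,v_6)$ with $\bar a=((v_1,v_2),(v_3,v_4),(v_5,v_6))$ and adds \emph{back-arcs} $(v_2,u_1)$, $(u_3,v_3)$, $(v_4,u_2)$ creating short cycles on $P_1$, together with hooks $u_1\to s_1$, $t_1\to u_3$, $u_2\to s_2$, $t_2\to v_5$ into $H$. A short case analysis exploiting those cycles (for instance, once the ear keeps the arc $(v_2,u_2)$ it is forced to coincide with $P_1$) shows that any conflicting ear must exit to $s_1$, re-enter from $t_1$, exit again to $s_2$, and re-enter from $t_2$, which yields the desired $2$-linkage directly. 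To salvage your variant you would have to either supply the missing hardness argument for the restricted $2$-\textsc{Linkage} problem, or redesign the bottleneck so that the ``sneak'' routes $s_1\to\nu_1\to H$ and $s_2\to\nu_2\to H$ are blocked structurally rather than by assumption, which is precisely what the paper's back-arc gadget achieves.
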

	\begin{proof}
		From \cref{state:eis-is-in-np} we know that \EarIS/ is in \NP/.
		To show that it is \NP/-hard, we provide a reduction as follows.
				Let $\Brace{D, S}$ be a \kLinkage/ instance where $k = \Abs{S} = 2$.
		Construct a digraph $D'$ as follows (see \cref{fig:eis-hardness-reduction} below for an illustration of the construction).

		Start with $D$.
		Add the vertices $\Set{u_1, \ldots, u_3, v_1, \ldots, v_6}$ and the following paths to $D'$, where each path is given by its vertex-sequence:
		\begin{align*}
			P_1 & = \Brace{u_1, v_1, v_2, u_2, v_3, v_4, u_3, v_5, v_6}
			,\\ P_2 & = \Brace{v_2, u_1, s_1}, P_3 = \Brace{t_1, u_3, v_3}, P_4 = \Brace{v_4, u_2, s_2}, P_5 = \Brace{t_2, v_5}.
		\end{align*}
		Set $\bar{a} = \Brace{\Brace{v_1, v_2}, \Brace{v_3,v_4}, \Brace{v_5, v_6}}$.
		Note that $\bar{a}$ is a sequence of arcs of $P_1$, sorted according to their occurrence on $P_1$.
		This completes the construction of the \EarIS/ instance $\Brace{D', P_1, \bar{a}}$.
		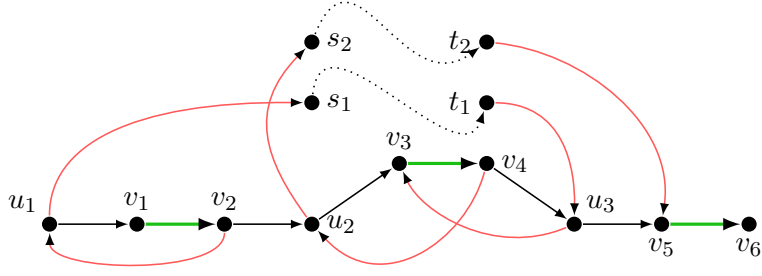
\begin{figure}[H]
		\centering
		\vspace{-1.0cm}
			\begin{tikzpicture}[yscale=0.7]
			\node[circle, label = above left:{{$u_1$}}, line width = 0.896, fill = black]
	(v1) at (0, 0){};
\node[circle, label = above:{{$v_1$}}, line width = 0.896, fill = black]
	(v2) at (1.15, 0){};
\node[circle, label = above:{{$v_2$}}, line width = 0.896, fill = black]
	(v3) at (2.3, 0){};
\node[circle, label = right:{{$u_2$}}, line width = 0.896, fill = black]
	(v4) at (3.45, 0){};
\node[circle, label = above right:{{$u_3$}}, line width = 0.896, fill = black]
	(v6) at (6.9, 0){};
\node[circle, label = above:{{$v_3$}}, line width = 0.896, fill = black]
	(v11) at (4.6, 1.15){};
\node[circle, label = right:{{$v_4$}}, line width = 0.896, fill = black]
	(v12) at (5.75, 1.15){};
\node[circle, label = right:{{$s_1$}}, line width = 0.896, fill = black]
	(v14) at (3.45, 2.3){};
\node[circle, label = left:{{$t_1$}}, line width = 0.896, fill = black]
	(v16) at (5.75, 2.3){};
\node[circle, label = right:{{$s_2$}}, line width = 0.896, fill = black]
	(v18) at (3.45, 3.45){};
\node[circle, label = left:{{$t_2$}}, line width = 0.896, fill = black]
	(v20) at (5.75, 3.45){};
\node[circle, label = below:{{$v_5$}}, line width = 0.896, fill = black]
	(v23) at (8.05, 0){};
\node[circle, label = below:{{$v_6$}}, line width = 0.896, fill = black]
	(v24) at (9.2, 0){};
\path[-latex, line width = 0.592, draw = black]
	(v1) to (v2);
\path[-latex, line width = 1.2, draw = green]
	(v2) to (v3);
\path[-latex, line width = 0.592, draw = black]
	(v3) to (v4);
\path[-latex, line width = 1.2, draw = green]
	(v11) to (v12);
\path[-latex, line width = 0.592, draw = black]
	(v12) to (v6);
\path[-latex, line width = 0.592, draw = black]
	(v6) to (v23);
\path[-latex, line width = 1.2, draw = green]
	(v23) to (v24);
\path[-latex, line width = 0.592, draw = red]
	(v3) .. controls (2.305, -0.987) and (0.01, -0.917) .. (v1);
\path[-latex, line width = 0.592, draw = red]
	(v1) .. controls (-0.006, 1.486) and (1.065, 2.333) .. (v14);
\path[-latex, line width = 0.592, draw = red]
	(v16) .. controls (6.783, 2.293) and (6.912, 1.03) .. (v6);
\path[-latex, line width = 0.592, draw = red]
	(v6) .. controls (5.836, -0.642) and (4.888, 0.242) .. (v11);
\path[-latex, line width = 0.592, draw = red]
	(v12) .. controls (5.471, -0.34) and (4.358, -1.419) .. (v4);
\path[-latex, line width = 0.592, draw = red]
	(v4) .. controls (2.844, 1.191) and (2.58, 2.177) .. (v18);
\path[-latex, line width = 0.592, draw = red]
	(v20) .. controls (7.039, 3.265) and (8.232, 1.87) .. (v23);
\path[-latex, line width = 0.592, draw = black, dotted]
	(v14) .. controls (3.582, 4.188) and (5.402, 0.813) .. (v16);
\path[-latex, line width = 0.592, draw = black, dotted]
	(v18) .. controls (4.022, 5.474) and (4.65, 2.286) .. (v20);
\path[-latex, line width = 0.592, draw = black]
	(v4) to (v11);

		\end{tikzpicture}
		\caption{Gadget for the reduction in the proof of \cref{state:ear-id-sequence-is-NP-hard}.
		Bold, green arcs belong to $\bar{a}$, and the red arcs belong to the conflicting ear $Q$ but not to $P_1$.
	The dotted lines correspond to disjoint paths in $D$.}
		\label{fig:eis-hardness-reduction}
	\end{figure}

		We first show that, if the \kLinkage/ instance $\Brace{D, S}$ is a ``yes''-instance, then so is the $\EarIS/$ instance $\Brace{D', P_1, \bar{a}}$.

		Let $\mathcal{L}$ be a solution for $\Brace{D, S}$.
		Let $L_i$ be the $s_i$-$t_i$ path in $\mathcal{L}$ for $i \in \Set{1,2}$.
		We construct a conflicting ear $Q$ for $\Brace{P_1, \bar{a}}$ as follows.
		We set $Q = \Brace{v_1, v_2} \cdot P_2 \cdot L_1 \cdot P_3 \cdot \Brace{v_3, v_4} \cdot P_4 \cdot L_2 \cdot P_5 \cdot \Brace{v_5, v_6}$.

		Clearly $Q$ is not a subgraph of $P_1$ and $Q$ visits the arcs of $\bar{a}$ in the given order.
		By assumption, $L_1$ and $L_2$ are disjoint paths, and so $Q$ is a path.
		Hence, $Q$ is a conflicting ear for $\Brace{P_1, \bar{a}}$, as desired.

		For the other direction, let $Q$ be a conflicting ear for $\Brace{P_1, \bar{a}}$.
		We first show that $Q$ does not contain $\Brace{v_2, u_2}$.
		Assume towards a contradiction that it does contain $\Brace{v_2, u_2}$.
		Then $Q$ does not contain $\Brace{v_2, u_1}$ or $\Brace{v_4, u_2}$.
		Since $Q$ contains $\Brace{v_3, v_4}$, it must also contain $\Brace{v_4, u_3}$.
		However, $\Brace{u_3, v_3}$ closes a cycle with the arcs $\Brace{v_3, v_4}$ and $\Brace{v_4, u_3}$.
		Hence, $Q$ cannot contain $\Brace{u_3, v_3}$.
		This implies that $Q$ must contain $\Brace{u_2, v_3}$ in order to reach $v_3$.
		Finally, $Q$ must contain $\Brace{u_3, v_5}$ and $\Brace{v_5, v_6}$.
		However, we now have $Q = P_1$, a contradiction to the assumption that $Q$ is a conflicting ear for $\Brace{P_1, \bar{a}}$.

		Since $Q$ contains $\Brace{v_1, v_2}$ but not $\Brace{v_2, u_2}$, it must contain $\Brace{v_2,u_1}$.
		As $\Brace{u_1, v_1}$ closes a cycle, $Q$ does not contain this arc and must contain $\Brace{u_1, s_1}$ instead.
		Because $Q$ contains $\Brace{v_3, v_4}$, it must reach $v_3$ through $u_3$ or through $u_2$.
		However, if $Q$ contains $\Brace{u_2, v_3}$, then it also contains $\Brace{v_4, u_2}$, which closes a cycle.
		Hence, $Q$ does not contain $\Brace{u_2, v_3}$ and must contain $\Brace{u_3, v_3}$ instead.
		As before, $Q$ cannot contain $\Brace{v_4, u_3}$ as this would close a cycle, so $Q$ must contain $\Brace{v_4, u_2}$ and $\Brace{u_2, s_2}$.

		In order to reach $u_3$, $Q$ must contain $\Brace{t_1, u_3}$.
		Since $Q$ contains both $s_1$ and $s_2$, it must also contain $t_1$ and $t_2$,
		as they are the only vertices of $D$ which are reachable by $s_1$ and $s_2$ and have arcs to $P_1$.
		Hence, $Q$ contains $\Brace{t_2, v_5}$.
		Since $v_5$ can only reach $v_6$, $Q$ must visit $(t_1,u_3)$ before visiting $(t_2, v_5)$.
		Further, $Q$ visits $t_1$ before visiting $s_2$ and it visits $s_1$ before visiting $t_1$.
		Hence, $Q$ must visit \(t_2\) after \(s_2\) and must also contain two paths $L_1$ and $L_2$, where $L_1$ is an $s_1$-$t_1$ path in $D$ and $L_2$ is an $s_2$-$t_2$ path in $D$.
		Since $Q$ is an ear, $L_1$ and $L_2$ must be disjoint.
		Thus, $L_1$ and $L_2$ are a solution to the \kLinkage/ instance $\Brace{D, S}$, as desired.
	\end{proof}

	Using \cref{state:ear-id-sequence-is-NP-hard,state:identifying-sequence-implies-hitting-set}, it is simple to show that \PathPA/ is \NP/-hard as well.

		\begin{theorem}
			\label{state:ear-ear-anonymity-is-np-hard}
		\PathPA/ is \NP/-hard even if $k \leq 3$.
	\end{theorem}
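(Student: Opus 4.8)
The plan is to reduce from \EarIS/ restricted to instances in which the sequence has length $3$, which is \NP/-complete by \cref{state:ear-id-sequence-is-NP-hard}. Given such an instance $(D, P, \bar{a})$ with $\bar{a} = (a_1, a_2, a_3)$ and $a_i = (x_i, y_i)$ (so each pair $x_i, y_i$ is consecutive on $P$ and $a_1, a_2, a_3$ occur in this order), I would build a digraph $D''$ from $D$ by adding three fresh vertices $w_1, w_2, w_3$ together with the arcs $(x_i, w_i)$ and $(w_i, y_i)$ for $i \in \{1,2,3\}$, and keep $P'' \coloneqq P$ as the ear. The role of $w_i$ is that $(x_i, w_i, y_i)$ is a bypass for $P''$ whose blocking subpath is exactly the single arc $a_i$, so that $\{a_1\}, \{a_2\}, \{a_3\}$ are minimal elements of the order $\preceq$ and hence belong to every blocking interval set $\mathcal{I}$ for $P''$.

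From here the verification splits into three routine checks. First, $P''$ remains a maximal ear in $D''$, since each $w_i$ has a single in-neighbour and a single out-neighbour, so no ear properly containing $P$ can route through a $w_i$. Second, by \cref{state:identifying-sequence-implies-hitting-set} every identifying sequence for $P''$ in $D''$ is a hitting set for $\mathcal{I}$, so it must meet each singleton $\{a_i\}$; thus it contains all of $a_1, a_2, a_3$, has length at least $3$, and equals $\bar{a}$ whenever its length is exactly $3$. Third, no conflicting ear for $(P'', \bar{a})$ in $D''$ can visit a vertex $w_i$, because entering $w_i$ forces the detour $x_i \to w_i \to y_i$, which skips the arc $a_i \in \bar{a}$ and contradicts that the ear visits $\bar{a}$ in order; hence the conflicting ears for $(P'', \bar{a})$ in $D''$ are precisely those for $(P, \bar{a})$ in $D$.

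Combining these, $\Anon[D''](P'') \le 3$ holds exactly when $\bar{a}$ is an identifying sequence for $P$ in $D$, and $\Anon[D''](P'') \ge 4$ exactly when a conflicting ear for $(P, \bar{a})$ exists; since the latter decision is \NP/-complete by \cref{state:ear-id-sequence-is-NP-hard}, this construction (with the target parameter fixed to the constant $3$, or to $4$ depending on the exact threshold convention used in the definition of \PathPA/) is a valid reduction and yields the claimed hardness of \PathPA/, with $k \le 3$ in the produced instances. I expect the only delicate point to be the bookkeeping in the three checks above --- in particular, confirming that the gadget contributes only the three singleton minimal blocking intervals and creates no further conflicting ears --- rather than any substantially new idea.
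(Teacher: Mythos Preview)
Your proposal is correct and follows essentially the same construction and argument as the paper: add, for each arc $a_i=(x_i,y_i)$ of $\bar a$, a fresh two-arc bypass $x_i\to w_i\to y_i$, then argue via \cref{state:identifying-sequence-implies-hitting-set} that every identifying sequence of length at most $3$ must equal $\bar a$, and that the new vertices cannot appear in any conflicting ear for $(P,\bar a)$. The paper uses the same gadget (with vertices named $u_i$ instead of $w_i$) and the same two observations; your added remarks about maximality of $P''$ and the threshold convention are harmless extra care.
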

	
	\begin{proof}
		We provide a reduction from \EarIS/, which is \NP/-hard even if the input sequence has length 3.
		Let $\Brace{D, P, \bar{a}}$ be an \EarIS/ instance where $\Abs{\bar{a}} = 3$.
		Let $\bar{a} = \Brace{\Brace{v_1, v_2}, \Brace{v_3, v_4}, \Brace{v_5, v_6}}$.
		We construct an \PathPA/ instance $\Brace{D', P', k}$ as follows.

		Set $k \Setv 3$ and $P' \Setv P$.
		Construct $D'$ by first adding the vertices $u_1, u_2$ and $u_3$ to $D$.
		Then, add the paths $Q_1, Q_2$ and $Q_3$, where $Q_1 = \Brace{v_1, u_1, v_2}$, $Q_2 = \Brace{v_3, u_2, v_4}$ and $Q_3 = \Brace{v_5, u_3, v_6}$, given by their vertex-sequences.
		This completes the construction.

		The paths $Q_1, Q_2$ and $Q_3$ are each a bypass for $P'$.
		Furthermore, $\Brace{v_{1}, v_{2}}, \Brace{v_{3}, v_{4}}, \Brace{v_{5}, v_{6}}$ are the blocking subpaths for $Q_1, Q_2, Q_3$, respectively.
		By \cref{state:identifying-sequence-implies-hitting-set}, every identifying sequence for $P'$ must contain the arcs of $\bar{a}$.

		As the arcs in $\bar{a}$ contain all starting and endpoints of $Q_1, Q_2$ and $Q_3$, no conflicting ear for $\Brace{P', \bar{a}}$ in $D'$ can contain $Q_1, Q_2$ or $Q_3$.
		Thus, an ear $Q$ in $D'$ is a conflicting ear for $\Brace{P', \bar{a}}$ if, and only if, $Q$ is a conflicting ear for $\Brace{P, \bar{a}}$.
		Hence, there is an identifying sequence for $P'$ of length at most 3 in $D'$ if, and only if, $\bar{a}$ is an identifying sequence for $P$ in $D$, as desired.
	\end{proof}

	For the next hardness result, we provide a reduction from the following \NP/-complete problem. \cite{Fortune:1978}

\UVWPathDef

\begin{theorem}
	\label{thm:path anonymity is NP-hard}
	\PAlong/ is \NP/-hard.
\end{theorem}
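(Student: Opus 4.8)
The plan is to reduce from the problem stated above, which asks, given a digraph $D$ and three vertices $u,v,w$, whether $D$ contains a path from $u$ to $w$ that passes through $v$ -- equivalently, whether $D$ has internally vertex-disjoint paths $u\rightsquigarrow v$ and $v\rightsquigarrow w$. Given such an instance $(D,u,v,w)$, I would build a digraph $D'$ and an integer $k$ with $\Anon(D')\geq k$ if and only if the instance is a ``yes''. The construction reuses the gadget of \cref{state:ear-id-sequence-is-NP-hard}: first split $v$ into two vertices $v^-,v^+$ joined by an arc $(v^-,v^+)$, redirecting the in-arcs of $v$ into $v^-$ and the out-arcs of $v$ out of $v^+$; in the resulting digraph $\widehat D$, internally disjoint paths $u\rightsquigarrow v^-$ and $v^+\rightsquigarrow w$ exist exactly when $D$ has a $u$-$w$ path through $v$. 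Then plug $\widehat D$ into the gadget of \cref{state:ear-id-sequence-is-NP-hard} with terminal pairs $(s_1,t_1)=(u,v^-)$ and $(s_2,t_2)=(v^+,w)$, producing a distinguished maximal path $P_1$ and a length-three arc sequence $\bar a$ such that a conflicting ear for $(P_1,\bar a)$ exists iff the instance is a ``yes''. As in the proof of \cref{state:ear-ear-anonymity-is-np-hard}, I would add three further bypasses of $P_1$ whose blocking subpaths are exactly the three arcs of $\bar a$, forcing every identifying sequence for $P_1$ to contain $\bar a$. Finally, append to $P_1$ a chain of $r\coloneqq\Abs{\A{D}}+2$ pairwise arc-disjoint ``diamond'' bypasses (each a length-two path $(x_i,y_i,x_{i+1})$ over a fresh arc $(x_i,x_{i+1})$ of the path), obtaining a maximal path $P$, and set $k\coloneqq r+4$.

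For correctness, the lower bound is the easy direction: the $r$ diamond arcs and the three arcs of $\bar a$ are pairwise arc-disjoint blocking subpaths of $P$, so by \cref{state:identifying-sequence-implies-hitting-set} every identifying sequence for $P$ has length at least $r+3$; and in the ``yes'' case a conflicting ear for $(P_1,\bar a)$ (which follows $P$ along the diamond tail) witnesses that the length-$(r+3)$ sequence $\bar a$ followed by the diamond arcs is not identifying, so $\Anon[D'](P)\geq r+4=k$. For the converse I need two things in the ``no'' case: that $\Anon[D'](P)\leq r+3$, and that no \emph{other} maximal ear of $D'$ reaches anonymity $k$. For the first, I would exhibit the sequence $\bar a$ followed by the $r$ diamond arcs and argue directly that it is identifying: any conflicting ear is forced through every diamond along $P$ (a diamond cannot be shortcut once its spine arc is in the sequence), and any remaining deviation from $P_1$ would, by exactly the cycle-avoidance argument in the proof of \cref{state:ear-id-sequence-is-NP-hard} (using one-wayness of the gadget: once the ear takes a detour at $u_1$, the re-routing forces the disjoint second detour and hence a $u$-$v$-$w$ path in $D$), contradict the assumption that no such path exists. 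For the second, since a maximal ear confined to $\widehat D$ has at most $\Abs{\A{D}}+1<r\leq k-1$ arcs, and since any walk that leaves $P$ into the gadget is forced by the cycle structure either to return at once or to complete the entire detour, no maximal ear other than $P$ can collect more than $r+3$ pairwise-disjoint blocking subpaths without completing the $\widehat D$-detour, which again requires a $u$-$v$-$w$ path. Hence $\Anon(D')=k-1$ in the ``no'' case and $\Anon(D')=k$ in the ``yes'' case, which establishes \NP/-hardness (membership up to $\Sigma_3^p$ being already recorded in \cref{state:ear-anonymity-in-sigma-3-p}).

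The main obstacle is the upper-bound direction, and specifically the need to bound the anonymity of \emph{all} maximal ears of $D'$ rather than just of the distinguished path $P$: because \PAlong/ maximizes over ears and the input digraph $D$ is arbitrary and may itself contain cycles and high-anonymity ears, the simple constructions of \cref{state:ear-id-sequence-is-NP-hard,state:ear-ear-anonymity-is-np-hard} do not transfer, and this is precisely what the ``padding'' count $r>\Abs{\A{D}}$ together with the one-wayness of the gadget is for. A secondary difficulty is that, since $D'$ is not acyclic, the clean equivalence between identifying sequences and interval hitting sets from \cref{state:hitting-set-implies-identifying-sequence} is unavailable, so the candidate identifying sequence in the ``no'' case must be verified by hand, essentially by replaying the cycle-avoidance analysis of \cref{state:ear-id-sequence-is-NP-hard} inside $D'$; getting the split-vertex gadget, the $\bar a$-forcing bypasses, and the diamond padding to interact without spurious short identifying sequences or spurious high-anonymity ears is the delicate bookkeeping.
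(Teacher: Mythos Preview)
Your reduction has a genuine gap in the ``no'' direction: there are maximal ears of $D'$ that enter $\widehat D$, pick up many blocking subpaths there, and then still traverse the entire diamond tail, without ever being forced through both halves of the \cref{state:ear-id-sequence-is-NP-hard} detour. Concretely, consider the maximal ear
\[
R'' \;=\; \bigl(v_1,\,v_2,\,u_1,\,s_1,\,[\text{a }u\text{-}v^- \text{ path in }\widehat D],\,t_1,\,u_3,\,v_5,\,v_6,\,\text{diamond tail}\bigr),
\]
which uses the arcs $(v_2,u_1)$, $(u_1,s_1)$, $(t_1,u_3)$ from the gadget and the arc $(u_3,v_5)$ of $P_1$. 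This ear exists whenever $D$ has any $u$-$v$ path at all, regardless of whether a $u$-$w$ path through $v$ exists; it performs only \emph{half} the detour and then rejoins $P_1$ at $u_3$. If $D$ is, say, a ``ladder'' from $u$ to $v$ with $\Theta(m)$ arc-disjoint length-one bypasses and $w$ unreachable from $v$, then $(D,u,v,w)$ is a no-instance, yet $R''$ has at least $(m{-}1)$ disjoint blocking subpaths inside $\widehat D$ in addition to the $r$ diamond ones and the two bypasses at $(v_1,v_2)$ and $(v_5,v_6)$. With $r=\Abs{\A{D}}+2=\Theta(m)$ and $k=r+4$, \cref{state:identifying-sequence-implies-hitting-set} gives $\Anon[D'](R'')\geq r+m+1>k$, so $D'$ would wrongly be a yes-instance. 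Your sentence ``any walk that leaves $P$ into the gadget is forced by the cycle structure either to return at once or to complete the entire detour'' is precisely what fails: the gadget of \cref{state:ear-id-sequence-is-NP-hard} constrains \emph{conflicting ears for $(P_1,\bar a)$}, not arbitrary maximal ears, and an ear that never needs to revisit $v_3,v_4$ can short-circuit at $u_3$. Enlarging $r$ does not help, since $R''$ inherits the whole diamond tail.

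The paper's proof sidesteps this by abandoning the \cref{state:ear-id-sequence-is-NP-hard} gadget and instead attaching \emph{three} padding gadgets $H_1,H_2,H_3$ of size $\Theta(n)$ at $u$, at the split of $v$, and at $w$, and setting $k=6n$. The point is a pure arc-count: any ear that misses one $H_i$ has at most $n+2\cdot 2n=5n<k$ arcs and hence anonymity below $k$, while an ear meeting all three is forced (by where the $H_i$ attach) to contain a $u$-$v_1$ and a $v_2$-$w$ path in $D$. Your single appended tail cannot play this role, because it is reachable from $\widehat D$ through the gadget without the full two-linkage being realised.
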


\begin{proof}
	Let $(D, u, v, w)$ be a $\UVWPath/$ instance.
	Let $n = |V(D)|$.
	We assume, without loss of generality, that $\In{u} = 0$ and $\Out{w} = 0$, as no $u$-$w$ path can contain incoming arcs of $u$ or outgoing arcs of $w$.
	We construct an $\PAlong/$ instance $(D', k)$ as follows.

	Start by setting $D' \Setv D$.
	Then add the digraphs $H_1, H_2$ and $H_3$ to $D'$, where $H_i$ is defined as (see \cref{fig:np-hard Hi} for an illustration of $H_1$)
	\begin{align*}
		V(H_i) & = \{ u_{i,0}, u_{i,1}, \dots u_{i,4n} \},\\
		A(H_i) & = \{ (u_{i,cj}, u_{i,c(j + 1)}) \mid c \in \{1,2\} \text{ and } 0 \leq j \leq 4n/c - 1 \}.
	\end{align*}
	Add the arcs $(u_{1,4n}, u)$ and $(w, u_{3,0})$.
	Remove the vertex $v$.
	Add the vertices $v_1, v_2$, together with the arcs $\InN[D]{v} \times \{v_1\}, \{v_2\} \times \OutN[D]{v}$, $(v_1, u_{2,0})$ and $(u_{2,4n}, v_2)$.
	Now set $k \Setv 6n$.

	We show that there is an $u$-$w$ path in $D$ going through $v$ if and only if $\Anon(D') \geq k$.

	Assume there is an $u$-$w$ path $P$ visiting $v$ in $D$.
	Partition $P$ as $P = P_1 \cdot v \cdot P_2$.
	For each $1 \leq i \leq 3$, let $Q_i$ denote the path $\Brace{u_{i,0}, u_{i,2}, \ldots u_{i,4n}}$ given by its vertex sequence.
	Set $Q \Setv Q_1 \cdot u \cdot P_1 \cdot v_1 \cdot Q_2 \cdot v_2 \cdot P_2 \cdot w \cdot Q_3$.
	Note that $Q$ is maximal as $Q_1$ starts in a source and $Q_3$ ends in a sink.
	Since every arc of $Q_1$, $Q_2$ and $Q_3$ is the blocking subpath of some bypass for $Q_1, Q_2$ or $Q_3$, any minimum hitting set of any set of hitting intervals for $Q$ must contain at least $6n$ arcs.
	Hence, from \cref{state:identifying-sequence-implies-hitting-set} we obtain that $\Anon[D](Q) \geq 6n = k$ and so $\Anon(D) \geq k$, as desired.

	Now assume there is some maximal path $Q$ in $D'$ with $\Anon[D](Q) \geq k$.
	The path $Q$ cannot be completely contained inside $D$, since $\Anon[D](Q) \leq \Abs{\V{Q}}$ and $\Abs{\V{D}} = n < k = 6n$.
	If $Q$ contains some vertex of $H_1$, then it necessarily contains both $u$ and $u_{1,0}$, as otherwise $Q$ would not be maximal.
	Similarly, $Q$ contains $v_1, v_2, u_{2,0}$ and $u_{2,4n}$ if it contains some vertex of $H_2$, and $Q$ contains $w$ and $u_{3,4n}$ if it contains some vertex of $H_3$.

	If $Q$ contains vertices of $H_1, H_2$ and $H_3$, we claim that there is a $u$-$w$ path in $D$ containing $v$.
	As $u_{2,0}$ is a source and $u_{3,4n}$ is a sink, $Q$ contains a $u$-$w$ path.
	Further, as the paths in $H_2$ are directed from $v_1$ to $v_2$, $Q$ must visit $v_1$ before $v_2$.
	Let $P$ be the subpath of $Q$ from $u$ to $w$.
	By replacing the subpath of $P$ between $v_1$ and $v_2$ with $v$, we obtain a $u$-$w$ path in $D$ that visits $v$.

	Otherwise, assume towards a contradiction that $Q$ avoids one of the digraphs $H_1$, $H_2$ or $H_3$.
	Let $\bar{a}$ be a minimum identifying sequence for $Q$.
	At most $n$ of the arcs of $\bar{a}$ are in $D$, since $D$ only has $n$ vertices.
	Further, $\Abs{\A{Q} \cap \A{H_i}} \leq 2n$ as each $H_i$ is acyclic.
	As $Q$ avoids one of these digraphs by assumption, we conclude that $Q$ and hence $\bar{a}$ contains at most $5n$ arcs, a contradiction to the initial assumption that $\Anon[D](Q) \geq 6n$.
\end{proof}

\begin{figure}[t]
	\centering
	\begin{tikzpicture}[xscale=0.75]
		\node[circle, label = below:{{$u_{1,3}$}}, line width = 1.2, fill = black, draw = black]
	(v1) at (3.45, 0){};
\node[circle, label = below:{{$u_{1,5}$}}, line width = 1.2, fill = black, draw = black]
	(v2) at (5.75, 0){};
\node[circle, label = below:{{$u_{1,7}$}}, line width = 1.2, fill = black, draw = black]
	(v3) at (8.05, 0){};
\node[circle, label = below:{{$u_{1,9}$}}, line width = 1.2, fill = black, draw = black]
	(v4) at (10.35, 0){};
\node[circle, label = below:{{$u_{1,11}$}}, line width = 1.2, fill = black, draw = black]
	(v5) at (12.65, 0){};
\node[circle, label = below:{{$u_{1,12}$}}, line width = 1.2, fill = black, draw = black]
	(v6) at (13.8, 0){};
\node[circle, label = below:{{$u_{1,1}$}}, line width = 1.2, fill = black, draw = black]
	(v7) at (1.15, 0){};
\node[circle, label = below:{{$u_{1,2}$}}, line width = 1.2, fill = black, draw = black]
	(v8) at (2.3, 0){};
\node[circle, label = below:{{$u_{1,4}$}}, line width = 1.2, fill = black, draw = black]
	(v9) at (4.6, 0){};
\node[circle, label = below:{{$u_{1,6}$}}, line width = 1.2, fill = black, draw = black]
	(v10) at (6.9, 0){};
\node[circle, label = below:{{$u_{1,8}$}}, line width = 1.2, fill = black, draw = black]
	(v11) at (9.2, 0){};
\node[circle, label = below:{{$u_{1,10}$}}, line width = 1.2, fill = black, draw = black]
	(v12) at (11.5, 0){};
\node[circle, label = below:{{$u_{1,0}$}}, line width = 1.2, fill = black, draw = black]
	(v13) at (0, 0){};
\node[]
	(v45) at (13.8, 1.15) [align = left]{\color{green}{$Q_1$}};
\path[-latex, line width = 0.6, draw = black]
	(v13) to (v7);
\path[-latex, line width = 0.6, draw = black]
	(v7) to (v8);
\path[-latex, line width = 0.6, draw = black]
	(v8) to (v1);
\path[-latex, line width = 0.6, draw = black]
	(v1) to (v9);
\path[-latex, line width = 0.6, draw = black]
	(v9) to (v2);
\path[-latex, line width = 0.6, draw = black]
	(v2) to (v10);
\path[-latex, line width = 0.6, draw = black]
	(v10) to (v3);
\path[-latex, line width = 0.6, draw = black]
	(v3) to (v11);
\path[-latex, line width = 0.6, draw = black]
	(v11) to (v4);
\path[-latex, line width = 0.6, draw = black]
	(v4) to (v12);
\path[-latex, line width = 0.6, draw = black]
	(v12) to (v5);
\path[-latex, line width = 0.6, draw = black]
	(v5) to (v6);
\path[-latex, line width = 0.6, draw = green]
	(v13) .. controls (0.673, 0.761) and (1.875, 0.539) .. (v8);
\path[-latex, line width = 0.6, draw = green]
	(v8) .. controls (2.949, 0.665) and (4.144, 0.535) .. (v9);
\path[-latex, line width = 0.6, draw = green]
	(v9) .. controls (5.46, 0.628) and (6.15, 0.573) .. (v10);
\path[-latex, line width = 0.6, draw = green]
	(v10) .. controls (7.512, 0.647) and (8.702, 0.605) .. (v11);
\path[-latex, line width = 0.6, draw = green]
	(v11) .. controls (9.613, 0.559) and (11.03, 0.608) .. (v12);
\path[-latex, line width = 0.6, draw = green]
	(v12) .. controls (12.127, 0.662) and (13.314, 0.593) .. (v6);

	\end{tikzpicture}
	\caption{The digraph $H_1$ for $n = 3$, used in the reduction of the proof of \cref{thm:path anonymity is NP-hard}.
	Every $u_{1,0}$-$u_{1,12}$ path in $H_1$ contains $2n = 6$ arc-disjoint blocking subpaths.}
	\label{fig:np-hard Hi}
\end{figure}
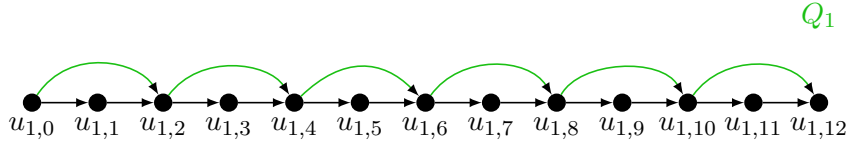

It remains open whether \DPA/ is complete for \NP/,
but in the next section we show that \PathPA/ is $\Sigma_2^p$-complete.

\section{$\Sigma_2^p$-hardness for \PathPA/}
\label{sec:sigma-2-p-hardness-pathpa}

In order to show that \PathPA/ is $\Sigma_2^p$-hard, we define two auxiliary problems and
show that each of them is $\Sigma_2^p$-hard.
Using intermediate problems helps us reduce the complexity of our final reduction.
We provide a reduction from \ShortestImplicantCore/, defined below.

\begin{definition}
	\label{def:implicant}
	Let $\varphi$ be a propositional formula and let $I$ be a set of literals.
	We say that $I$ is an \emph{implicant} for $\varphi$ if $(\bigwedge_{\ell \in I} \ell) \rightarrow \varphi$ is a tautology (that is, it evaluates to true under every assignment of the variables).
\end{definition}

\ShortestImplicantCoreDef

\begin{lemma}[{\cite[Theorem 3]{Umans01}}]
	\label{state:shortest implicant core is sigma_2^p-complete}
	\ShortestImplicantCore/ is $\Sigma_2^p$-complete.
\end{lemma}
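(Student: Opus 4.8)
The plan is simply to invoke \cite[Theorem~3]{Umans01}, since the statement is established there; but let me describe how such a proof goes. Recall that an instance of \ShortestImplicantCore/ consists of a propositional formula $\varphi$, an implicant $I$ for $\varphi$, and an integer $k$, and asks whether some $I' \subseteq I$ with $\Abs{I'} \leq k$ is again an implicant for $\varphi$. So the proof has the usual two halves: membership in $\Sigma_2^p$ and $\Sigma_2^p$-hardness.

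First I would verify membership in $\Sigma_2^p$, which is routine. Existentially guess a subset $I' \subseteq I$ with $\Abs{I'} \leq k$, and then check that $(\bigwedge_{\ell \in I'} \ell) \rightarrow \varphi$ is a tautology. The latter test is a \coNP/ predicate --- a universal quantification over the truth assignments to the variables occurring in $\varphi$ --- so the whole problem is of the form $\exists\,\forall$, hence in $\Sigma_2^p$.

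For $\Sigma_2^p$-hardness I would reduce from quantified satisfiability with one quantifier alternation, say from deciding the truth of a sentence $\exists \bar x\, \forall \bar y\, \psi(\bar x, \bar y)$ with $\psi$ in $3$-CNF. The construction produces a formula $\varphi$ over the variables $\bar y$ together with fresh auxiliary variables, and an implicant $I$ that contains, for each $x_i$, two literals encoding the two truth values of $x_i$. The budget $k$ is tuned so that any sub-implicant $I' \subseteq I$ of size at most $k$ is forced to retain exactly one literal per $x_i$, hence encodes a total assignment $\alpha$ to $\bar x$; consistency gadgets inside $\varphi$ ensure that keeping both or neither literal of some $x_i$ either destroys the implicant property or exceeds the budget. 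Finally $\varphi$ is arranged so that $(\bigwedge_{\ell \in I'} \ell) \rightarrow \varphi$ is a tautology precisely when $\forall \bar y\, \psi(\alpha, \bar y)$ holds, which gives the equivalence between a short sub-implicant and truth of the quantified sentence.

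I expect the main obstacle to be the design of $\varphi$: it must simultaneously (i) encode $\psi$ so that the tautology test over the $\bar y$-assignments faithfully simulates the universal quantifier, and (ii) incorporate gadgets so that the size bound rigidly pins down the existential choice of $\bar x$ and forbids partial or inconsistent selections. These two requirements interact, so the heart of the argument is a careful case analysis of how a partial assignment given by $I'$ meets the clauses of $\psi$ and the consistency gadgets. Since this is exactly Theorem~3 of \cite{Umans01}, in the paper it suffices to cite it.
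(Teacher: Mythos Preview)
Your proposal is correct and matches the paper's treatment: the paper gives no proof of this lemma at all, simply stating it with the citation \cite[Theorem~3]{Umans01}. Your plan to invoke that reference is exactly what the paper does; the sketch you add of membership and the reduction from $\exists\forall$-QBF is extra detail not present in the paper.
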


The first auxiliary problem is about linkages in a digraph.
We want to decide if there is a subset of the terminals which cannot be linked, whereas we are forced to always connect some fixed pairs.
The last restriction is useful when constructing our gadgets, since
it allows us to adapt the reduction used to show \NP/-hardness for \kLinkage/ \cite{Fortune:1978}, reusing one of their gadgets.

\RestrictedSubsetLinkageDef

\begin{figure}
	\centering
	\begin{tikzpicture}[xscale=1.1]
		\node[circle, label = left:{\color{black}{$v_g^s$}}, line width = 1.2, fill = white, draw = black]
	(v1) at (0, 4.6){};
\node[circle, line width = 1.2, fill = black, draw = black]
	(v2) at (1.15, 4.6){};
\node[circle, line width = 1.2, fill = black, draw = black]
	(v3) at (2.3, 4.6){};
\node[circle, line width = 1.2, fill = black, draw = black]
	(v4) at (2.3, 3.45){};
\node[circle, line width = 1.2, fill = black, draw = black]
	(v5) at (3.45, 3.45){};
\node[circle, line width = 1.2, fill = black, draw = black]
	(v6) at (4.6, 3.45){};
\node[circle, line width = 1.2, fill = black, draw = black]
	(v7) at (5.75, 3.45){};
\node[circle, label = right:{\color{black}{$v_a^s$}}, line width = 1.2, fill = white, draw = black]
	(v8) at (6.9, 2.3){};
\node[circle, line width = 1.2, fill = black, draw = black]
	(v9) at (5.75, 1.15){};
\node[circle, line width = 1.2, fill = black, draw = black]
	(v10) at (4.6, 1.15){};
\node[circle, line width = 1.2, fill = black, draw = black]
	(v11) at (3.45, 1.15){};
\node[circle, line width = 1.2, fill = black, draw = black]
	(v12) at (2.3, 1.15){};
\node[circle, line width = 1.2, fill = black, draw = black]
	(v13) at (2.3, 0){};
\node[circle, line width = 1.2, fill = black, draw = black]
	(v14) at (1.15, 0){};
\node[circle, label = left:{\color{black}{$v_e^s$}}, line width = 1.2, fill = white, draw = black]
	(v15) at (0, 0){};
\node[circle, label = right:{\color{black}{$v_f^s$}}, line width = 1.2, fill = white, draw = black]
	(v16) at (3.45, 0){};
\node[circle, label = right:{\color{black}{$v_h^s$}}, line width = 1.2, fill = white, draw = black]
	(v17) at (3.45, 4.6){};
\node[circle, label = left:{\color{black}{$v_b^s$}}, line width = 1.2, fill = white, draw = black]
	(v18) at (3.45, 2.3){};
\node[circle, label = left:{\color{black}{$v_c^s$}}, line width = 1.2, fill = white, draw = black]
	(v19) at (1.15, 2.3){};
\node[circle, label = left:{\color{black}{$v_d^{s}$}}, line width = 1.2, fill = white, draw = black]
	(v20) at (0, 2.3){};
\path[-latex, line width = 0.6, draw = black]
	(v1) to node[above] {\color{black}{$a_8^s$}} (v2);
\path[-latex, line width = 0.6, fill = black, draw = black]
	(v2) to node[above] {\color{black}{$a_9^s$}} (v3);
\path[-latex, line width = 0.6, fill = black, draw = black]
	(v3) to node[above] {\color{black}{$a_{10}^s$}} (v4);
\path[-latex, line width = 0.6, draw = black]
	(v4) to node[above] {\color{black}{$a_4^s$}} (v5);
\path[-latex, line width = 0.6, draw = black]
	(v5) to node[above] {\color{black}{$a_3^s$}} (v6);
\path[-latex, line width = 0.6, draw = black]
	(v6) to node[above] {\color{black}{$a_2^s$}} (v7);
\path[-latex, line width = 0.6, draw = black]
	(v7) to node[above] {\color{black}{$a_1^s$}} (v8);
\path[-latex, line width = 0.6, draw = black]
	(v9) to node[below] {\color{black}{$b_1^s$}} (v8);
\path[-latex, line width = 0.6, draw = black]
	(v10) to node[below] {\color{black}{$b_2^s$}} (v9);
\path[-latex, line width = 0.6, draw = black]
	(v11) to node[below] {\color{black}{$b_3^s$}} (v10);
\path[-latex, line width = 0.6, draw = black]
	(v12) to node[below] {\color{black}{$b_4^s$}} (v11);
\path[-latex, line width = 0.6, draw = black]
	(v19) to node[above] {\color{black}{$b_5^s$}} (v12);
\path[-latex, line width = 0.6, draw = black]
	(v19) to node[below] {\color{black}{$a_5^s$}} (v4);
\path[-latex, line width = 0.6, draw = black]
	(v15) to node[below] {\color{black}{$b_8^s$}} (v14);
\path[-latex, line width = 0.6, draw = black]
	(v14) to node[below] {\color{black}{$b_9^s$}} (v13);
\path[-latex, line width = 0.6, draw = black]
	(v13) to node[below] {\color{black}{$b_{10}^s$}} (v12);
\path[-latex, line width = 0.6, draw = black]
	(v11) to node[above] {\color{black}{$b_{11}^s$}} (v16);
\path[-latex, line width = 0.6, draw = black]
	(v13) to node[below] {\color{black}{$b_{12}^s$}} (v20);
\path[-latex, line width = 0.6, draw = black]
	(v3) to node[above] {\color{black}{$a_{12}^s$}} (v20);
\path[-latex, line width = 0.6, draw = black]
	(v18) to node[above] {\color{black}{$b_6^s$}} (v10);
\path[-latex, line width = 0.6, draw = black]
	(v18) to node[above] {\color{black}{$a_6^s$}} (v6);
\path[-latex, line width = 0.6, draw = black]
	(v5) to node[below] {\color{black}{$a_{11}^s$}} (v17);
\path[-latex, line width = 0.6, draw = black]
	(v7) .. controls (4.911, 0.42) and (1.115, 3.142) .. (v14);
\path[-latex, line width = 0.6, draw = black]
	(v9) .. controls (5.195, 3.951) and (1.13, 1.695) .. (v2);

	\end{tikzpicture}
	\caption{A switch gadget $S_s$.}
	\label{figure:linkage-gadget-switch}
\end{figure}
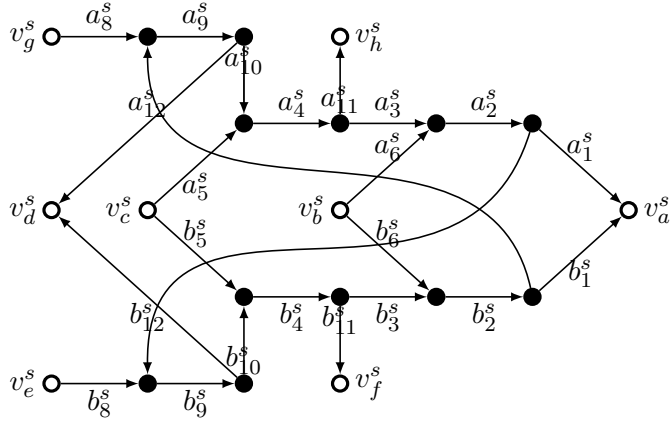

\begin{lemma}[{\cite[Lemma 2]{Fortune:1978}}]
	\label{state:switch-C-A-B-D-linkage}
	Consider the subgraph in \cref{figure:linkage-gadget-switch}.
	Suppose there are two vertex-disjoint paths
	passing through the subgraph, one leaving at vertex $v_{a}^s$ and the other entering at $v_b^s$. 
	Then the path leaving at $v_a^s$ must have entered at $v_c^s$ and
	the path entering at $v_b^s$ must leave at $v_d^s$.
	Further, there is exactly one additional path through the subgraph and it is either
	$a_8^s, a_9^s, a_{10}^s, a_4^s, a_{11}^s$
	or
	$b_{8}^s, b_9^s, b_{10}^s, b_4^s, b_{11}^s$,
	depending on the actual routing of the path leaving at $v_a^s$.
\end{lemma}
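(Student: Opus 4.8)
The plan is to use the fact that almost every internal vertex of the switch $S_s$ in \cref{figure:linkage-gadget-switch} has in-degree one or out-degree one, so that a path entering or leaving such a vertex is forced to continue in only one way. First I would isolate the handful of \emph{branching} vertices: the head $v_a^s$, which has exactly two in-arcs, namely the last arc $a_1^s$ of the ``top strand'' $v_g^s\to\cdots\to v_a^s$ and the last arc $b_1^s$ of the ``bottom strand'' $v_e^s\to\cdots\to v_a^s$; the tail $v_b^s$, which has exactly two out-arcs ($a_6^s$ into the top strand and $b_6^s$ into the bottom strand); the two ``crossover'' arcs joining the two strands in the interior of the gadget; and the arcs $a_5^s,b_5^s$ leaving $v_c^s$ and $a_{12}^s,b_{12}^s$ entering $v_d^s$. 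Away from these vertices the routing of any path through $S_s$ is uniquely determined.

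I would then trace the path $P_1$ that leaves at $v_a^s$ \emph{backwards} starting from $v_a^s$, and the path $P_2$ that enters at $v_b^s$ \emph{forwards} starting from $v_b^s$. The backward trace of $P_1$ branches first according to whether its last arc is $a_1^s$ or $b_1^s$, and thereafter only when it meets a crossover arc or the point where $a_5^s$/$b_5^s$ (out of $v_c^s$) or $a_8^s$/$b_8^s$ (out of $v_g^s$/$v_e^s$) rejoin a strand. I expect that in every branch but one, $P_1$ is forced either to start at $v_b^s$ itself --- impossible, since $P_2$ also starts at $v_b^s$ --- or to occupy (before reaching a source) both out-neighbours of $v_b^s$, so that the disjoint path $P_2$ cannot leave $v_b^s$ at all. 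The one surviving branch forces $P_1$ to begin at $v_c^s$, entering the gadget through $a_5^s$ (when the last arc of $P_1$ is $a_1^s$) or through $b_5^s$ (when it is $b_1^s$); this pins down $P_1$ completely, and a forward trace then shows that $P_2$ is forced all the way from $v_b^s$ to $v_d^s$. This proves the first two assertions and records that $P_1$ has exactly two possible shapes.

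For the ``exactly one additional path'' claim I would, in each of the two configurations, delete from $S_s$ all vertices used by $P_1$ and $P_2$ and examine the remainder; a direct check shows it is a single directed path --- the untouched part of the strand \emph{not} taken by $P_1$, running $v_g^s\to\cdots\to v_h^s$ via $a_8^s,a_9^s,a_{10}^s,a_4^s,a_{11}^s$ in one case and $v_e^s\to\cdots\to v_f^s$ via $b_8^s,b_9^s,b_{10}^s,b_4^s,b_{11}^s$ in the other --- together with some isolated vertices. Since a third path through $S_s$ must avoid $P_1$ and $P_2$, this is the unique possibility, and which of the two occurs is determined by the routing of $P_1$, as claimed. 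The main difficulty is purely organisational: one must check that the case analysis on the branching vertices is exhaustive and that every discarded branch genuinely violates vertex-disjointness; structuring the argument around the in-/out-degree-one vertices keeps the number of live cases to a handful. (As indicated, this is \cite[Lemma 2]{Fortune:1978}; the above is how one would reprove it from scratch.)
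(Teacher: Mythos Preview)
The paper does not give its own proof of this lemma; it simply imports it verbatim from \cite[Lemma~2]{Fortune:1978} and uses it as a black box inside the proof of \cref{state:restricted-subset-linkage-sigma_2^p-hard}. So there is nothing in the paper to compare your argument against.

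Your sketch is the standard way such gadget lemmas are established and is essentially what Fortune, Hopcroft, and Wyllie do: exploit that almost every internal vertex of $S_s$ has in-degree or out-degree one, so that once you fix the choice at the two or three genuine branch points the rest of a path is forced, and then rule out all but one branch by showing the two paths would have to share a vertex. The only caution is bookkeeping: make sure your enumeration of branching vertices matches the actual gadget in \cref{figure:linkage-gadget-switch} (the two crossover arcs are the unlabelled curved ones, and $v_c^s$ has out-arcs $a_5^s,b_5^s$ while $v_d^s$ has in-arcs $a_{12}^s,b_{12}^s$), and that when you delete $P_1\cup P_2$ you also account for the isolated vertices left over so that the ``exactly one'' claim is clean. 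With that, your plan is sound.
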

\begin{lemma}
	\label{state:restricted-subset-linkage-sigma_2^p-hard}
	\RestrictedSubsetLinkage/ is $\Sigma_2^p$-hard.
\end{lemma}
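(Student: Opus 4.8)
The plan is to reduce from \ShortestImplicantCore/, which is $\Sigma_2^p$-complete by \cref{state:shortest implicant core is sigma_2^p-complete}. Fix an instance $(\varphi, I, k)$; after a standard normalization we may take $\varphi = \bigvee_{j=1}^{q} T_j$ to be a DNF over variables $x_1,\dots,x_n$ with each $T_j$ a term, $I=\{\ell_1,\dots,\ell_r\}$ an implicant of $\varphi$, and the question is whether some $I'\subseteq I$ with $|I'|\le k$ is still an implicant. The dictionary guiding the construction is the equivalence ``$I'$ is an implicant of $\varphi$'' $\Leftrightarrow$ ``$(\bigwedge_{\ell\in I'}\ell)\wedge\neg\varphi$ is unsatisfiable''; since $\neg\varphi$ is the CNF $\bigwedge_j \overline{T_j}$, an assignment $\alpha$ satisfies $(\bigwedge_{\ell\in I'}\ell)\wedge\neg\varphi$ exactly when $\alpha$ makes every literal of $I'$ true and, for every term $T_j$, makes at least one literal of $T_j$ false. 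The digraph we build will have linkages that encode precisely such assignments.

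First I would build the digraph using the switch-gadget construction of \cite{Fortune:1978}. For each variable $x_i$ we install a \emph{consistency line}: a terminal pair $(\sigma_i,\tau_i)$ whose routes thread through a chain of copies of the switch gadget $S_s$ of \cref{figure:linkage-gadget-switch}, one copy per occurrence of $x_i$ in $\varphi$ and in $I$, wired so that the pair can be completed in exactly two ways -- a ``$x_i=1$'' completion and a ``$x_i=0$'' completion -- with the choice propagating identically through every copy; this global consistency is exactly what \cref{state:switch-C-A-B-D-linkage} buys us. For each term $T_j=\bigwedge_m \ell_{j,m}$ we add a pair $(c_j,d_j)$ whose only routes pass through the ``literal-is-false'' port of one of the switches belonging to the literals $\ell_{j,m}$, so that $(c_j,d_j)$ can be routed iff the selected assignment $\alpha$ falsifies some literal of $T_j$. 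The pairs $\{(\sigma_i,\tau_i)\}_i\cup\{(c_j,d_j)\}_j$ form the mandatory part $M$ of the \RestrictedSubsetLinkage/ instance, so any completion of $M$ selects a total assignment $\alpha$. For each literal $\ell\in I$ we add an \emph{optional} pair $(p_\ell,q_\ell)$ that can be routed iff the consistency line of $\ell$'s variable is set so that $\ell$ is true. Finally the budget is set to $k$, so that selecting the optional subset $S(I')\coloneqq\{(p_\ell,q_\ell):\ell\in I'\}$ corresponds to selecting $I'$, with $|S(I')|=|I'|$.

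It then remains to verify the two directions. For the backward direction: if $I'$ is \emph{not} an implicant, pick $\alpha\models(\bigwedge_{\ell\in I'}\ell)\wedge\neg\varphi$, route every consistency line according to $\alpha$, route each $(c_j,d_j)$ through a literal of $T_j$ that $\alpha$ falsifies, and route each optional pair $(p_\ell,q_\ell)$ with $\ell\in I'$ through the now-available ``true'' port of its line; the routing guarantees of \cref{state:switch-C-A-B-D-linkage} make this whole family vertex-disjoint, so $M\cup S(I')$ is linkable and $S(I')$ is not a valid witness. Contrapositively, if $I'$ \emph{is} an implicant then $M\cup S(I')$ admits no linkage: any hypothetical linkage induces an assignment $\alpha$ from the consistency lines that must make every $\ell\in I'$ true (because the optional pairs are routed) and must falsify every term (because each $(c_j,d_j)$ is routed), contradicting unsatisfiability of $(\bigwedge_{\ell\in I'}\ell)\wedge\neg\varphi$. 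Ranging over all $I'$, this shows that some $I'\subseteq I$ with $|I'|\le k$ is an implicant if and only if some optional subset of size $\le k$ renders the mandatory pairs plus that subset non-linkable, i.e.\ if and only if the constructed \RestrictedSubsetLinkage/ instance is a yes-instance; since the construction is clearly polynomial, $\Sigma_2^p$-hardness follows.

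I expect the main obstacle to be the gadget bookkeeping: arranging the switch chains so that (i) the two completions of a consistency line are genuinely the \emph{only} completions and they remain globally consistent, (ii) the mandatory term pairs $(c_j,d_j)$ can be routed \emph{exactly} through false-literal ports and cannot ``cheat'' by stealing vertices of another gadget, and (iii) the optional pairs, the term pairs, and the consistency lines never collide except in the intended way. This is precisely where the delicate three-path behaviour of the switch gadget of \cite{Fortune:1978} (\cref{state:switch-C-A-B-D-linkage}) has to be invoked, and checking that vertex-disjointness is simultaneously necessary and sufficient will be the bulk of the formal argument. A smaller point worth handling explicitly is the case $\varphi\equiv\mathrm{true}$, where $I'=\emptyset$ is an implicant: here $M$ alone must already be non-linkable because no assignment falsifies all terms, which the construction delivers automatically and keeps the two sides of the equivalence in step.
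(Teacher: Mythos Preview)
Your reduction is from the right source problem and the semantic equivalence you set up (``$I'$ is an implicant'' $\Leftrightarrow$ ``the mandatory pairs together with the optional pairs for $I'$ admit no linkage'') is exactly the one the paper uses, so the overall strategy is correct. The construction, however, is organised quite differently. You put one mandatory pair per variable (the consistency lines) and one per term, so $\lvert T_0\rvert = n+q$, and then let the optional literal pairs interact with the switch chains. The paper instead keeps $\lvert T_0\rvert = 2$: it re-uses the Fortune--Hopcroft--Wyllie encoding of SAT into two disjoint paths verbatim (the $(s_1,t_1)$ path selects the assignment and threads all clause gadgets, the $(s_2,t_2)$ path is the companion path through every switch), and each optional pair is just a single arc $(y_\ell, x_i)$ or $(y_\ell,\bar x_i)$ pinning which branch of the variable gadget the assignment path must avoid.

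The payoff of the paper's choice is that \cref{state:switch-C-A-B-D-linkage} applies literally: that lemma is stated for \emph{two} disjoint paths through a switch plus one residual $v_g^s$--$v_h^s$ or $v_e^s$--$v_f^s$ path, which is precisely the situation in the $2$-pair encoding. In your layout each switch may see a consistency-line path, a term path, and an optional-literal path simultaneously, so you cannot simply quote the lemma; you would have to re-verify the gadget's behaviour under this different access pattern (or redesign the gadget). This is exactly the ``bookkeeping'' you flag, and it is real work rather than routine. Your approach is more modular to describe, but the paper's is shorter to carry out because the hard gadget analysis is already done in \cite{Fortune:1978} for the $2$-path setting and the optional pairs are trivial one-arc paths.
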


\begin{proof}
	We provide a reduction from \ShortestImplicantCore/.
	Let $(\varphi, I, k)$ be a \ShortestImplicantCore/ instance.
	Let $C = \{c_1, \ldots, c_m\}$ be the set of clauses in $\varphi$ and let $x_1, \ldots, x_n$ be the variables in $\varphi$.
	Construct a \RestrictedSubsetLinkage/ instance $(D, T_0, T, k')$ as follows.

	For each variable $x_i$ in $\varphi$, add the vertices $z_i, \bar{x}_i, x_i, y_i$ to $D$.
	If $x_i$ has a positive literal in clause $c_j$, add the vertex $x_{i,j}$, and if it has a negative literal, add the vertex $\bar{x}_{i,j}$.
	Add two $z_i$-$y_i$ paths, one connecting all positive occurrences of $x_i$ (including the vertex $x_i$) and another connecting all of its negative occurrences (including the vertex $\bar{x}_i$).
	If $i < n$, add the arc $(y_i, z_{i+1})$.

	Add a vertex $c_0$ to $D$.
	For each clause $c_j$ add a vertex $c_j$
	For each variable $x_i$ in $c_j$, add a switch $S_{i,j}$ to $D$, together with the arcs $(v_f^{i,j}, c_j)$, $( c_{j - 1}, v_e^{i,j})$.
	If $x_i$ is positive in $c_j$, add the arcs $(x_{i,j}, v_g^{i,j})$ and $(x_{i,j'}, v_h^{i,j})$, where $x_{i,j'}$ is the successor of $x_{i,j}$ in the corresponding $x_i$-$y_i$ path.

	Analogously, if $x_i$ is negative in $c_j$, add the arcs $(\bar{x}_{i,j}, v_g^{i,j})$ and $(\bar{x}_{i,j'}, v_h^{i,j})$, where $\bar{x}_{i,j'}$ is the successor of $\bar{x}_{i,j}$ in the corresponding $x_i$-$y_i$ path.
	If $j < m$, then add the arcs $(v_{a}^{i,j}, v_{c}^{i',j + 1})$ and $(v_{d}^{i,j}, v_{b}^{i',j + 1})$ for each variable $x_{i'}$ in $c_{j + 1}$ and each variable $x_i$ in $c_j$.

	Add the vertices $s_1, t_1, s_2, t_2$.
	Add the arcs $(c_m, t_1)$ and $(y_n, c_0)$.
	Add the arcs $(s_1, v_b^{i',1})$, where $i'$ is the lowest index such that the variable $x_{i'}$ is in $c_1$.
	Add the arc $(v_a^{i',1}, t_2)$, where $i'$ is the smallest index such that $x_{i'}$ is in $c_1$. 
	Add the arcs $(s_2, v_c^{i',m})$ and $(v_d^{i',m}, z_1)$, where $i'$ is the largest index such that the variable $x_{i'}$ is in $c_m$.

	For each literal $y_\ell \in I$, add the vertex $y_\ell$ to $D$.
	If $y_\ell$ is a positive occurrence of some variable $x_i$,
	then add the arc $(y_\ell, x_{i})$ to $D$ and the pair $(y_\ell, x_{i})$ to $T$.
	Otherwise, if it is a negative occurrence of $x_i$,
	add the arc $(y_\ell, \bar{x}_i)$ to $D$ and
	the pair $(y_\ell, \bar{x}_i)$ to $T$.

	Finally, add the pairs $(s_1, t_1)$ and $(s_2, t_2)$ to $T_0$ and set $k' = k$.
	This concludes the construction.

	Now assume that the \ShortestImplicantCore/ instance $(\varphi, I, k)$ is a yes-instance and
	let $I' \subseteq I$ be a solution of size at most $k$.

	We construct $T'$ by taking the vertices $y_\ell$ corresponding to the literals in $I'$.
	Assume towards a contradiction that there is a linkage $L$ connecting the terminal pairs in $T' \cup T_0$.
	We show that there is a truth-assigment of the variables in $\varphi$
	in which all literals in $I'$ evaluate to true but
	which does not satisfy $\varphi$.

	Let $P_1$ be the $s_1$-$t_1$ path in $L$ and
	let $P_2$ be the $s_2$-$t_2$.
	By construction, $P_1$ enters the first switch at some $v_b^{i',1}$ and $P_2$ leaves the first switch at some $v_a^{i'',1}$.
	By \cref{state:switch-C-A-B-D-linkage}, $P_1$ and $P_2$ must intersect all switches in order, entering at $v_b^s$ and $v_c^s$ and leaving at $v_d^s$ and $v_a^s$.

	Further, $P_1$ must cross all variable gadgets, as the last switch $S_{s}$ contains exactly one arc leaving $v_d^{s}$, namely $(v_d^s, z_1)$.
	For each variable $x_i$, the path $P_1$ must visit the vertex $x_i$ or $\bar{x}_i$.
	If $P_1$ visits $x_i$, we set $x_i$ to false.
	Otherwise, we set $x_i$ to true.

	In order to connect the terminals in $T'$,
	the linkage $L$ must contain the unique outgoing arc of each $y_\ell \in I'$.
	If $y_\ell$ is a positive occurrence of the variable $x_i$, 
	then $L$ contains the vertex $x_i$ (and $P_1$ must visit $\bar{x}_i$), 
	otherwise it contains the vertex $\bar{x}_i$ (and $P_1$ must visit $x_i$).
	This implies that, in the assignment constructed above, all literals in $I'$ evaluate to true.

	Once $P_1$ reaches $c_0$, it must proceed to $c_m$ by taking, for every clause $c_j$, some $v_e^{i,j}$-$v_f^{i,j}$ path of some switch $S_{i,j}$.
	By \cref{state:switch-C-A-B-D-linkage}, this is only possible if $P_1$ did not take the $v_g^{i,j}$-$v_h^{i,j}$ path when visiting the variable gadget of $x_i$.
	By construction of the switch gadgets, this implies that the variable $x_i$ as set in such a way that the clause $c_j$ is not satisfied by the assignment constructed earlier.

	Since $P_1$ must cross all clause gadgets, it means that there is an assignment in which none of the clauses in $\varphi$ are satisfied.
	This contradicts the original assumption that $I'$ is a implicant for $\varphi$.
	Hence, there is no linkage $L$ connecting the terminal pairs in $T' \cup T_0$.

	For the other direction, let $T' \subseteq T$ be a set of size at most $k$ such that no linkage connecting the terminal pairs of $T' \cup T_0$ exists in $D$.
	For each pair in $T'$, take the corresponding literal from $I$ and add it to $I'$.
	We show that $I'$ is an implicant for $\varphi$.

	Assume towards a contradiction that
	there is some truth assignment
	in which all literals in $I'$ evaluate to true but
	$\varphi$ evaluates to false.
	This implies that
	for every clause $c_j$
	there is some variable $x_i$
	whose corresponding literal in $c_j$ evaluates to false,
	causing $c_j$ to also evaluate to false.

	We construct a linkage $L'$ connecting the terminals in $T'$ by simply taking the unique arc connecting the terminal pairs.
	Note that $I'$ cannot contain a literal $y_\ell$ and its negation at the same,
	as no truth assignment can satisfy both literals at the same time.
	Hence, for each variable $x_i$, there is some $z_i$-$y_i$ which does not contain any terminals of $T'$.

	We construct an $s_1$-$t_1$ path $P_1$ by first crossing all the switches, entering at $v_b^s$ and leaving at $v_d^s$.
	Then, when traversing the gadget corresponding to the variable $x_i$, 
	we choose the $z_i$-$y_i$ path containing $x_i$
	if $x_i$ is set to false and 
	containing $\bar{x_i}$ if $x_i$ is set to true.
	Finally, when $P_1$ reaches $c_0$, we proceed to $c_m$ by taking, for each clause $c_j$, the $v_e^{i,j}$-$v_f^{i,j}$ path of the gadget $S_{i,j}$
	such that $x_i$ is a variable whose assignment cause $c_j$ to evaluate to false.
	Since we have an assignment which does not satisfy $\varphi$,
	such a variable $x_i$ exists.

	The $s_2$-$t_2$ path $P_2$ is constructed by crossing all the switches from $v_c^s$ to $v_a^s$.
	By \cref{state:switch-C-A-B-D-linkage}, the paths $P_1$ and $P_2$ are disjoint.
	Hence, we obtained a linkage $L' \cup \{P_1, P_2\}$ connecting all terminal pairs of
	$T' \cup T_0$, a contradiction to the original assumption that $T'$ is a solution of size $k$.
	Thus, $I'$ is an implicant for $\varphi$, as desired.
\end{proof}

For the hardness reduction for \PathPA/, it is convenient to consider a variant of \RestrictedSubsetLinkage/
in which $T_0 = \emptyset$, because then we do not need to differentiate between $T'$ and $T_0$.

\SubsetLinkageDef

\begin{lemma}
	\label{state:subset-linkage-sigma_2^p-hard}
	\SubsetLinkage/ is $\Sigma_2^p$-hard.
\end{lemma}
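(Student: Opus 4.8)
The plan is to reduce from \RestrictedSubsetLinkage/, which is $\Sigma_2^p$-hard by \cref{state:restricted-subset-linkage-sigma_2^p-hard}; the whole point of the argument is to get rid of the forced set $T_0$, that is, to simulate ``these pairs are always connected'' using only ordinary, freely chosen terminal pairs.

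First I would replace each forced pair by an ordinary one via a pendant. Given a \RestrictedSubsetLinkage/ instance $(D,T_0,T,k)$ with $T_0=\{(s_1,t_1),\dots,(s_\ell,t_\ell)\}$, which we may assume is disjoint from $T$, construct $D'$ by adding for every $i$ a fresh source $\hat s_i$ and a fresh sink $\hat t_i$ together with the arcs $(\hat s_i,s_i)$ and $(t_i,\hat t_i)$; put $T'=T\cup\{(\hat s_i,\hat t_i):1\le i\le \ell\}$ and $k'=k+\ell$. Since $\hat s_i$ has out-degree one and $\hat t_i$ has in-degree one, every $\hat s_i$-$\hat t_i$ path in $D'$ consists of $\hat s_i$, an $s_i$-$t_i$ path of $D$, and $\hat t_i$; hence for every $Y\subseteq T$ the demands $Y\cup\{(\hat s_i,\hat t_i)\}_{i}$ are linkable in $D'$ if and only if $Y\cup T_0$ is linkable in $D$. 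This already yields the forward direction: a \RestrictedSubsetLinkage/ witness $Y\subseteq T$ with $|Y|\le k$ for which $Y\cup T_0$ is not linkable gives the non-linkable set $Y\cup\{(\hat s_i,\hat t_i)\}_i\subseteq T'$ of size at most $k'$.

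The backward direction is where the real work is, and it is the main obstacle. Given a non-linkable $Z\subseteq T'$ with $|Z|\le k'$, the natural candidate for a \RestrictedSubsetLinkage/ witness is $(Z\cap T)\cup T_0$, but this respects the budget $k$ only when $Z$ already contains all $\ell$ pendant pairs, because $|Z\cap T|=|Z|-|Z\cap\{(\hat s_i,\hat t_i)\}_i|$. So one has to show that every non-linkable subset of size at most $k'$ is forced to spend budget on all pendant pairs. This cannot be imposed by a purely generic gadget, since the adversary selects the whole subset and would simply drop any ``forced'' pair; instead I would exploit the specific shape of the instances produced by \cref{state:restricted-subset-linkage-sigma_2^p-hard}. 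There $T_0=\{(s_1,t_1),(s_2,t_2)\}$, the pairs of $T$ are single arcs between otherwise private vertices (so any subset of $T$ alone is linkable), and every conflict can arise only inside a switch gadget; by \cref{state:switch-C-A-B-D-linkage} a switch restricts routing only when it is traversed by both global demands, so if either $(s_1,t_1)$ or $(s_2,t_2)$ is dropped then all remaining demands route independently. Transporting this through the pendant construction, the only demands that pass through switches in $D'$ are the two pendant pairs, so any non-linkable $Z\subseteq T'$ must contain both of them; then $|Z\cap T|\le k'-2=k$ and $(Z\cap T)\cup T_0$, which is exactly the image of $Z$, is not linkable in $D$, completing the reduction. (Equivalently, one can rerun the reduction of \cref{state:restricted-subset-linkage-sigma_2^p-hard} from \ShortestImplicantCore/ directly, merging $\{(s_1,t_1),(s_2,t_2)\}$ into $T$ and $k$ into $k+2$ from the outset and redoing the two correctness arguments with the single terminal set.) Since only hardness is claimed, there is nothing further to verify about membership.
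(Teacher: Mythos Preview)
Your route is genuinely different from the paper's, and the contrast is instructive.

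You attach trivial pendants to the $T_0$-pairs and then argue, by opening up the construction of \cref{state:restricted-subset-linkage-sigma_2^p-hard}, that any subset missing one of the two pendant pairs is linkable. This can be made to work, but two points deserve attention. First, your appeal to \cref{state:switch-C-A-B-D-linkage} is not quite right: that lemma only describes what is \emph{forced} when two disjoint paths traverse a switch; it says nothing about the single-path case. What you actually need is a direct check that, without $P_2$, the path $P_1$ can be routed from $s_1$ to $t_1$ entirely inside the switch/clause layer (e.g.\ $v_b\to\cdots\to v_f$ in one switch, then $c_1\to v_e\to v_f\to c_2\to\cdots$) while avoiding all variable vertices used by $Y$, and symmetrically for $P_2$ alone. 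This is true, but it requires tracing the switch explicitly rather than citing the lemma. Second, your assertion that ``this cannot be imposed by a purely generic gadget'' is exactly what the paper refutes.

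The paper gives a reduction that works for an \emph{arbitrary} \RestrictedSubsetLinkage/ instance. For each $(s_i,t_i)\in T_0$ it inserts a private entrance path $s_i^a,w_i^1,\dots,w_i^{|T|},s_i$, and for each $(p_j,q_j)\in T$ it adds a length-two bypass $p_j^a,w_i^j,q_j^a$ through that same private path, together with arcs $(p_j^a,p_j),(q_j,q_j^a)$. The terminal set becomes $\{(s_i^a,t_i^a):(s_i,t_i)\in T_0\cup T\}$ with budget $k+|T_0|$. If the adversary includes all $T_0$-pendants, each $w_i^j$ is occupied and every selected $T$-pair must route through $D$; if the adversary omits some $(s_i^a,t_i^a)$, then \emph{every} $T$-pair can be sent along its bypass $p_j^a,w_i^j,q_j^a$, so (together with a linkage for the remaining $T_0$-pairs, whose existence can be assumed without loss of generality) the chosen set is linkable. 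Thus omitting a $T_0$-pendant is never advantageous, and the backward direction goes through without inspecting the previous reduction.

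In short: your approach is correct in spirit but is instance-specific and leaves the key linkability claims as an exercise; the paper's bypass gadget achieves the same effect modularly and shows that a generic reduction does exist.
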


\begin{proof}
	We provide a reduction from \RestrictedSubsetLinkage/, which is $\Sigma_2^p$-hard due to \cref{state:restricted-subset-linkage-sigma_2^p-hard}.
	Let $(D, T_0, T, k)$ be a \RestrictedSubsetLinkage/ instance.
	We construct a \SubsetLinkage/ instance $(D_1, T_1, k_1)$ as follows.
	Let $t = \Abs{T}$.

	Construct $D_1$ by starting with a copy of $D$.
	For each $(s_i, t_i) \in T_0 \cup T$, add the vertices $s_i^a$ and  $s_i^a$.
	For each $(s_i, t_i) \in T_0$ and each $(p_j, q_j) \in T$, add the path $s_i^a, w_i^1, w_i^2, \ldots, w_i^t, s_i$, the path $p_j^a, w_i^j, q_j^a$ and the arcs $(p_j^a, p_j), (q_j, q_j^a)$ and $(t_i, t_i^a)$.
	Finally, set $T_1 = \{(s_i^a, t_i^a) \mid (s_i, t_i) \in T_0 \cup T\}$ and
	$k_1 = k + \Abs{T_0}$.
	This completes the construction.

	We first show that, if $(D, T_0, T, k)$ is a yes-instance, the so is $(D_1, T_1, k_1)$.
	Let $T' \subseteq T$ be a solution of size at most $k$.
	We set $T_1' = \{(s_i^a, t_i^a) \mid (s_i, t_i) \in T' \cup T_0\}$.
	Clearly $\Abs{T_1'} \leq k_1$.

	Assume towards a contradiction that there is a linkage $L$ connecting the terminals of $T_1'$ in $D_1$.
	By construction of $D_1$, for each $(s_i, t_i) \in T_0$ there is exactly one
	$s_i^a$-$s_i$ path in $D_1$, namely the path $s_i^a, w_i^1, w_i^2, \ldots, w_i^t, s_i$.
	Since the $s_i^a$-$t_i^a$ path in $L$ must go through $s_i$,
	it must also visit the path above.
	This implies that no path connecting $(s_j^a, t_j^a) \in T_1'$ can use any arc $(s_j^a, w_i^a)$.
	Hence, for each $(s_j, t_j) \in T'$, the $(s_j^a, t_j^a)$ path in $L$ must contain
	a $s_j$-$t_j$ path which is also in $D$.
	Further, $L$ also contains a $s_i$-$t_i$ path for each $(s_i, t_i) \in T_0$.
	This implies that there is a linkage $L'$ connecting the terminal pairs of $T' \cup T_0$,
	a contradiction to the assumption that $T'$ is a solution to the \RestrictedSubsetLinkage/ instance.

	For the other direction, let $T_1'$ be
	a solution of size at most $k_1$ for $(D_1, T_1, k_1)$.
	Let $T_0^\star = \{(s_i^a, t_i^a) \in T_1' \mid (s_i, t_i) \in T_0\}$, 
	$T^\star = \{(s_i^a, t_i^a) \in T_1' \mid (s_i, t_i) \in T\}$ and
	let $T' \subseteq \{(s_i, t_i) \in T \mid (s_i^a, t_i^a) \in T^\star\}$
	be a maximal subset of size at most $k$.
	We show that $T'$ is a solution for $(D, T_0, T, k)$.

	If there is no linkage connecting the terminals of $T_0^\star$ in $D_1$,
	then clearly there is also no linkage connecting the terminals of $T_0$ in $D$, and
	so any subset of $T$ of size at most $k$ is a solution.
	
	If $\Abs{T^\star} \leq k$, then clearly $T'$ is a solution,
	as any linkage connecting $T' \cup T_0$ in $D$ could
	easily be extended to a linkage connecting the terminals of $T_1'$ in $D_1$.

	Assume towards a contradiction that $\Abs{T^\star} > k$.
	In this case, $\Abs{T_0^\star} < \Abs{T_0}$.
	Let $(s_i^a, t_i^a) \in T \setminus T_0^\star$ be such that
	$(s_i, t_i) \in T_0$.
	We construct a linkage $L$ connecting the terminals of $T_1'$ as follows.

	For each $(p_j^a, q_j^a) \in T^\star$,
	add the path $p_j^a, w_i^j, q_j^a$ to $L$.
	Now add a linkage $L_0$ connecting the terminals of $T_0^\star$ in $D_1$.
	As discussed above, this linkage must exist.
	Furthermore, $L_0$ does not intersect any of the paths previously added to $L$,
	since $(s_i^a, t_i^a) \not \in T_0^\star$.
	Hence, $L$ connects all the terminals of $T_1'$, a contradiction to the assumption that
	$T_1'$ is a solution.
\end{proof}

\begin{figure}
	\centering
	\vspace{-0.25cm}
	\begin{tikzpicture}
		\node[circle, label = above:{{$x^{i,c,d}_1$}}, line width = 0.6, fill = black]
	(n0) at (3.45, 1.15){};
\node[circle, label = above:{{$x^{i,c,d}_0$}}, line width = 0.6, fill = black]
	(n11) at (2.3, 1.15){};
\node[circle, label = above:{{$x^{i,c,d}_2$}}, line width = 0.6, fill = black]
	(n12) at (4.6, 1.15){};
\node[circle, label = above:{{$x^{i,c,d}_3$}}, line width = 0.6, fill = black]
	(n13) at (5.75, 1.15){};
\node[circle, label = above:{{$x^{i,c,d}_9$}}, line width = 0.6, fill = gray]
	(n16) at (6.9, 2.3){};
\node[circle, label = below:{{$x^{i,c,d}_4$}}, line width = 0.6, fill = black]
	(n17) at (8.05, 1.15){};
\node[circle, label = below:{{$x^{i,c,d}_5$}}, line width = 0.6, fill = black]
	(n18) at (9.2, 1.15){};
\node[circle, label = below:{{$x^{i,c,d}_6$}}, line width = 0.6, fill = black]
	(n19) at (10.35, 1.15){};
\node[circle, label = above:{{$x^{i,c,d}_7$}}, line width = 0.6, fill = black]
	(n20) at (11.5, 1.15){};
\node[circle, label = above:{{$x^{i,c,d}_8$}}, line width = 0.6, fill = black]
	(n1) at (12.65, 1.15){};
\node[circle, label = below:{{$y^{i,c,d}_{0}$}}, line width = 0.6, fill = black]
	(n2) at (12.65, 0){};
\node[circle, label = below:{{$y^{i,c,d}_{1}$}}, line width = 0.6, fill = black]
	(n3) at (11.5, 0){};
\node[circle, label = below:{{$y^{i,c,d}_{2}$}}, line width = 0.6, fill = black]
	(n4) at (10.35, 0){};
\node[circle, label = below:{{$y^{i,c,d}_{3}$}}, line width = 0.6, fill = black]
	(n5) at (9.2, 0){};
\node[circle, label = below:{{$y^{i,c,d}_{4}$}}, line width = 0.6, fill = black]
	(n6) at (8.05, 0){};
\node[circle, label = below:{{$y^{i,c,d}_{5}$}}, line width = 0.6, fill = black]
	(n7) at (6.9, 0){};
\node[circle, label = below:{{$z^{i,c,d}_{0}$}}, line width = 0.6, fill = black]
	(n8) at (2.3, 0){};
\node[circle, label = below:{{$z^{i,c,d}_{1}$}}, line width = 0.6, fill = black]
	(n9) at (1.15, 0){};
\node[circle, label = below:{{$z^{i,c,d}_{2}$}}, line width = 0.6, fill = black]
	(n10) at (0, 0){};
\node[circle, label = above:{{$z^{i,c,d}_{3}$}}, line width = 0.6, fill = black]
	(n14) at (0, 1.15){};
\node[circle, label = above:{{$z^{i,c,d}_{4}$}}, line width = 0.6, fill = black]
	(n15) at (1.15, 1.15){};
\path[-latex, line width = 1.201, draw = black]
	(n0) to (n12);
\path[-latex, line width = 1.201, draw = black]
	(n12) to (n13);
\path[-latex, line width = 1.201, draw = black]
	(n17) to (n18);
\path[-latex, line width = 1.201, draw = black]
	(n19) to (n20);
\path[-latex, line width = 1.201, draw = black]
	(n20) to (n1);
\path[-latex, line width = 1.201, draw = black]
	(n18) to (n19);
\path[-latex, line width = 1.201, draw = yellow]
	(n13) to (n17);
\path[-latex, line width = 1.201, draw = cyan]
	(n7) to (n12);
\path[-latex, line width = 1.201, draw = cyan]
	(n18) .. controls (7.457, 2.885) and (6.214, 2.886) .. (n12);
\path[-latex, line width = 1.201, draw = cyan]
	(n19) .. controls (8.85, 2.637) and (8.197, 2.115) .. (n13);
\path[-latex, line width = 1.201, draw = cyan]
	(n20) .. controls (11.461, -0.158) and (4.528, 0.148) .. (n0);
\path[-latex, line width = 1.201, draw = black]
	(n11) to (n0);
\path[-latex, line width = 1.201, draw = black]
	(n1) to (n2);
\path[-latex, line width = 1.201, draw = black]
	(n2) to (n3);
\path[-latex, line width = 1.201, draw = black]
	(n3) to (n4);
\path[-latex, line width = 1.201, draw = black]
	(n4) to (n5);
\path[-latex, line width = 1.201, draw = black]
	(n5) to (n6);
\path[-latex, line width = 1.201, draw = black]
	(n9) to (n10);
\path[-latex, line width = 1.201, draw = gray]
	(n13) to (n16);
\path[-latex, line width = 1.201, draw = gray]
	(n16) to (n17);
\path[-latex, line width = 1.201, draw = black]
	(n10) to (n14);
\path[-latex, line width = 1.201, draw = black]
	(n14) to (n15);
\path[-latex, line width = 1.201, draw = black]
	(n8) to (n9);
\path[-latex, line width = 1.201, draw = black]
	(n15) to (n11);
\path[-latex, line width = 1.201, draw = black]
	(n6) to (n7);

	\end{tikzpicture}
			\caption{A gadget $X^{i,c,d}$.
	The arc $(x^{i,c,d}_3, x^{i,c,d}_4)$ is a blocking subpath for the path $P$
	constructed in the proof of \cref{state:path-pa-sigma-2-p-hard}
	because of the bypass $x^{i,c,d}_3, x^{i,c,d}_9, x^{i,c,d}_4$.}
	\label{figure:sigma-2-p-hardness-path-pa-gadget-X}
\end{figure}

We can now provide our main reduction.
One of the biggest challenges in the construction is to use a single conflicting ear $Q$
to count how many pairs from $T'$ were already taken.
Further, it is not clear how to model taking an arc into an ear-identifying sequence $\bar{a}$
as a choice of some terminal $(s_i,t_i)$, and we instead take terminal pairs based on 
arcs which are not taken into $\bar{a}$.

\begin{theorem}
	\label{state:path-pa-sigma-2-p-hard}
	\PathPA/ is $\Sigma_2^p$-hard.
\end{theorem}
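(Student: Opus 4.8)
The plan is to reduce from \SubsetLinkage/, which is $\Sigma_2^p$-hard by \cref{state:subset-linkage-sigma_2^p-hard}. From a \SubsetLinkage/ instance $(D,T,k)$ with $T=\{(s_1,t_1),\dots,(s_t,t_t)\}$ I would build a digraph $D'$, a maximal ear $P$ in $D'$, and an integer $k'$ such that $\Anon[D'](P)\le k'$ (equivalently, $P$ has an identifying sequence of length at most $k'$) if and only if some sub-family $T'\subseteq T$ with $|T'|\le k$ admits no linkage in $D$. Here $D'$ consists of a copy of $D$, the ear $P$, and one ``block'' of gadgets per terminal of $T$ through which $P$ is routed; each block is a staircase of copies of the gadget $X^{i,c,d}$ from \cref{figure:sigma-2-p-hardness-path-pa-gadget-X}, the auxiliary indices $c,d$ being the layers of the staircase (so $D'$, like the gadget, necessarily contains cycles). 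Inside the block for terminal $i$, the ear $P$ follows a fixed canonical route, and the gadget also admits a ``long route'' that leaves $D'$ into the copy of $D$ at $s_i$, follows an $s_i$-$t_i$ path there, and re-enters at $t_i$. The construction is designed so that three properties hold simultaneously: (i) by \cref{state:identifying-sequence-implies-hitting-set}, any identifying sequence $\bar a$ for $P$ must hit the small local bypasses of the $X^{i,c,d}$ (such as $x^{i,c,d}_3, x^{i,c,d}_9, x^{i,c,d}_4$), which pins down most of $\bar a$ and leaves, per block, a binary choice recording whether terminal $i$ lies in a set $T'(\bar a)$ — we put $(s_i,t_i)\in T'(\bar a)$ exactly when the distinguished arc of block $i$ is \emph{absent} from $\bar a$; (ii) $|\bar a|$ is an increasing affine function of $|T'(\bar a)|$, because each block whose distinguished arc is absent forces, via the staircase, additional blocking subpaths that $\bar a$ must still hit; and (iii) the back-arcs of $X^{i,c,d}$ (the cyan arcs in the figure) together with the staircase force every conflicting ear for $(P,\bar a)$ different from $P$ to thread through the long routes of \emph{all} terminals of $T'(\bar a)$, in a prescribed order, before it can return to $P$ and reach a sink. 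Hence a conflicting ear for $(P,\bar a)$ exists exactly when $T'(\bar a)$ can be linked in $D$, so $\bar a$ is identifying exactly when $T'(\bar a)$ cannot be linked, and choosing $k'$ to match the affine relation in (ii) turns $|\bar a|\le k'$ into $|T'(\bar a)|\le k$.

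Given this setup, both directions are bookkeeping. For the forward direction, from an unlinkable $T'\subseteq T$ with $|T'|\le k$ one assembles $\bar a$ by hitting all local bypasses together with the extra blocking subpaths forced by the blocks according to their membership in $T'$; a counting argument gives $|\bar a|\le k'$, and $\bar a$ is identifying because any conflicting ear would, by (iii), restrict to a linkage of $T'$ in $D$, contradicting unlinkability. For the backward direction, an identifying sequence $\bar a$ with $|\bar a|\le k'$ must hit every mandatory blocking subpath, so by (ii) the set $T'(\bar a)$ it determines satisfies $|T'(\bar a)|\le k$; and if $T'(\bar a)$ were linkable, the linkage together with the long routes and the staircase traversal would assemble into a conflicting ear for $(P,\bar a)$, contradicting that $\bar a$ is identifying. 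Hence $T'(\bar a)$ is an unlinkable sub-family of size at most $k$, so $(D,T,k)$ is a yes-instance.

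The main obstacle is forcing enough rigidity that (ii) and (iii) hold with no leakage. On the ear side, a conflicting ear must be prevented from ``cheating'' by deviating through only a cheap proper sub-collection of the long routes it has access to; the staircase of copies $X^{i,c,d}$ exists precisely to count deviations and enforce an all-or-nothing traversal, so that a conflicting ear genuinely certifies a linkage of \emph{all} of $T'(\bar a)$ rather than some sub-family. On the sequence side, an identifying sequence must be prevented from ``cheating'' by blocking the long routes more cheaply than the per-block cost we charge — for instance by hitting a single blocking subpath shared among several intended bypasses — and the internal structure of $X^{i,c,d}$, in particular the extra vertex $x^{i,c,d}_9$ and the cyan back-arcs, is tuned so that the minimal elements of the subpath order $\preceq$ are exactly the blocking subpaths accounted for in (i) and (ii). Verifying that no unanticipated conflicting ear and no unanticipated short identifying sequence survives the construction — i.e.\ that the three properties above really are ``iff'' and that the affine relation between $|\bar a|$ and $|T'(\bar a)|$ holds on the nose — is where essentially all the effort lies; the rest is routine gadget assembly and arithmetic.
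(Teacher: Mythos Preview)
Your high-level plan matches the paper's: reduce from \SubsetLinkage/, route $P$ through one block of $X^{i,c,d}$-gadgets per terminal pair, record the choice of $T'$ by the presence or absence of a distinguished arc per block, and arrange that a conflicting ear for $(P,\bar a)$ exists precisely when the selected pairs can be linked in $D$. You also correctly identify that the real work is ruling out unintended conflicting ears and unintended short identifying sequences.

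Where the proposal goes wrong is property (ii). In the paper's construction the affine relation runs the \emph{other} way: the distinguished arc for block $i$ is $(a_0^i,a_1^i)$, and putting $(s_i,t_i)$ into $T'$ corresponds to \emph{omitting} this arc from $\bar a$, which makes $|\bar a|$ smaller, not larger. Omitting an arc from $\bar a$ cannot ``force additional blocking subpaths'': the blocking subpaths are fixed by the graph, not by $\bar a$, and in the non-acyclic setting the identifying condition is stronger than merely hitting them, so there is no mechanism by which removing an arc increases the mandatory cost. Your justification of (ii) therefore does not describe anything that can happen in the gadget.

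This inversion matters, because with the correct (decreasing) relation the backward direction cannot be the one-line implication you wrote. From $|\bar a|\le k_1$ one does \emph{not} get $|T'(\bar a)|\le k$; one gets that at most $t-k$ blocks carry their distinguished arc, hence at least $k$ blocks are ``light''. The paper then \emph{chooses} exactly $k$ light blocks to form $T'$, and must show that a linkage for this $T'$ would yield a conflicting ear even though the remaining blocks may carry stray arcs of $\bar a$ that are neither mandatory nor distinguished. This is precisely what the $d$-staircase is for: with $d_{\max}=2k+1$ copies per $(i,c)$ and at most $k$ stray arcs per block, some copy $X^{i,c,d^{i,c}}$ is clean, and the conflicting ear is threaded through those clean copies. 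So the staircase does not generate extra blocking subpaths to be hit; it absorbs the adversary's slack in $\bar a$ so that the linkage-to-conflicting-ear construction still goes through. Your sketch has the role of the staircase, and with it the whole accounting, backwards.
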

\begin{proof}
	We provide a reduction from \SubsetLinkage/, which is $\Sigma_2^p$-hard by \cref{state:subset-linkage-sigma_2^p-hard}.
	Let $(D, T, k)$ be a \SubsetLinkage/ instance.
	We construct an instance $(D_1, P, k_1)$ of $\PathPA/$ as follows.
	Let $t = \Abs{T}$,
	let $d_{\text{max}} = 2k+1$ and
	let $c_{\text{max}} = k+2$.

	Sort \(T\) arbitrarily as \(\{(s_1, t_1), (s_2, t_2), \ldots, (s_t, t_t)\}\) and
	add a copy of $D$ to $D_1$.

	For each $(s_i, t_i) \in T$ and
	each $c \in \Set{0, \ldots, c_{\text{max}}}$
	construct a gadget $G^{i,c}$ and
	a subpath \(P^{i,c}\) of \(P\) as follows.

	For each $0 \leq d \leq d_{\text{max}}$
	let \(P^{i,c,d}\) be the path
	\(z^{i,c,d}_0, \ldots, z^{i,c,d}_5, x^{i,c,d}_0, \ldots, x^{i,c,d}_8, y^{i,c,d}_0, \ldots y^{i,c,d}_6\) and
	let $X^{i,c,d}$ be the digraph consisting of
	\(P^{i,c,d}\),
	the path
	\(x^{i,c,d}_3, b^{i,c,d}_2, x^{i,c,d}_4\) and
	the arc
	\((y^{i,c,d,}_6, b_1^{i,c,d})\).

	If \((i,c) \neq (t, c_{\text{max}})\), \(d < d_{\text{max}}\) and \(c \geq 2\),
	add the arc \((b_1^{i,c,d}, z_0^{i,c,d+1})\).
	Note that this creates a bypass from \(y_6^{i,c,d}\) to \(z_0^{i,c,d+1}\).

	If $c = 0$, add
	$X^{i,0,0}$ to \(G^{i,0}\)
	together with the vertex \(w_1^i\) and
	the arc \((w_1^i, z_0^{i,0,0})\).
	Set \(P^{i,0}\) to the path \((w_1^i, z_0^{i,0,0}) \cdot P^{i,0,0}\).

	If \(c = 1\), add
	\(X^{i,1,0}\) to \(G^{i,1}\)
	together with the vertices \(a_0^i, a_1^i\) and
	the arcs \((x_2^{i,1,0}, s_i)\), \((y_6^{i,1,0}, a_0^i)\), \((a_0^{i}, a_1^i)\), \((t_i, a_1^i)\) and
	\((w_1^i, z_0^{i,1,0})\).
	Note that \(s_i\) and \(t_i\) are the terminals of the pair \((s_i, t_i) \in T\),
	which lie in the copy of \(D\).
	Define \(P^{i,1}\) as the path
	\(P^{i,1,0}, (y_6^{i,1,0}, a_0^i), (a_0^i, a_1^i)\).

	If \(c \geq 2\),
	add \(X^{i,c,d}\) for each \(0 \leq d \leq d_{\text{max}}\).
	For each \(0 \leq d \leq d_{\text{max}} - 1\),
	add the arc \((y_6^{i,c,d}, z_0^{i,c,d+1})\).

	If \(c < c_{\text{max}}\), add the arc \((y_6^{i,c,d_{\text{max}}}, z_0^{i,c+1,0})\),
	otherwise, if \(i < t\), add the arc \((y_6^{i,c_{\text{max}},d_{\text{max}}}, z_0^{i+1,0,0})\).

	We now add arcs between the \(G^{i,c}\).
	For each \((s_i, t_i) \in T\), we add the following arcs.
	First, add the path \(F^i_{\text{start}}\), which is defined as
	\(w_i^i, y_3^{i,k+1,d_{\text{max}}}, y_3^{i,k+1,d_{\text{max}}-1}, \ldots, y_3^{i, k+1, 0}\).

	For each \(0 \leq d,f \leq d_{\text{max}}\)
	and
	each \(2 \leq c < c_{\text{max}}\),
	add the paths
	\begin{align*}
		A^{i,c,d,f}_{\text{taken}}      & \coloneqq (x^{i,c,d}_{6}, x^{i,c,d}_{3}, x^{i,c,d}_{4}, x^{i,c,d}_{5}, x^{i,c,d}_{2}, z^{i,c+1,f}_{1}, y^{i,c,d}_{6}),
		\\ A^{i,c,d,f}_{\text{free}}    & \coloneqq (x^{i,c,d}_{2}, \ldots, x^{i,c,d}_{7},x^{i,c,d}_{1}, z^{i,c+1,f}_{3}, y^{i,c,d}_{6}),
		\\ C^{i,c,d}_{\text{unknown}}   & \coloneqq (x^{i,c,d}_{8}, y^{i,c,d}_{0}, \ldots, y^{i,c,d}_{5}, x^{i,c,d}_{2}),
		\\ E^{i,c+1,f,d}_{\text{taken}} & \coloneqq (z_0^{i,c+1,f}, z_5^{i,c,d}, x_6^{i,c+1,f} ),
		\\ E^{i,c+1,f,d}_{\text{free}} & \coloneqq (z_2^{i,c+1,f}, z_4^{i,c,d}, x_8^{i,c+1,f} ).
	\end{align*}

	If \(i < t\), then
	for each \(3 \leq c \leq c_{\text{max}} - 1\)
  add the paths
	\begin{align*}
		C^{i,c,d}_{\text{taken}}   & \coloneqq
		(x^{i,c,d}_{8}, y^{i,c,d}_{0}, y^{i,c,d}_{1}, y^{i+1,c,d}_{2}, y^{i+1,c-1,d}_{2}, x^{i,c,d}_{6}),
		\\ C^{i,c,d,f}_{\text{free}} & \coloneqq
		(x^{i,c,d}_{8}, y^{i,c,d}_{0}, y^{i,c,d}_{1}, y^{i+1,c,f}_{2}, y^{i+1,c,f}_{3}, y^{i+1,c,f}_{4}, x^{i,c,d}_{2}).
	\end{align*}

	For each \(1 \leq i < j \leq t\) and
	each \(3 \leq c \leq c_{\text{max}}\)
	add the paths
	\begin{align*}
		D^{i,j,c}_{\text{choice}}  \coloneqq &
		(y^{i,c,0}_{3}, y^{j,c-1,d_{\text{max}}}_{3}, y^{j,c-1,d_{\text{max}} - 1}_{3}, y^{j,c-1,0}_3),
		\\ D^{j,i}_\text{back}  \coloneqq &
		(y^{j,2,0}_{3}
		, y^{i,k,d_{\text{max}}}_{3}
		, y^{i,k,d_{\text{max}} - 1}_{3}
		, \ldots
		, y^{i,k,0}_{3}
		, x^{i,0,0}_{6}
		)
		\\ & \cdot
		A_{\text{taken}}^{i,0,0,0}
		\cdot
		E_{\text{taken}}^{i,1,0,0}
		\\ & \cdot
		( x^{i,1,0}_{6}
		, x^{i,1,0}_{3}
		, x^{i,1,0}_{4}
		, x^{i,1,0}_{5}
		, x^{i,1,0}_{2}
		, s_i ).
	\end{align*}

	Add the vertices \(a_0^0, b^0\) and
	the arcs 
	\((a_0^0, z_0^{1,0,0})
	, (a_0^0, b^0)
	, (b^0, z_0^{1,0,0})\)
	to \(D_1\).
	Let \(P^0\) be the path \(a_0^0, z_0^{1,0,0}\).

	For each \(1 \leq i \leq t\)
	let \(P^i\) be the path obtained by concatenating
	\(P^{i,0}, P^{i,1}, \ldots, P^{i,c_{\text{max}}}\).
	Finally, let $P \coloneqq P^0 \cdot P^1
	\cdot (\End{P^1}, \Start{P^2})
	\cdot P^2
	\cdot (\End{P^2}, \Start{P^3})
	\cdot \ldots
	\cdot P^t$.
	We set $k_1 = t(k+2)(2k + 1) + t - k + 1$.
	This completes the construction of the reduced instance $(D_1, P, k_1)$.

	We now show that the reduction is correct.
	That is, we show that \((D,k)\) is a yes-instance of \SubsetLinkage/ if, and only if,
	\((D_1, P, k_1)\) is a yes-instance of \PathPA/.
	Let \(B' = \Set{(a_0^0, a_1^0)}
	\cup \Set{(x_3^{i,c,d}, x_4^{i,c,d}) \mid 1 \leq i \leq t, 0 \leq c \leq c_{\text{max}}, 0 \leq d \leq d_{\text{max}}}
	\cup \Set{(y_6^{i,c,d}, z_0^{i,c,d+1}) \mid 1 \leq i \leq t, 0 \leq c \leq c_{\text{max}}, 0 \leq d < d_{\text{max}}}\).

	\(\bm{(\Rightarrow):}\)
	Let \(T' \subseteq T\) be a solution to the original instance.
	Without loss of generality, we assume that \(\Abs{T'} = k\).

	We construct an ear-identifying sequence \(\bar{a}\) for \(P\) as follows.
	First, take the arc \(a_0^0, z_0^{1,0,0}\).
	For each \(G^{i,c}\), take the arcs
	\((x_3^{i,c,0}, x_4^{i,c,0}), (x_3^{i,c,1}, x_4^{i,c,1}), \ldots, (x_3^{i,c,d_{\text{max}}}, x_4^{i,c,d_{\text{max}}})\)
	if \(c \geq 0\) and
	the arc \((x_3^{i,c,0}, x_4^{i,c,0})\) otherwise.
	For each \((s_i, t_i) \in T \setminus T'\), take the arc
	\((a_0^i, a_1^i)\).
	The arcs are then sorted according to their order on \(P\).

	Assume towards a contradiction that a conflicting ear \(Q\) for \(P, \bar{a}\) exists.
	We show that \(Q\) contains as a subgraph
	a linkage \(\mathcal{L}\) in \(D\) connecting the terminal pairs of \(T'\).

	Since \(\bar{a}\) contains 
	\(a_0^0, a_1^0\), \(\Start{Q} = \Start{P}\).
	Let \(a'\) be the first arc along \(Q\) which is not in \(P\).
	We argue that \(a' = \Start{F_{\text{start}}^{i,c,d}}\) for some choice of \(i,c,d\).
	Assume that this is not the case.
	Since \(a'\) is the first such arc along \(Q\),
	it cannot jump backwards along \(P\), as then \(Q\) would not be an ear.
	Now consider the different possible choices for \(a'\).

	\begin{CaseDistinction}

		\Case \(a'\) is some \((x_1^{i,c,d}, z_3^{i,c+1,f})\).

		The next arc that \(Q\) must visit after \(a'\) is \((x_3^{i,c,d}, x_4^{i,c,d})\).
		Towards this end, \(Q\) visits \(x_2^{i,c,d}\) before \((x_3^{i,c,d}, x_4^{i,c,d})\).
		It cannot reach \(x_2^{i,c,d}\) through \(x_5^{i,c,d}\),
		as \(Q\) would then have to end on \(x_4^{i,c,d}\) before visiting the other arcs of \(\bar{a}\).
		Hence, \(Q\) must contain \((y_5^{i,c,d}, x_2^{i,c,d})\).
		From there, \(Q\) must go along \(P\) until reaching \(y_2^{i,c,d}\).
		However, now it is not possible for \(Q\) have arrived at \(y_5^{i,c,d}\) before visiting \(y_2^{i,c,d}\).
		Hence, this case cannot happen.

		\Case \(a'\) is some \((z_5^{i,c,d}, x_6^{i,c+1,f})\).

		The next arc that \(Q\) must visit after \(a'\) is \(x_3^{i,c,d}, x_4^{i,c,d}\).
		After \(a'\), \(Q\) cannot take the arc \((x_6^{i,c+1,f}, x_2^{i,c+1,f})\),
		as it would then have to visit \((x_3^{i,c+1,f}, x_4^{i,c+1,f})\)
		before \(x_3^{i,c,d}, x_4^{i,c,d}\).
		Hence, it must visit \(x_7^{i,c+1,f}\) and \(x_8^{i,c+1,f}\) afterwards,
		as it also cannot take \((x_7^{i,c+1,f}, x_1^{i,c+1,f})\) for the same reason.
		However, this implies that \(Q\) must reach
		\((x_3^{i,c+1,f}, x_4^{i,c+1,f})\) by going through
		\(x_2^{i,c+1,f}\) first
		and, hence, \(Q\) must end on \(x_5^{i,c+1,f}\) without visiting the other arcs of \(\bar{a}\), a contradiction.

		\Case \(a'\) is some \((y_1^{i,c,d}, y_2^{i+1,c,f})\).

		The next arc that \(Q\) must visit after \(a'\) is either \((y_6^{i,c,d}, z_0^{i,c,d+1})\) or
		\((y_6^{i,c,d}, z_0^{i,c+1,0})\) (if \(d = d_{\text{max}}\)).
		Without loss of generality, the next arc is \((y_6^{i,c,d}, z_0^{i,c,d+1})\).
		The other case follows analogously.

		After visiting \(y_2^{i+1,c,f}\),
		\(Q\) must go to some \(x_6^{i,c,g}\) where \(g > d\).
		We now consider how \(Q\) reaches the arc \((x_3^{i,c,g}, x_4^{i,c,g})\).
		\(Q\) cannot do so by taken \(A_{\text{taken}}^{i,c,g}\), as that would mean visiting
		\((x_3^{i,c,g}, x_4^{i,c,g})\) before 
		\((x_3^{i,c,d}, x_4^{i,c,d})\).
		Hence, \(Q\) must first visit \(x_2^{i,c,g}\) and then \((x_3^{i,c,g}, x_4^{i,c,g})\).
		However, in this case \(Q\) must end on \(x_5^{i,c,g}\) before visiting other arcs of \(\bar{a}\), a contradiction.

		\Case \(a'\) is some \((x_2^{i,1,0}, s_i)\) or some \((x_2^{i,c,d}, z_1^{i,c+1,f})\).

		If \(a' = (x_2^{i,1,0}, s_i)\), set \(c = 1\) and \(d = 0\).
		The next arc that \(Q\) must visit after \(a'\) is \((x_3^{i,c,d}, x_4^{i,c,d})\).
		Since \(Q\) already visited \(x_2^{i,c,d}\),
		it cannot contain the arc \((x_2^{i,c,d}, x_3^{i,c,d})\).
		Hence, it must contain \((x_6^{i,c,d}, x_3^{i,c,d})\).
		However, this implies that \(Q\) must end on \(x_5^{i,c,d}\),
		as it cannot cross \(x_6^{i,c,d}\) twice.
		Thus \(Q\) cannot visit all arcs of \(\bar{a}\), a contradiction.
	\end{CaseDistinction}
	
	Hence, \(a'\) must be \(\Start{F_{\text{start}}^i}\) for some \(1 \leq i \leq t\).
	Let \(i_1\) be such an \(i\).

	After taking \(F_{\text{start}}^{i_1}\), for each \(2 \leq c \leq k+1\)
	there are \(i,j\) such that
	\(Q\) visits
	\(D_{\text{choice}}^{i,j,c}\).
	Let \(i_{2}, i_{3}, \ldots, i_{k}\)
	be the sequence of the indices \(j\) visited for each \(c\) above.

	\(Q\) must then take \(D_{\text{back}}^{i_{k}, i_1}\).
	After arriving at \(s_{i_1}\), it must take a path in \(D\)
	to \(t_{i_1}\) in order to visit the next arc of \(\bar{a}\).

	Now \(Q\) must take some \(E_{\text{free}}^{i,2,0,d}\) to proceed.
	However, on every subpath \(P^i\), \(Q\) must take some
	\(C_{\text{free}}^{i,c,d,f}\) or some
	\(C_{\text{taken}}^{i,c,d}\) for the smallest value \(c\)
	such that
	\(Q\) took some \(C_{\text{taken}}^{j,c+1,g}\) before.
	This guarantees that, for each \(j\) such that \(i_j\) is defined,
	\(Q\) must take \(A_{taken}^{i_j,c+1,g}\).

	Whenever \(Q\) takes \(A_{\text{taken}}^{i,c,d,f}\),
	it must later take \(E_{\text{taken}}^{i,c+1,f,g}\), 
	since \(Q\) visited \(z_1^{i,c+1,f}\).

	Similarly, by taking some \(E_{\text{taken}}^{i,c,f,d}\),
	\(Q\) must follow with \(A_{\text{taken}}^{i,c,f,g}\)
	as this is the only way to visit \((x_3^{i,c,f}, x_4^{i,c,f})\)
	from \(x_6^{i,c,f}\).

	This process only stops when \(Q\) reaches some \(s_i\), at which point it must take
	an \(s_i\)-\(t_i\) path in \(D\).

	Hence, \(Q\) contains a linkage \(\mathcal{L}\) connecting all terminal pairs of \(T'\),
	a contradiction to the initial assumption that \(T'\) is a solution to the original instance.

	\textbf{\(\bm{(\Leftarrow):}\)}
	Let $\bar{a}$ be an ear-identifying sequence for $P$ of length at most $k_1$.
	We construct a set $T' \subseteq T$ of size $k$ as follows.
	Without loss of generality, we assume that $\bar{a}$ has length exactly $k_1$,
	as otherwise we can add arbitrary arcs of $P$ to $\bar{a}$.
	
	There is a sequence $\mathcal{P}$ of paths $P^i$ such that
	no path in $\mathcal{P}$ contains more than $(2k + 1)(k + 2) + k$ arcs of $\bar{a}$ and
	there are at least $k$ paths of $\mathcal{P}$ which
	hit exactly $(2k + 1)(k + 2)$ arcs of $\bar{a}$.

	Let $\mathcal{P}' \subseteq \mathcal{P}$ be a set of $k$ paths
	which hit exactly $(2k + 1)(k + 2)$ arcs of $\bar{a}$.
	Note that all arcs hit by the paths of $\mathcal{P}'$
	are in $A'$.
	We set $T' = \Set{(s_i, t_i) \mid P^i \in \mathcal{P}'}$.

	Assume towards a contradiction that there is a linkage \(\mathcal{L}\)
	in $D$ connecting the terminal pairs of $T'$.
	We construct a conflicting ear $Q$ for $(P, \bar{a})$ as follows.
	For each \(P_i \in \mathcal{P}\) and each \(0 \leq c \leq c_{\text{max}}\),
	choose some \(d^{i,c}\) such that
	the only vertices of \(Q^{i,c,d^{i,c}}\)
	which are incident to some arc of \(\bar{a}\)
	are \(x_3^{i,c,d^{i,c}}\) and
	\(x_4^{i,c,d^{i,c}}\).
	Because \(d_{\text{max}} \geq 2k + 1\) and
	\(P_i\) hits at most \(k\) arcs of \(\bar{a}\) which are not in \(A'\),
	such a number \(d^{i,c}\) exists.

	Let \(\Set{P^{i_{1}}, P^{i_{2}}, \ldots, P^{i_{k}}} \coloneqq \mathcal{P}'\)
	be the paths of \(\mathcal{P}'\) ordered according to their occurrence along \(P\).
	We partition \(Q\) into subpaths and construct each subpath as follows.
	Let $Q_0$ be the subpath of \(P\)
	from \(\Start{P}\) to $w^{i_1}_1$.

	Construct \(Q_1\) as follows.
	Take \(F_{\text{start}}^{i_1}\).
	Iterate for each \(2 \leq j \leq k\).
	On step \(j\), take \(D_{\text{choice}}^{i_{j-1}, i_j, k - j + 2}\).
	After completing the iteration above, we arrive at \(y_3^{i_{k}, 2, 0}\).
	Now take \(D_{\text{back}}^{i_{k}, i_1}\).
	Take the \(s_{i_1}\)-\(t_{i_1}\) path in \(\mathcal{L}\) and
	then go to \(a_1^{i_1}\) and \(z_0^{i_1,2,0}\).

	Now for each \(i_1 \leq i \leq i_{\text{max}}\) and
	for each \(0 \leq c \leq c_{\text{max}}\), construct
	\(Q^{i,c}\) as follows.
	Start by taking the subpath of \(P\)
	from \(z_0^{i,c,0}\) to
	\(\Start{E_{\text{free}}^{i,c,d^{i,c}}}\).
	For the remainder of \(Q^{i,c}\),
	we distinguish between two cases.

	\begin{CaseDistinction}
		\Case \(P_{i+1} \in \mathcal{P}'\).

		Choose \(c'\) so that \(i_{k - c' + 2} = i + 1\) and
		let \(c'' = k - c' + 2\).
		That is,
		\(Q\) contains the path \(D_{\text{choice}}^{i_{c'' - 1}, i_{c''}, c'}\).

		If \(c' < c\), complete \(Q^{i,c}\) by taking the path
		\(E_{\text{free}}^{i,c,d^{i,c}} \cdot C_{\text{unknown}}^{i,c,d^{i,c}, d^{i+1,c}} \cdot A_{\text{free}}^{i,c,d^{i,c}, d^{i,c+1}}\).

		If \(c' = c\), complete \(Q^{i,c}\) by taking the path
		\(E_{\text{free}}^{i,c,d^{i,c}} \cdot C_{\text{taken}}^{i,c,d^{i,c}, d^{i+1,c}} \cdot A_{\text{taken}}^{i,c,d^{i,c}, d^{i,c+1}}\).

		If \(c' > c\), complete \(Q^{i,c}\) by taking the path
		\(E_{\text{taken}}^{i,c,d^{i,c}} \cdot A_{\text{taken}}^{i,c,d^{i,c}, d^{i,c+1}}\).

		\Case \(P_{i+1} \in \mathcal{P} \setminus \mathcal{P}'\).

		Choose \(c'\) to be the largest index such that \(i_{c'} < i\).
		If \(i = i_1\), then set \(c' \coloneqq k + 1\) instead.

		If \(c' < c\), complete \(Q^{i,c}\) by taking the path
		\(E_{\text{free}}^{i,c,d^{i,c}} \cdot C_{\text{unknown}}^{i,c,d^{i,c}, d^{i+1,c}} \cdot A_{\text{free}}^{i,c,d^{i,c}, d^{i,c+1}}\).

		If \(c' \geq c\), complete \(Q^{i,c}\) by taking the path
		\(E_{\text{free}}^{i,c,d^{i,c}} \cdot C_{\text{free}}^{i,c,d^{i,c}, d^{i+1,c}} \cdot A_{\text{free}}^{i,c,d^{i,c}, d^{i,c+1}}\).

	\end{CaseDistinction}

	Now set \(Q \coloneqq Q_0 \cdot Q_1 \cdot \Pi_{i=1}^t\Pi_{c=0}^{c_{\text{max}}}Q^{i,c}\).
	We argue that \(Q\) visits all arcs of \(\bar{a}\) in order.

	The path \(Q_0\) covers all arcs of \(\bar{a}\)
	which come before \(P_{i_1}\).
	The path \(Q_1\) covers \((x_3^{i_1,0,0}, x_4^{i_1,0,0})\).
	For \(i \geq i_1\), the remaning arcs are covered as follows.

	For each \(Q^{i,c}\),
	we take either some \(A_{\text{free}}^{i,c,d}\)
	or some \(A_{\text{taken}}^{i,c,d}\).
	Hence, we visit the arc \((x_3^{i,c,d}, x_4^{i,c,d})\).
	Further, we visit the other arcs \((x_3^{i,c,f}, x_4^{i,c,f})\),
	where \(f \neq d\), by following along \(P\).
	Additionally, if \((s_i, t_i) \in T \setminus T'\), we visit the arc \((a_0^i, a_1^i)\).

	Since we choose \(d^{i,c}\) such that
	\(G^{i,c,d^{i,c}}\) does not contain arcs of \(\bar{a}\)
	beyond those in \(A'\),
	we visit all arcs of \(\bar{a}\) in order.
	As \(Q\) is clearly distinct from \(P\),
	it is a conflicting ear for \((P, \bar{a})\), a contradiction to the assumption that \(\bar{a}\)
	is an ear-identifying sequence for \(P\).
	Hence, the linkage \(\mathcal{L}\) cannot exist, and
	so \(T'\) is a solution for the original \SubsetLinkage/ instance.

										\end{proof}

\section{Remarks}
\label{sec:ear-anonymity:remarks}

Using \cref{state:ear-anonymity-minor-closed,state:identifying-sequence-implies-hitting-set} and the \emph{directed grid theorem} below, it is possible to draw a connection between directed treewidth and ear anonymity.

\begin{theorem}[\cite{kawarabayashi2015directed}]
	There is a computable function $f : \Naturals \rightarrow \Naturals$ such that
	every digraph $D$ with $\DTreewidth(D) \geq f(k)$ contains a cylindrical grid of order $k$ as a butterfly minor,
	where $\DTreewidth(D)$ is the directed tree-width of $D$.
\end{theorem}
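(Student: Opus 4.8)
The plan is to sketch the proof of Kawarabayashi and Kreutzer, which transports the strategy behind the Robertson--Seymour excluded-grid theorem to digraphs. The first reduction exploits the duality between directed tree-width and \emph{brambles} (equivalently \emph{havens}): there is a function $g$ such that $\DTreewidth(D) \geq g(k)$ forces a bramble of order at least $k$, and from a bramble of large order one extracts a large \emph{well-linked set}, i.e.\ a vertex set $A$ such that for all $A_1, A_2 \subseteq A$ with $|A_1| = |A_2|$ there are $|A_1|$ pairwise vertex-disjoint $A_1$-$A_2$ paths in $D$. Hence it suffices to show that a well-linked set of size $h(k)$ forces a cylindrical grid of order $k$ as a butterfly minor, and it remains only to track that the composed functions are computable.

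Second, from a large well-linked set one constructs and then refines a large \emph{linkage} into a structured \emph{path system}. Concretely, one partitions $A$ into two halves, routes a large disjoint path family between them, and then iteratively reroutes this family while applying a Ramsey-type pigeonhole argument to the way the paths cross one another, isolating a subfamily whose pairwise intersection pattern is ``grid-like''; this is the directed analogue of locating a large wall inside a densely linked region. A second disjoint path family, obtained again from well-linkedness and chosen ``transversal'' to the first, is then interleaved with it to produce a two-dimensional web of paths.

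Third, one reads the cylindrical grid off this web. The concentric directed cycles are assembled by closing up the ``horizontal'' segments of the path system, and the connecting ``vertical'' paths are supplied by the transversal family, arranged so that their orientations alternate as required by the definition of a cylindrical grid. Vertices of in- or out-degree one along these segments are removed by butterfly contractions (which is exactly where we use that we only need a butterfly minor, not a topological minor), and the remaining superfluous arcs are deleted, leaving a subdivision of the order-$k$ cylindrical grid that contracts down to the grid itself.

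The main obstacle --- and the reason the directed theorem is substantially harder than its undirected counterpart --- is controlling orientations when closing paths into directed cycles. In the undirected setting a large grid minor follows from a large wall, which sits inside any sufficiently tangled (essentially planar-like) region; in a digraph the two halves of a well-linked set may be joined only by paths that traverse ``inward'' in some places and ``outward'' in others, so the cleaning and rerouting lemmas must be engineered so that the residual structure has the precise alternating in/out behaviour the cylindrical grid demands. In Kawarabayashi--Kreutzer this is achieved through a long chain of lemmas on ``segments'' and bramble-based routing; we give only this outline and refer to \cite{kawarabayashi2015directed} for the complete argument.
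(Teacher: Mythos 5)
This theorem is cited from \cite{kawarabayashi2015directed} and used as a black box in the paper; the paper gives no proof of it, so there is no internal argument to compare your sketch against. Your outline does track the broad contours of the Kawarabayashi--Kreutzer argument (duality between directed tree-width and brambles/havens, extraction of a large well-linked set, refinement of linkages into a structured path system via Ramsey-type cleaning, and assembly of the cylindrical grid using butterfly contractions to eliminate low-degree vertices), and you correctly defer the substantial technical content to the original source. But be aware that for the purposes of this paper the correct ``proof'' is simply the citation; a proof outline, however faithful, is not what is expected here, and at this level of granularity your sketch cannot be independently checked for correctness.
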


It is easy to verify that a cylindrical grid of order $k$ has ear anonymity at least $2k$.
Take any cycle $C$ on the cylindrical grid which is neither the outermost nor the innermost cycle.
Then, a subpath $Q_{i}$ of $C$ from row $i$ to row $i+1 \operatorname{mod} 2k$ is a blocking subpath of a bypass for $C$.
Since $C$ has at least $2k$ internally disjoint blocking subpaths, by \cref{state:identifying-sequence-implies-hitting-set} we have $k \leq \Anon[D](C) \leq \Anon(D)$.
Hence, we obtain the following inequality.

\begin{observation}
	\label{state:ear-anonymity-and-dtw}
	There is a computable function $f : \N \to \N$ such that
	$\DTreewidth(D) \leq f(\Anon(D))$.
\end{observation}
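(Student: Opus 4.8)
The plan is to derive the bound by contraposition from the \emph{directed grid theorem}, using two facts already available in the excerpt: ear anonymity is butterfly-minor-monotone (\cref{state:ear-anonymity-minor-closed}), and a cylindrical grid of large order has large ear anonymity (the verification sketched in the discussion preceding the statement). So the first step is to isolate the quantitative fact in the form: for every $k$, a cylindrical grid $\mathcal{G}_k$ of order $k$ satisfies $\Anon(\mathcal{G}_k) \ge k$. To justify this I would pick a directed cycle $C$ of $\mathcal{G}_k$ that is neither innermost nor outermost. Since $\mathcal{G}_k$ is strongly connected it has no sources or sinks, so $C$ admits no deviations, and a cycle is never a proper subgraph of another ear, so $C$ is a maximal ear and therefore counts towards $\Anon(\mathcal{G}_k)$. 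For each pair of consecutive columns of the grid meeting $C$, the subpath of $C$ between their crossing points is the blocking subpath of a bypass of $C$ (leave $C$ along one column, cross a single arc of a neighbouring cycle, return along the next column). This gives $2k$ pairwise arc-disjoint blocking subpaths, hence at least $2k$ pairwise disjoint intervals in any blocking interval set for $C$; by \cref{state:identifying-sequence-implies-hitting-set} every identifying sequence for $C$ is a hitting set of that interval set, so has length at least $2k \ge k$, whence $\Anon(\mathcal{G}_k) \ge \Anon[\mathcal{G}_k](C) \ge k$.

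Next I would let $g\colon\N\to\N$ be the function supplied by the directed grid theorem, so that $\DTreewidth(D) \ge g(k)$ forces $D$ to contain $\mathcal{G}_k$ as a butterfly minor, and set $f(n) \coloneqq g(n+1)$, which is computable because $g$ is. To prove $\DTreewidth(D) \le f(\Anon(D))$, suppose instead $\DTreewidth(D) \ge g(\Anon(D)+1)$. Then $D$ contains $\mathcal{G}_{\Anon(D)+1}$ as a butterfly minor, so by \cref{state:ear-anonymity-minor-closed} and the first step, $\Anon(D) \ge \Anon(\mathcal{G}_{\Anon(D)+1}) \ge \Anon(D)+1$, a contradiction. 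Hence $\DTreewidth(D) < g(\Anon(D)+1) = f(\Anon(D))$, and in particular $\DTreewidth(D) \le f(\Anon(D))$, as claimed.

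There is no serious obstacle here: essentially everything has been prepared by the earlier lemmas and by the directed grid theorem, and only the contrapositive bookkeeping and the choice of $f$ remain. The one point that deserves care is the first step — checking that the $2k$ column-segments of $C$ really are blocking subpaths of genuine bypasses (each a subpath meeting $C$ only at its endpoints, none degenerate), that the corresponding intervals indeed appear (up to $\preceq$-minimality) in the blocking interval set, and that \cref{state:identifying-sequence-implies-hitting-set}, though stated for maximal paths, applies verbatim to the cycle $C$ since its proof argues only through the structure of bypasses and deviations. Once that is pinned down the rest is routine.
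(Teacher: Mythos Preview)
Your proposal is correct and follows essentially the same approach as the paper: both use the directed grid theorem together with butterfly-minor monotonicity (\cref{state:ear-anonymity-minor-closed}) and the observation that a non-extremal cycle of the cylindrical grid of order $k$ has $2k$ arc-disjoint blocking subpaths, invoking \cref{state:identifying-sequence-implies-hitting-set} to lower-bound its ear anonymity. Your version is in fact more careful than the paper's sketch, in that you explicitly justify why the cycle is a maximal ear and flag that \cref{state:identifying-sequence-implies-hitting-set} is stated only for paths but extends to cycles; the paper glosses over both points.
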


\Cref{state:ear-anonymity-and-dtw} naturally raises the following question.

\begin{question}
	\label{question:ear-anonymity-and-dtw}
	What is the smallest function $f$ such that $\DTreewidth(D) \leq f(\Anon(D))$ holds for all digraphs $D$?
\end{question}

On the other hand, directed acyclic graphs have directed treewidth zero but can have arbitrarily high ear anonymity.
For example, the digraph $H_1$ used in the reduction in the proof of \cref{thm:path anonymity is NP-hard} (see \cref{fig:np-hard Hi}) is acyclic and $\Anon(H_1) = 2n$.
Thus, there is no function $f : \N \to \N$ for which
$\Anon(D) \leq f(\DTreewidth(D))$ holds
for all digraphs $D$.

Since \DPA/ is in \P/ if the input digraph is acyclic, it is natural to ask what is the parameterized complexity of \DPA/ when parameterized by directed treewidth.

\begin{question}
	\label{question:ear-ear-anonymity-XP-wrt-to-dtw}
	Can \DPA/ be solved in $\Bo(n^{f(\DTreewidth(D))})$ time, where $\DTreewidth(D)$ is the directed treewidth of the input digraph $D$?	
\end{question}

It is still unclear where exactly in the polynomial hierarchy \DPA/ lies.
On the one hand, \PathPA/ looks like a subproblem of \DPA/, yet if the digraph has very high ear anonymity, then there are many ``correct'' guesses for
some ear of high anonymity, which could make the problem easier, and not harder, than \PathPA/.

\begin{question}
	\label{question:ear-anonymity-in-NP}
	Is \PAlong/ in \NP/?
\end{question}

Finally, one could also ask if a phenomenon similar to the directed grid theorem also occurs with ear anonymity.
That is, while a path with high ear anonymity witnesses that an acyclic digraph has high ear anonymity, is there also some witness which gives us an upper bound on the ear anonymity of the same digraph?

\begin{question}
	\label{question:witness-for-low-ear-anonymity}
	Is there some ``small'' witness $W$ and some function $f$ which allow us to efficiently verify that $\Anon(D) \leq f(W)$?
\end{question}

\begin{question}
	Can we solve \kLinkage/ in $f(k)n^{g(\Anon(d))}$-time? (In general?) (On DAGs?)
\end{question}

\bibliographystyle{plainnat}
\bibliography{references}
\end{document}